\numberwithin{equation}{section}
\newcommand{\be}{\begin{equation}\begin{gathered}} \newcommand{\ee}{\end{gathered}\end{equation}}
\newcommand{\la}{\langle} \newcommand{\ra}{\rangle}
\newcommand{\al}{\alpha}
\newcommand{\bt}{{\boldsymbol{\theta}}}
\newcommand{\bs}{\boldsymbol}
\newcommand{\hwk}{\hat W^{(k)}}
\newcommand{\cN}{{\mathcal{N}}}
\newcommand{\fs}{{\mathfrak{S}}}
 \newtheorem {hypo}{\bf\hspace{-\parindent}Hypothesis}[section]
 \newtheorem {ass}[hypo]{Assumption}
 \newtheorem {prop}[hypo]{Proposition}
 \newtheorem {lem}[hypo]{Lemma}
 \newtheorem {theo}[hypo]{Theorem}
 \newtheorem {defn}[hypo]{Definition}
 \theoremstyle{remark}
  \newcommand\lb{\left(}
  \newcommand\rb{\right)} 
   \newcommand\Cb{\mathbb{C}}
 \newcommand{\address}[1] {%
 \begingroup\begin{center} #1 \end{center}\endgroup}
 \renewcommand{\title}[1] {%
  \begingroup\begin{center}\vspace{0.0cm}\LARGE
  \addtolength{\baselineskip}{1mm} #1 \end{center}\endgroup}
  \renewcommand{\author}[1] {%
  \begingroup\begin{center}\vspace{0.2cm} \large #1 \vspace{0.2cm}
  \end{center}\endgroup}
\begin{document}

\title{Higher rank isomonodromic deformations and $W$-algebras}
\date{}
\author{ P.~Gavrylenko$^{a,e,f}$\footnote{pasha145@gmail.com}, N.~Iorgov$^{a,c,d}$\footnote{iorgov@bitp.kiev.ua}, O.~Lisovyy$^b$\footnote{lisovyi@lmpt.univ-tours.fr}} 
\address{ \it$^a$ Bogolyubov Institute for Theoretical Physics,  03143 Kyiv, Ukraine
 \vspace{0.2cm}\\
 $^b$ Institut Denis-Poisson, Universit\'e de Tours, Universit\'e d'Orl\'eans, CNRS, \\ Parc de Grandmont,
 37200 Tours, France \vspace{0.2cm}\\
    $^c$ Kyiv Academic University,  36 Vernadsky blvd., 03142 Kyiv, Ukraine \vspace{0.2cm}\\
 $^d$ Max-Planck-Institut f\"ur Mathematik, 53111 Bonn, Germany \vspace{0.2cm}\\
 $^e$ Center for Advanced Studies, Skolkovo Institute of Science and Technology, \\ 143026 Moscow, Russia
  \vspace{0.2cm}\\
 $^f$ National Research University Higher School of Economics, Department of
 Mathematics and International Laboratory of Representation Theory and Mathematical
 Physics, \\ 119048 Moscow, Russia\vspace{0.1cm}}

\begin{abstract}
We construct the general solution of a class of Fuchsian systems of rank $N$ as well as the associated isomonodromic tau functions in terms of semi-degenerate conformal blocks of $W_N$-algebra with central charge $c=N-1$. The simplest example is given by the tau function of the Fuji-Suzuki-Tsuda system, expressed as a Fourier transform of the 4-point conformal block with respect to intermediate weight. Along the way, we generalize the result of Bowcock and Watts on the minimal set of matrix elements of vertex operators of the $W_N$-algebra for generic central charge and prove several properties of semi-degenerate vertex operators and conformal blocks for $c=N-1$.
\end{abstract}

\section{Introduction}
The theory of monodromy preserving deformations has recently gained new insights from its connections to the two-dimensional conformal field theory. The most relevant for the present work is the solution of the inverse monodromy problem for rank 2 Fuchsian systems and  associated isomonodromic tau function  with Fourier transforms of $c=1$ Virasoro conformal blocks \cite{ILTe}. The simplest instance of this correspondence expresses the tau function of the Painlev\'e~VI equation in terms of 4-point conformal blocks  \cite{GIL1}; see also \cite{BSh} for a different proof of this statement.

The last relation has been extended to a number of confluent limits including Painlev\'e~V, IV and III \cite{GIL2,Nagoya}, as well as to the $q$-difference \cite{BSh2,JNS} and non-commutative \cite{BGM2} setting. On the other hand, the case of Fuchsian systems of higher rank $N>2$ is as yet only superficially explored. A natural guess  is that their fundamental solutions and tau functions are related to conformal blocks of $W_N$-algebras \cite{Gavrylenko} with integer central charge $c=N-1$ \cite{GIL1}. However, trying to make this claim well-defined and constructive one faces a number of obstacles.

The $W_N=W(\mathfrak{sl}_N)$ algebras appeared in \cite{FLuk2,FZ,FLuk1,ZamW} as extensions of the Virasoro algebra including chiral currents with higher 
spins. 
One of the problems appearing in their investigation is that, for general $N$, they do not have known explicit and convenient definition in terms of generators and relations.
An additional complication is that  $W_N$-algebras are not Lie algebras for $N>2$. Even though in the $W_3$ case it is possible to proceed in the study of the algebra and its representations (see for example \cite{W3,BS}), for $W_N$-algebras with $N>3$ the direct approach becomes quite involved.
The first definition of $W_N$-algebras for $N>3$ was given in terms of bosonic fields by means of the quantum Miura transformation 
\cite{FLuk2,FLuk1}.
Subsequently, new approaches appeared, the most productive of them being the quantum Drinfeld-Sokolov reduction \cite{FF}. It gives rise to the most general definition of $W$-algebras associating them 
with pairs formed by a simple Lie algebra and a nilpotent element therein \cite{KRW}. The recent progress in the representation theory of $W$-algebras \cite{Arakawa,deSoleKac} was also made in this framework. The present paper uses a bosonic realization of $W_N$-algebras coming from the quantum Miura transformation, but we also need some results from \cite{Arakawa}.

 A more significant problem is that, in contrast to the Virasoro  ($N=2$) case, the local $W$-invariance does not fix vertex operators uniquely: their descendant matrix elements cannot be reduced to the primary one.  This produces an infinite number of free parameters (even in the 3-point conformal blocks!) which have to be determined or constrained by other means such as crossing symmetry. For $N=3$, a characterization of the minimal set of independent  matrix elements was given in~\cite{BW}.  
 One may also adopt the point of view that such matrix elements can be fixed arbitrarily. However, the resulting vertex operators may be plagued by divergencies in the multi-point conformal blocks.
 Even their compositions with the fundamental degenerate vertex operator\footnote{Vertex operator corresponding to the highest weight vector of an irreducible representation of $W_N$-algebras associated to one of the $N$-dimensional fundamental representations of $\mathfrak{sl}_N$.} may give rise to a divergent series. A satisfactory definition of the vertex operator should produce a convergent expansion for the appropriate 4-point conformal block with consistent global analytic (monodromy) properties with respect to the position of the degenerate field. 
 
 In the case $c=N-1$, where the CFT/isomonodromy correspondence is expected to hold true, a definition of the general vertex operator for the $W_N$-algebra was proposed in \cite{Gavrylenko} by employing the isomonodromic tau functions and the corresponding 3-point Fuchsian systems. The elements of the basis of vertex operators are labeled in this approach by a finite number of moduli parameterizing the monodromy data \cite{GM16,Teschner17,GL16,CGT}. For generic central charge, analogous definition is not available so far, but it is expected to be consistent with an action of the algebra of Verlinde loop operators on the space of 3-point conformal blocks, see recent work \cite{CPT17}.

 There is a special class of vertex operators of $W_N$-algebras relevant to the construction in this paper. It consists of the so-called semi-degenerate vertex operators corresponding to 3-point conformal blocks with one field having highest weight labeled by $a \bs h_1$, where $a\in \mathbb{C}$ and $ \bs h_1$ is the highest weight of the first fundamental representation of $\mathfrak{sl}_N$. It is expected that arbitrary 3-point conformal blocks involving one semi-degenerate field (primary or its descendant) are uniquely determined by the primary ones as in the Virasoro case (for $N=3$, this statement \cite{KMS} can be deduced from the results of \cite{BW} quoted above). The multi-point conformal blocks of $W_N$-algebras which appear in the AGT-type relation \cite{AGT,Wyl,FL3,MM} to Nekrasov instanton partition functions \cite{Nekr} are precisely those obtained by compositions of semi-degenerate vertex operators.
 
 On the  differential equations side, the tau function of a Fuchsian system with $n$ regular singular points can be written \cite{GL16,CGL} as a Fredholm determinant whose integral kernel is expressed in terms of fundamental solutions of $3$-point Fuchsian systems which arise upon decomposition of the $n$-punctured Riemann sphere into pairs of pants. While for $N=2$ these auxiliary systems can be explicitly solved in terms of Gauss hypergeometric functions, the construction of $3$-point solutions (series or integral representations, connection formulas, etc) for $N\ge 3$ is a major open problem. 
 
  An important exception is given by the 3-point Fuchsian systems with one of the singular points having special spectral type $\lb N-1,1\rb$. Their solutions can be expressed in terms of the Clausen-Thomae hypergeometric functions $_NF_{N-1}$. For this reason, the constructions of \cite{GL16,CGL} can be made completely explicit for Fuchsian systems with 2 generic singular points and the remaining $n-2$ singularities of the above special type. In this paper, such systems are called \textit{semi-degenerate};
 they are the closest higher rank relatives of the $N=2$ ones.
   The specialization of the Schlesinger isomonodromic deformation equations to semi-degenerate $4$-point Fuchsian  system (i.e.  higher rank semi-degenerate analog of Painlev\'e~VI) is known as the Fuji-Suzuki-Tsuda system. It was discovered in \cite{FujiSuzuki,Suzuki} as a similarity reduction of Drinfeld-Sokolov hierarchies, and related to deformations of Fuchsian equations in
 \cite{Tsuda}, where semi-degenerate Fuchsian systems with arbitrary $n\ge 4$ also appear.
 
 The aim of this paper is to extend the results of \cite{ILTe} to higher rank case by relating fundamental solutions and tau functions of semi-degenerate Fuchsian systems to Fourier transformed semi-degenerate conformal blocks of the $W_N$-algebras with $c=N-1$. In particular, we claim that, under suitable normalization of vertex operators,
 \be\label{claim}
 \tau\lb \bs z\rb = 
 \sum_{\boldsymbol{w}_1,\ldots,\boldsymbol{w}_{n-3}\in \mathfrak R} 
  e^{(\boldsymbol{\beta}_1,\boldsymbol{w}_1)+\ldots+(\boldsymbol{\beta}_{n-3},\boldsymbol{w}_{n-3})}{\hspace{-0.5cm}
  \begin{tikzpicture}[baseline,yshift=-0.5cm,scale=1.3]
  \draw [thick] (0.2,0) -- (2.7,0);
  \draw [thick,dashed] (1,0) -- (1,0.8);
  \draw [thick,dashed] (2.2,0) -- (2.2,0.8);
  \draw (3,0) node {$\ldots$};  
  \draw [thick] (3.3,0) -- (5.6,0);
  \draw [thick,dashed] (3.8,0) -- (3.8,0.8);
  \draw [thick,dashed] (4.8,0) -- (4.8,0.8);
  \draw (0.2,0) node[left] {$\infty$};
  \draw (5.6,0) node[right] {$0$};
  \draw (0.6,0) node[below] {\scriptsize $-\bs{\theta}_{n-1}$};
  \draw (5.3,0) node[below] {\scriptsize $\bs{\theta}_{0}$};
  \draw (1.8,0) node[below] {\scriptsize $\bs{\sigma}_{n-3}+\bs{w}_{n-3}$};
  \draw (3,0) node {$\ldots$};
  \draw (4.35,0) node[below] {\scriptsize $\bs{\sigma}_{1}+\bs{w}_{1}$};
  \draw (1,0.8) node[above] {$z_{n-2}$};   
  \draw (2.2,0.8) node[above] {$z_{n-3}$};  
  \draw (3.8,0.8) node[above] {$z_{2}$}; 
  \draw (4.8,0.8) node[above] {$z_{1}$}; 
  \draw (1,0.4) node[right] {\scriptsize $a_{n-2}$};    
  \draw (2.2,0.4) node[right] {\scriptsize $a_{n-3}$};  
  \draw (3.8,0.4) node[right] {\scriptsize $a_{2}$};   
  \draw (4.8,0.4) node[right] {\scriptsize $a_{1}$};        
  \end{tikzpicture}}
 \ee
 where $\tau\lb \boldsymbol{z}\rb$ is the isomonodromic tau function  depending on the positions $\bs z=\left\{z_1,\ldots,z_{n-2}\right\}$ of the special punctures, the generic punctures are located at $z_0=0$ and $z_{n-1}=\infty$, and the trivalent graph on the right denotes appropriate $n$-point conformal block. The parameters $\bs\theta_0,-\bs \theta_{n-1}\in\mathbb C^{N}$, $a_1,\ldots,a_{n-2}\in\mathbb C$ assigned to external edges describe local monodromy exponents of the Fuchsian system. The labels $\bs\sigma_1,\ldots,\bs\sigma_{n-3}\in\mathbb C^{N}$ of the internal edges as well as Fourier momenta $\bs\beta_1,\ldots,\bs\beta_{n-3}\in\mathbb C^{N}$ are explicitly related to the remaining moduli of semi-degenerate monodromy; the components of all vectors in $\mathbb C^{N}$  sum up to zero. The summation in \eqref{claim} is carried over the root lattice $\mathfrak R$ of $\mathfrak{sl}_N$. Analogous statement for the fundamental solution involves extra degenerate insertions.
 
 In the way of establishing the correspondence, we proved several statements about conformal blocks of the $W_N$-algebras which, to the best of our knowledge, remained so far at the level of folklore for $N>3$. For arbitrary central charge, we developed a reduction procedure of matrix elements of vertex operators to a minimal set, thereby generalizing the results of \cite{BW}.
 For $c=N-1$, we proved that the descendant 3-point functions involving semi-degenerate field are uniquely expressed in terms of the 3-point function of primaries.  
 For the fundamental degenerate field, we found restrictions (fusion rules) to be satisfied to allow for non-vanishing 3-point functions.
 They are then used to prove the well-known hypergeometric formulas \cite{FL1} for the 4-point conformal blocks with one semi-degenerate and one degenerate field using the rigidity property of the associated Fuchsian system.
 
    We end this introduction by mentioning a few more relevant papers. 
    A survey of recent results in the representation theory of $W$-algebras may be found in \cite{ArakawaLec}. The properties of semi-degenerate $W_N$-conformal blocks describing correlation functions of the Toda CFT and their gauge theory counterparts have been studied, for instance, in \cite{GF,Bul}. An interesting direction which is in a sense close to ours is the construction of integral representation of the 4-point conformal blocks of $W_3$-algebra involving one semi-degenerate field of higher level (whose matrix elements cannot be reduced to primary ones only) and one fundamental degenerate field \cite{BHS}. This construction is based on the middle convolution from the Katz theory of rigid systems. 
Connections between Fuchsian systems and $W$-algebras were also studied in \cite{BER}, with further links to topological recursion suggested in \cite{BE}.
 
 The paper is organized as follows. In Section \ref{sec2}, we introduce semi-degenerate Fuchsian systems and provide an explicit parameterization of their monodromy by suitable coordinates (Proposition \ref{propparam}). Section~\ref{sec3} is devoted to $W_N$-algebras and their representations. The minimal set of matrix elements is described by Theorem~\ref{thmgenME}, after which we proceed to the proof of uniqueness of semi-degenerate vertex operator  (Proposition~\ref{propsdvo}) and fusion rules for completely degenerate fields. Section~\ref{sec4} computes the operator-valued monodromy of semi-degenerate conformal blocks with respect to positions of additional degenerate fields. Diagonalizing this monodromy by Fourier transform, in Section~\ref{sec5} we obtain the fundamental solution (Theorems~\ref{solinfty} and~\ref{soly0}) and the tau function (Proposition~\ref{proptaucb}) of the semi-degenerate Fuchsian systems in terms of $W_N$-algebra conformal blocks. Some technical results are relegated to appendices. \vspace{0.2cm}

{\small\textbf{Acknowledgements}. We would like to thank M. Bershtein, A. Marshakov, R. Santachiara and G. Watts for useful discussions and comments. The present work was supported by the CNRS/PICS project ``Isomonodromic deformations and conformal field theory''. The work of P.G. was partially supported the Russian Academic Excellence Project `5-100' and by the RSF grant No. 16-11-10160. In particular, odd-numbered formulas of Section 3 have been obtained using support of Russian Science Foundation. P.G. is a Young Russian Mathematics award winner and would like to thank its sponsors and jury. N.I. thanks Max Planck Institute for Mathematics (Bonn), where a part of this research was done, for hospitality and
excellent working conditions.}

\section{Semi-degenerate Fuchsian systems\label{sec2}}

\subsection{Generalities}

We are interested in the analysis of Fuchsian systems of rank $N$ having $n$ regular singular points 
$\bs z:=\left\{z_0,\ldots,z_{n-2},z_{n-1}\equiv\infty\right\}$ on the Riemann sphere $\mathbb{CP}^1$:
\be\label{FS}
\frac{d\Phi\lb y\rb}{dy} = \Phi\lb y\rb A\lb y\rb,\qquad A\lb y\rb=\sum_{k=0}^{n-2} \frac{A_k}{y-z_k}\,.
\ee
Here the residues $A_0,\ldots,A_{n-2}\in\operatorname{Mat}_{N\times N}\left(\mathbb C\right)$ and
$\Phi\lb y\rb$ is the fundamental $N\times N$ matrix solution which may  be normalized as
$\Phi\lb y_0\rb = \mathbb{I}$, with $y_0\in \mathbb{CP}^1\backslash \bs z$. It will be assumed that the matrices $A_0,\ldots,A_{n-2}$ and $A_{n-1}:=-\sum_{k=0}^{n-2} A_k$ are diagonalizable and non-resonant, i.e. the pairwise differences of the eigenvalues of each $A_k$ are not non-zero integers.  

The solution $\Phi\lb y\rb$ is a multivalued function on $\mathbb{CP}^1\backslash \bs z$. Its monodromy realizes a representation of the fundamental group $\pi_1\lb \mathbb{CP}^1\backslash \bs  z, y_0\rb$ in $GL\lb N,\mathbb{C}\rb$. 
This group  is generated by the paths $\xi_0,\ldots,\xi_{n-1}$ around the points 
$z_0,\ldots,z_{n-1}$ on $\mathbb{CP}^1$ indicated in Fig.~\ref{fig1}, which satisfy one relation $\xi_0\cdots \xi_{n-1}=1$. In what follows, their orientations will be referred to as positive.
Denoting by $M_k$ the monodromy of $\Phi\lb y\rb$ along the loop $\xi_k$, we similarly have $M_0 \cdots M_{n-1}=\mathbb{I}$. 

\begin{figure}[h!]
              \centering
\begin{tikzpicture}[baseline,yshift=-0.5cm,scale=0.7]
\draw (-1,0) .. controls +(0,-1) and +(0,-1) .. (1,0) .. controls +(0,1) and +(-3,2.5) .. (7,0);
\draw[->,>=latex] (7,0) .. controls +(-3,3) and +(0,1.5) .. (-1,0) ;
\draw (3,0) .. controls +(0,-1) and +(0,-1) .. (5,0) .. controls +(0,0.5) and +(-2,1) .. (7,0);
\draw[->,>=latex] (7,0) .. controls +(-3,2) and +(0,0.7) .. (3,0) ; 
\draw (7,0) .. controls +(-2,3) and +(0,3) .. (-3,0); 
\draw[->,>=latex] (7,0) .. controls +(-2,-3) and +(0,-3) .. (-3,0); 
\draw[fill] (0,0) circle (0.07);
\draw[fill] (4,0) circle (0.07);
\draw (0,0) node[above] {$z_0$}; 
\draw (4,0) node[above] {$z_{n-2}$}; 
\draw (2,0) node {$\ldots$}; 
\draw (7,0) node[right] {$y_0$};
\draw (-3,0) node[right] {$\xi_{n-1}$};
\draw (-0.9,-0.9) node {$\xi_{0}$};
\draw (3.1,-0.9) node {$\xi_{n-2}$};
\end{tikzpicture}      
\caption{Basis of loops $\xi_0,\ldots,\xi_{n-1}$ in $\pi_1\lb \mathbb{CP}^1\backslash \bs  z, y_0\rb$.\label{fig1}}
\end{figure}

The fundamental matrix $\Phi\lb y\rb$ is uniquely fixed by the following properties:
\begin{enumerate}
\item[(a)] $\Phi\lb y\rb$ is holomorphic and invertible on  the universal cover of $\mathbb{CP}^1\backslash \bs z$ and has constant monodromy under analytic continuation.
\item[(b)] $\Phi\lb y\rb$ satisfies the normalization condition $\Phi\lb y_0\rb = \mathbb{I}$.
\item[(c)]
 In sufficiently small neighborhoods of $z_k$, $k=0,\ldots,n-1$,  the behavior of $\Phi\lb y\rb$ is 
\be
\Phi\lb y\to z_k\rb=C_k \lb z_k-y\rb^{\Theta_k} G_k\lb y\rb, 
\ee 
where $G_k\lb y\rb$ is holomorphic and invertible in the vicinity  $z_k$; $C_k$ is a non-degenerate constant matrix;
$\Theta_k$ is a diagonal matrix conjugate to $A_k$. 
(The asymptotics at $z_{n-1}=\infty$ should be rewritten in terms of a suitable local parameter). Note that $M_k=C_k e^{2\pi i \Theta_k} C_k^{-1}$.
\end{enumerate}

\begin{defn}
The Riemann-Hilbert problem associated with the Fuchsian system (\ref{FS}) is the problem of reconstruction of $\Phi\lb y\rb$ satisfying the conditions {(a)--(c)} for a given monodromy data: $M_k\in GL(N,\mathbb{C})$, $k=0,\ldots,n-1$, subject to the relation
$M_0 \cdots M_{n-1}=\mathbb{I}$. 
\end{defn}

Instead of normalizing the fundamental solution $\Phi(y)$ by the condition (b), we could also use another normalization 
which combines (b) and (c): namely, one may fix the connection matrix $C_l=\mathbb{I}$ at one of the singular points.

Different choices of the normalization point $y_0$ lead to an overall conjugation of all monodromies. 
We identify the corresponding monodromy data and consider the space
\be
\mathcal{M}=\mathrm{Hom}\lb\pi_1\lb\mathbb{CP}^1\backslash \bs z,y_0\rb, GL\lb N,\mathbb{C}\rb\rb/ GL\lb N,\mathbb{C}\rb\,.
\ee
It is often convenient to work with the slice $\mathcal M_{\Theta}\subset\mathcal M$ corresponding to fixed local monodromy exponents $\Theta=\left\{\Theta_0,\ldots,\Theta_{n-1}\right\}$.

Besides $\Phi\lb y\rb$, we will be also interested in the isomonodromic tau-function $\tau\lb \bs z\rb$ of Jimbo-Miwa-Ueno \cite{JMU}. 
It is defined by the following 1-form
\be\label{isomtau}
d_{\bs z} \log \tau\lb \bs z\rb := \frac 1 2  \sum_{k=0}^{n-1} \mathrm{res}_{y=z_k} \mathrm{Tr}\, A^2\lb y\rb dz_k,
\ee
which is closed provided the monodromy of \eqref{FS} is kept constant.
The tau function $\tau\lb \bs z\rb\equiv\tau\lb \bs z\left|\Theta,\mathbf m\right.\rb$ with $\mathbf m\in\mathcal M_{\Theta}$ is a generating function of the Hamiltonians which govern the isomonodromic evolution of $A_0,\ldots,A_{n-2}$ with respect to times $\bs z$.

\subsection{Semi-degenerate monodromy}
Let $\mathbb Y$ be the set of partitions $\lambda=\left[\lambda_1,\ldots \lambda_{\ell}\right]$, $\lambda_1\ge\ldots\ge\lambda_{\ell}>0$, and $\mathbb Y_k$ be the set of all partitions of $k\in\mathbb Z_{\ge0}$. One can decompose the space of Fuchsian systems \eqref{FS} according to their spectral type $\boldsymbol s=( s^{(0)},\ldots,s^{(n-1)})\in\mathbb Y_N^n$, where the partition $s^{(i)}\vdash N$ encodes the multiplicities of the eigenvalues of $\Theta_i$ or $A_i$. Thus, for example, $\ell(s^{(i)})$ is the number of distinct eigenvalues of~$\Theta_i$ and $s^{(i)}_1$ is the multiplicity of its most degenerate eigenvalue. 

The dimension of the space $\mathcal M_{\Theta}$ of monodromy data for irreducible systems of spectral type $\mathbf s$ coincides with the number of accessory parameters, and is known to be given by
\be
\operatorname{dim}\mathcal M_{\Theta}=\lb n-2\rb N^2+2-\sum_{i=0}^{n-1}\sum_{j=1}^{\ell_i}\lb s^{(i)}_j\rb^2\,.
\ee
Generic Fuchsian systems have spectral type $\mathbf s_{\mathrm{gen}}=\lb\lb 1^N\rb,\ldots ,\lb 1^N\rb\rb$. It then follows from the last formula that
\be\label{dimgen}
\operatorname{dim}\mathcal M_{\Theta,\mathrm{gen}} =2\lb n-3\rb \lb N-1\rb+\lb n-2\rb\lb N-1\rb\lb N-2\rb.
\ee
This expression has a geometric interpretation. The $n$-punctured Riemann sphere can be decomposed into $n-2$ pairs of pants (3-punctured spheres) by $n-3$ closed curves. To each of these curves one may assign $2\lb N-1\rb$ monodromy parameters which play the role of Fenchel-Nielsen-type coordinates (lengths and twists) and give the 1st term in \eqref{dimgen}. The 2nd term comes from $\lb N-1\rb\lb N-2\rb$ parameters associated to each pair of pants with fixed conjugacy classes of local monodromies at 3 boundaries. The presence of such parameters is the principal new feature of the higher rank $N\ge 3$.

We are interested in the Fuchsian systems with 2 generic $\lb 1^N\rb$-punctures at $z_0$ and $z_{n-1}$,  and $n-2$ singular points  of spectral type $\lb N-1,1\rb$ at $z_1,\ldots,z_{n-2}$.
The systems of this type will be called \textit{semi-degenerate}. The dimension of the relevant space of monodromy data is readily computed to be
\be\label{dimsemideg}
\operatorname{dim}\mathcal M_{\Theta,\mathrm{s-d}} =2\lb n-3\rb\lb N-1\rb.
\ee
For $n=3$, this dimension vanishes, meaning that the Fuchsian system with 2 generic punctures and one puncture of type $\lb N-1,1\rb$ is rigid. The conjugacy class of monodromy is then completely determined by the local exponents~$\Theta$, i.e. the pants carry no internal moduli. For $n\ge4$, there exist decompositions of the $n$-punctured sphere into such semi-degenerate pants, which explains the difference between (\ref{dimgen}) and \eqref{dimsemideg}.

Our next task is to provide an explicit parameterization of semi-degenerate monodromy. The construction of solution of the corresponding Riemann-Hilbert problem  and the associated isomonodromic tau function constitutes the main goal of the present work.

\begin{ass}\label{assmon}
The monodromy matrices $M_k\in SL\lb N,\mathbb{C}\rb$, $k=0,\ldots,n-1$  satisfying the cyclic condition $M_0\cdots M_{n-1}=\mathbb I$ are assumed to be diagonalizable, i.e.
 $M_k\sim \exp(2\pi i \Theta_k)$, where $\Theta_k=\operatorname{diag}\ \bs\theta_k$ with $\bs\theta_k=(\theta_k^{(1)},\ldots,\theta_k^{(N)})\in\mathbb{C}^N$ are traceless diagonal matrices. For $k=1,\ldots,n-2$, these matrices are fixed to be
\be
\bs \theta_k =  a_k \lb\tfrac{N-1}{N},-\tfrac 1N,\ldots,-\tfrac 1N\rb,\qquad a_k\in\mathbb{C}.  \ee
It is further assumed that the products
$M_{[k]}:=M_0\cdots M_k$ with $k=0,\ldots,n-2$  are also diagonalizable and  their eigenvalues $\mathrm{Spec}\, M_{[k]}$ are pairwise distinct:
\be M_{[k]}\sim \exp\lb 2\pi i \fs_k\rb,\qquad\fs_k=\operatorname{diag} \bs\sigma_k,\qquad\bs\sigma_k=(\sigma_k^{(1)},\ldots,
\sigma_k^{(N)}),
\ee
where $\operatorname{Tr}\fs_k=0$. Note that $M_{[0]}=M_0$, $M_{[n-2]}=M_{n-1}^{-1}$, so that we can  identify $\fs_0=\Theta_0$, $\fs_{n-2}=-\Theta_{n-1}$, $\bs\sigma_0=\bs\theta_0$, $\bs\sigma_{n-2}=-\bs\theta_{n-1}$.
\end{ass}

For $n=3$, the semi-degenerate monodromy is described by the following result, see e.g. \cite{Beukers}. 

\begin{lem}[Rigidity Lemma]\label{KatzLemma}
If $M_A, M_B\in GL(N,\mathbb{C})$ are diagonalizable with non-intersecting sets of eigenvalues 
$\mathrm{Spec}\,M_A=\{\al_1,\ldots,\al_N\}$,
$\mathrm{Spec}\,M_B=\{\beta_1,\ldots,\beta_N\}$ and $M_B^{-1} M_A$ is  a reflection (a rank 1 perturbation of the identity matrix) then 
there exists a unique (up to overall rescaling) basis in which 
  \[
  M_A=\left(\begin{array}{ccccc}
   0 & 0 & 0 & \ldots & (-1)^{N+1} e_N(A) \\
   1 & 0 & 0 & \ldots & (-1)^{N} e_{N-1}(A) \\
   0 & 1 & 0 & \ldots & (-1)^{N-1} e_{N-2}(A) \\
   \cdot & \cdot & \cdot & \ddots & \cdot \\ 
   0 & 0 & \ldots & 1 & e_1(A)
   \end{array}\right),\quad 
   M_B=\left(\begin{array}{ccccc}
      0 & 0 & 0 & \ldots & (-1)^{N+1} e_N(B)\\
      1 & 0 & 0 & \ldots & (-1)^{N} e_{N-1}(B) \\
      0 & 1 & 0 & \ldots & (-1)^{N-1} e_{N-2}(B) \\
      \cdot & \cdot & \cdot & \ddots & \cdot \\ 
      0 & 0 & \ldots & 1 & e_1(B)
      \end{array}\right),
  \]
where $e_k\lb A\rb$ and $e_k\lb B\rb$ denote the $k$-th elementary symmetric polynomials in the eigenvalues of $M_A$ and $M_B$, respectively.
In this case, $\mathrm{Spec}\,M_B^{-1} M_A =\bigl\{\prod_{k=1}^N\al_k \beta_k^{-1},1,\ldots,1\bigr\}$.   
\end{lem}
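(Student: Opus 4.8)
The plan is to prove the Rigidity Lemma by an elementary linear-algebra argument, constructing the common basis directly from the rank-one condition. First I would observe that $M_B^{-1}M_A = \mathbb{I} + R$ with $\operatorname{rank} R = 1$, so $R = u\, v^{T}$ for some nonzero column vector $u$ and row vector $v$. Equivalently, $M_A = M_B + M_B u\, v^{T}$, i.e. $M_A$ and $M_B$ agree on the hyperplane $\ker v$ (a codimension-one subspace), and $M_A - M_B$ has image spanned by the single vector $w := M_B u$. The key geometric input is a \emph{cyclicity} claim: under the stated genericity hypotheses on the spectra, the vector $w$ is cyclic for $M_A$ (and, symmetrically, for $M_B$), meaning that $w, M_A w, M_A^{2} w, \ldots, M_A^{N-1}w$ form a basis of $\mathbb{C}^{N}$. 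Granting this, one declares this ordered tuple to be the distinguished basis; the ambiguity of an overall scalar is exactly the freedom in choosing $w$ up to rescaling, which gives the "unique up to overall rescaling" statement.

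Next I would compute the two matrices in this basis. Since $M_A$ sends each basis vector $M_A^{j}w$ to the next one $M_A^{j+1}w$ for $j = 0, \ldots, N-2$, the matrix of $M_A$ is automatically in companion form: its first $N-1$ columns are the shifted standard basis vectors, and its last column encodes the coefficients of the characteristic polynomial of $M_A$, which by Cayley--Hamilton expands $M_A^{N}w$ in terms of the lower powers. Writing $\det(\lambda\mathbb{I} - M_A) = \lambda^{N} - e_1(A)\lambda^{N-1} + e_2(A)\lambda^{N-2} - \cdots + (-1)^{N}e_N(A)$ in terms of the elementary symmetric polynomials of the eigenvalues $\alpha_1, \ldots, \alpha_N$ yields precisely the displayed entries $(-1)^{k+1}e_{N-k+1}(A)$ up the last column. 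For $M_B$, the point is that $M_B$ also acts as the shift on $w, M_A w, \ldots$: indeed $M_A^{j}w$ lies in $\ker v$ for $j$ in a suitable range because the rank-one correction only affects the top of the flag, so $M_B M_A^{j} w = M_A^{j+1} w$ for $j = 0,\ldots, N-2$ as well, and then the same Cayley--Hamilton argument applied to $M_B$ produces the companion matrix with the $e_k(B)$. A cleaner route is to note $M_B = M_A(\mathbb{I} + R)^{-1}$ and that $(\mathbb{I}+R)^{-1}$ is again a rank-one perturbation of the identity fixing the same hyperplane, so by symmetry of the hypotheses the cyclicity argument applies verbatim with $A \leftrightarrow B$.

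Finally, the spectrum of $M_B^{-1}M_A = \mathbb{I} + u v^{T}$ follows from the standard fact that a rank-one perturbation $\mathbb{I} + uv^{T}$ has eigenvalue $1$ with multiplicity $N-1$ and one further eigenvalue $1 + v^{T}u = 1 + \operatorname{Tr}(uv^{T})$; taking determinants, $\det(M_B^{-1}M_A) = 1 + v^{T}u = \det M_A / \det M_B = \prod_k \alpha_k / \prod_k \beta_k = \prod_{k=1}^{N}\alpha_k\beta_k^{-1}$, so $\operatorname{Spec} M_B^{-1}M_A = \{\prod_k \alpha_k\beta_k^{-1}, 1, \ldots, 1\}$. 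The main obstacle — and the only genuinely nontrivial step — is the cyclicity claim: one must use the non-intersection of the two spectra (and the fact that each $M_A, M_B$ is diagonalizable) to rule out the possibility that $w$ lies in a proper $M_A$-invariant subspace. The cleanest argument I know runs by contradiction: if $w$ generates a proper invariant subspace $V \subsetneq \mathbb{C}^{N}$, then since $M_A$ and $M_B$ differ only on the line through $w \in V$, the complementary behavior forces $M_A$ and $M_B$ to share an eigenvector or an invariant complement on which they act identically, which (after passing to eigenspaces) produces a common eigenvalue — contradicting $\operatorname{Spec} M_A \cap \operatorname{Spec} M_B = \varnothing$. I would isolate this as a short sublemma and prove it by induction on $N$ or by a direct decomposition argument, since it is exactly where the rigidity hypotheses are consumed.
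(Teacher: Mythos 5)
There is a genuine error in the central construction. You build the distinguished basis from the vector $w$ spanning the image of $M_A-M_B$, but the step ``$M_A^{j}w$ lies in $\ker v$ for $j$ in a suitable range, so $M_B M_A^{j}w=M_A^{j+1}w$'' is unjustified and false in general. Writing $M_A-M_B=w\,v^{T}$, one has $M_B M_A^{j}w=M_A^{j+1}w-(v^{T}M_A^{j}w)\,w$, and nothing in the hypotheses forces $v^{T}M_A^{j}w=0$ for $j=0,\dots,N-2$. Already for $N=2$ your claim requires $v^{T}w=0$; but putting $M_A,M_B$ in the companion form of the Lemma (so that $v^{T}=(0,1)$ and $w$ is the difference of the last columns) gives $v^{T}w=\operatorname{Tr}M_A-\operatorname{Tr}M_B=e_1(A)-e_1(B)$, which is generically nonzero. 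Concretely, for eigenvalues $\{1,2\}$ and $\{3,4\}$ one finds $v^{T}w=-4$, and in your basis $\{w,M_Aw\}$ the first column of $M_B$ is $(4,1)^{T}$ rather than $(0,1)^{T}$: $M_A$ is companion in your basis, $M_B$ is not. Your quotient argument for cyclicity of $w$ (on $\mathbb{C}^N$ modulo the cyclic span the two maps coincide, forcing a common eigenvalue) and the determinant computation of $\mathrm{Spec}\,M_B^{-1}M_A$ are both fine, but they do not repair this step; cyclicity of $w$ alone does not give a \emph{simultaneous} companion form.

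The correct construction is governed by the covector $v$ (the reflection hyperplane), not by the image vector $w$. For a basis $f_1,\dots,f_N$ to put both matrices in the displayed form one needs $f_{j+1}=M_Af_j=M_Bf_j$ for $j\le N-1$, i.e. $f_1,M_Af_1,\dots,M_A^{N-2}f_1\in\ker v$, i.e. $v\,M_A^{j}f_1=0$ for $j=0,\dots,N-2$. Disjointness of the spectra enters exactly here: the largest $M_A$-invariant subspace $U\subseteq\ker v$ is also $M_B$-invariant and carries $M_A=M_B$, so $U\neq 0$ would produce a common eigenvalue; hence $U=\bigcap_{j\ge 0}\ker(v M_A^{j})=0$, the covectors $v,vM_A,\dots,vM_A^{N-1}$ are linearly independent, and $\bigcap_{j=0}^{N-2}\ker(vM_A^{j})$ is a line. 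Taking $f_1$ to span this line, the vectors $f_1,M_Af_1,\dots,M_A^{N-1}f_1$ are independent (pair a vanishing combination with $vM_A^{k}$ and use $vM_A^{N-1}f_1\neq0$), $M_A$ is companion by Cayley--Hamilton, and $M_B M_A^{j}f_1=M_A^{j+1}f_1$ for $j\le N-2$ because the correction term $(vM_A^{j}f_1)\,w$ vanishes, so $M_B$ is companion as well; uniqueness up to an overall scalar is immediate from this characterization of $f_1$. Equivalently, your construction becomes correct if applied to the transposes (the image of $M_A^{T}-M_B^{T}$ is spanned by $v^{T}$), followed by passage to the dual basis. Note finally that the paper gives no proof of this Lemma but cites Beukers--Heckman, whose argument is precisely the hyperplane/covector one sketched above.
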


The matrices $W_A$ and $W_B$ defined by 
$\lb W_A\rb_{kl}=\al_k^{l-1}$ and $\lb W_B\rb_{kl}=\beta_k^{l-1}$ diagonalize, respectively, $M_A$ and $M_B$:
\[
 W_A M_A W_A^{-1} =D_A, \qquad W_B M_B W_B^{-1} =D_B\,,
\]
where $D_A=\mathrm{diag} \lb \al_1,\ldots,\al_N\rb$ and  $D_B=\mathrm{diag} \lb \beta_1,\ldots,\beta_N\rb$.
The matrix $W_B W_A^{-1}$ relates the eigenvectors of $M_A$ and $M_B$. Its matrix elements are
\be\label{WBWA}
(W_B W_A^{-1})_{kl}=\prod_{s(\ne l)} \frac{ \beta_k-\al_s} {\al_l-\al_s}\,.
\ee
Note that the general form of a matrix which relates a basis where $M_A$ is diagonal to another basis where $M_B$ is diagonal is given by 
$R_B W_B W_A^{-1} R^{-1}_A$, where $R_A$ and $R_B$ are non-degenerate diagonal matrices.

Now one may use Lemma~\ref{KatzLemma} to parameterize recursively the  monodromy matrices $M_k$ of semi-degenerate Fuchsian systems. 
To this end observe that it suffices to parameterize instead a related set of matrices
$M_{[k]}=M_0\cdots M_k$ with $k=0,\ldots,n-2$. Indeed, we have $M_{0}=M_{[0]}$, $M_{n-1}=M_{[n-2]}^{-1}$ and  $M_{k}=M_{[k-1]}^{-1}M_{[k]}$ for $k=1,\ldots,n-2$.
\begin{prop}\label{propparam}
Let $M_k\in SL(N,\mathbb{C})$, $k=0,\ldots,n-1$ be the monodromy matrices of a semi-degenerate Fuchsian system satisfying genericity conditions of Assumption~\ref{assmon}. They can be parameterized as follows:
\be\label{MkWk}
M_{[k]}=W_{[k]}^{-1} \exp\lb2\pi i \fs_k\rb W_{[k]}\,,
\ee
\be\label{WWW}
W_{[k]} = R_{k} W_{k+1} R_{k+1} \cdots R_{n-3} W_{n-2} R_{n-2}\,,
\ee
where $R_k=\mathrm{diag}\lb\bs r_k\rb$ are diagonal matrices from $SL(N,\mathbb{C})$ and 
\be\label{Wm}
\lb W_m\rb_{kl}=\prod_{s(\ne l)} \frac{ e^{2\pi i (\sigma_{m-1}^{(k)}-a_m/N)}-e^{2\pi i \sigma_{m}^{(s)}}} {e^{2\pi i \sigma_{m}^{(l)}}-e^{2\pi i \sigma_{m}^{(s)}}}\,.
\ee
\end{prop}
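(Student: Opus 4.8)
The plan is to prove the parameterization by induction on $k$, running downwards from $k=n-2$ to $k=0$, using the Rigidity Lemma~\ref{KatzLemma} as the basic building block at each step. The key observation is that, by Assumption~\ref{assmon}, each elementary monodromy $M_m$ for $m=1,\ldots,n-2$ is a reflection: indeed $\bs\theta_m = a_m(\tfrac{N-1}{N},-\tfrac1N,\ldots,-\tfrac1N)$ means $\exp(2\pi i\Theta_m)$ has one eigenvalue $e^{2\pi i a_m(N-1)/N}$ and the eigenvalue $e^{-2\pi i a_m/N}$ with multiplicity $N-1$, hence $M_m$ differs from $e^{-2\pi i a_m/N}\mathbb I$ by a rank-one matrix. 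Rescaling does not affect the fact that $M_{[m-1]}^{-1}M_{[m]} = M_m$ is, up to the scalar $e^{-2\pi i a_m/N}$, a rank-one perturbation of the identity; equivalently $e^{2\pi i a_m/N}M_{[m-1]}^{-1}M_{[m]}$ is a reflection in the sense of the Lemma. The genericity assumptions guarantee that $M_{[m-1]}$ and $M_{[m]}$ are diagonalizable with simple, pairwise distinct spectra $\exp(2\pi i\bs\sigma_{m-1})$ and $\exp(2\pi i\bs\sigma_m)$, and one checks using the last sentence of the Rigidity Lemma together with $\mathrm{Tr}\,\fs_m=0$ that the spectra are automatically non-intersecting (a coincidence would force a relation among the $\bs\sigma$'s and $a_m$'s excluded by genericity, or be handled by the rescaling freedom).

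Given this, the inductive step is as follows. Suppose we have already produced a basis in which $M_{[k]}$ is diagonal, namely $M_{[k]} = W_{[k]}^{-1}\exp(2\pi i\fs_k)W_{[k]}$ with $W_{[k]}$ of the claimed form \eqref{WWW}. Applying the Rigidity Lemma to the pair $(M_A,M_B) = (M_{[k]},M_{[k-1]})$ — for which $M_B^{-1}M_A = M_{[k-1]}^{-1}M_{[k]} = M_k$ is, up to scalar, a reflection — gives, via the explicit eigenvector-change matrix \eqref{WBWA}, that the change of basis from ``$M_{[k]}$ diagonal'' to ``$M_{[k-1]}$ diagonal'' is $R_{k-1}W_k$ for a suitable diagonal $R_{k-1}\in SL(N,\mathbb C)$, where the entries of $W_k$ are precisely \eqref{Wm}: one substitutes $\al_l = e^{2\pi i\sigma_k^{(l)}}$ for the eigenvalues of $M_{[k]}=M_A$ and $\beta_k = e^{2\pi i(\sigma_{k-1}^{(k)} - a_k/N)}$ for the eigenvalues of $M_{[k-1]}=M_B$ into formula \eqref{WBWA}. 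The identification $\beta_k = e^{2\pi i\sigma_{k-1}^{(k)}}$ up to the overall factor $e^{-2\pi i a_k/N}$ — which is exactly the Rigidity Lemma's assertion that the nontrivial eigenvalue of the reflection $M_k/(\text{scalar})$ equals $\prod\al_j\beta_j^{-1}$, combined with the constraint $\sum_j(\sigma_{k-1}^{(j)}-\sigma_k^{(j)}) = 0$ — is the point that pins down the shift $a_k/N$. Composing $W_{[k]}$ with $R_{k-1}W_k$ on the left yields $W_{[k-1]} = R_{k-1}W_k R_k W_{k+1}\cdots R_{n-2}$, which is \eqref{WWW} with $k$ replaced by $k-1$, closing the induction. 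The base case $k=n-2$ is immediate: $M_{[n-2]} = M_{n-1}^{-1}\sim\exp(-2\pi i\Theta_{n-1}) = \exp(2\pi i\fs_{n-2})$ by Assumption~\ref{assmon}, so $W_{[n-2]} = R_{n-2}$ is just the residual diagonal rescaling freedom.

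The step I expect to be the main obstacle is the careful bookkeeping of the diagonal rescaling matrices $R_k$ and, relatedly, verifying that the scalar factor $e^{2\pi i a_k/N}$ is correctly absorbed so that the Rigidity Lemma applies to a genuine reflection. The Lemma determines the intertwining basis only up to independent left and right diagonal conjugations ($R_B W_B W_A^{-1}R_A^{-1}$ in the notation of the paragraph following Lemma~\ref{KatzLemma}), so one must check that these ambiguities can be consistently organized into the single chain of $R_k$'s appearing in \eqref{WWW} without over- or under-counting — in particular that the $2(n-3)(N-1)$ genuine monodromy parameters of \eqref{dimsemideg} are matched by the data $\{\bs\sigma_k\}_{k=1}^{n-3}$ and $\{\bs r_k\}$, with the $R_k$ accounting for the ``twist'' coordinates modulo the residual torus action. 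A secondary technical point is to confirm that the explicit matrix $W_m$ of \eqref{Wm} is non-degenerate under the genericity hypotheses (so that the recursion is well-defined), which follows from the Cauchy-type determinant formula for \eqref{WBWA} together with the distinctness of the $e^{2\pi i\sigma_m^{(l)}}$ and the non-intersection of the two spectra.
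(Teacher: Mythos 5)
Your proposal follows essentially the same route as the paper's own proof: a downward recursion in $k$ in which one rescales $M_k$ by $e^{2\pi i a_k/N}$ so that it becomes a genuine reflection, applies Lemma~\ref{KatzLemma} together with formula \eqref{WBWA} to the pair $(M_{[k]},M_{[k-1]})$, and organizes the eigenvector-rescaling ambiguities into the chain of diagonal matrices $R_k$, reproducing \eqref{Wm} and \eqref{WWW}. The only differences are cosmetic — you shift the $\beta$'s rather than the $\alpha$'s in \eqref{WBWA}, which yields the identical matrix — and the points you flag (the $R_k$ bookkeeping, non-intersection of spectra) are handled at the same level of implicit genericity as in the paper's argument.
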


\begin{proof}
The idea is to use Lemma~\ref{KatzLemma} successively for the pairs of matrices 
\[
M_A =e^{2\pi i a_k/N} M_{[k]},\qquad M_B=M_{[k-1]}, \qquad k=n-2,\ldots,1,
\]
where the factor $e^{2\pi i a_k/N}$ ensures that the eigenvalue of $M_B^{-1}M_A=e^{2\pi i a_k/N}M_k$ with multiplicity $N-1$ is equal to 1.
We start the parameterization from the pair 
$$M_A=e^{2\pi i a_{n-2}/N} M_{[n-2]}, \qquad M_B=M_{[n-3]},$$ 
assuming that $M_A$ is diagonal:
$M_A=e^{2\pi i a_{n-2}/N} \exp\lb 2\pi i \fs_{n-2}\rb$, where $\fs_{n-2} = - \Theta_{n-1}$. Then 
\[
M_B=R_{n-2}^{-1} W_{n-2}^{-1} R_{n-3}^{-1} \exp\lb 2\pi i \fs_{n-3}\rb R_{n-3} W_{n-2} R_{n-2}\,,
\]
where, as follows from (\ref{WBWA}), $W_{n-2}$ is given by (\ref{Wm})
and $R_k=\mathrm{diag}\lb \bs r_k\rb$ are arbitrary diagonal matrices from $SL\lb N,\mathbb{C}\rb$.
Continuing the recursive procedure, we get the parameterization (\ref{MkWk}) for all $M_{[k]}$.
The matrix $W_{[k]}$ defined by (\ref{WWW}) relates the bases which diagonalize $M_{[n-2]}=M_{n-1}^{-1}$ and $M_{[k]}$.
\end{proof}
Observe that the diagonal matrix $R_{k}$ cancels out in (\ref{MkWk}); however, we keep it for later use. Since  $R_{n-2}$ only produces an overall conjugation of all monodromies, it does not enter into the parameterization of  $\mathcal{M}_{\Theta,\mathrm{s-d}}$.
Thus the semi-degenerate monodromy is parameterized by  $\bs \sigma_k$, $\bs r_k$ with $k=1,\ldots,n-3$, which of course agrees with the dimension \eqref{dimsemideg}.

\subsection{Three-point case}

This subsection gives an explicit solution of the semi-degenerate Fuchsian system (\ref{FS}) with 3 singular points $z_0=0$, $z_1=1$ $z_2=\infty$, and
the connection $A\lb y\rb$ having {\em traceless} residues $A_0$, $A_1$, $A_\infty=-A_0-A_1$ at these poles.
We suppose that $A_0$, $A_1$, $A_\infty$ are diagonalizable to $\Theta_0$, $\Theta_1$, $\Theta_\infty$. Moreover, it is convenient to choose the gauge so that $A_\infty$ is diagonal.
Thus 
\be\label{FS3pA}
A_0 = G_0^{-1} \Theta_0 G_0,\qquad A_1=G_1^{-1} \Theta_1 G_1, \qquad A_\infty =  \Theta_\infty,
\ee
where $G_0$, $G_1\in GL\lb N,\mathbb{C}\rb$ and
\be\label{FS3pTh}
\begin{gathered}
\Theta_0=\mathrm{diag}\lb \theta^{(1)}_0,\ldots,\theta^{(N)}_0\rb,\quad
\Theta_\infty=\mathrm{diag}\lb\theta^{(1)}_\infty,\ldots,\theta^{(N)}_\infty\rb,\quad 
\operatorname{Tr} \Theta_0=\operatorname{Tr} \Theta_\infty=0,\\
\Theta_1=a \cdot \operatorname{diag}\lb\tfrac{N-1}{N},-\tfrac 1N,\ldots,-\tfrac 1N\rb.
\end{gathered}
\ee
Recall that while $\Theta_1$ has an eigenvalue of multiplicity $N-1$, all eigenvalues of  $\Theta_0$ and $\Theta_\infty$ are distinct.
Such data correspond to a rigid local system and the matrix elements of $A_1$ can be derived from an additive variant of Lemma~\ref{KatzLemma}:
\be\label{ME_Ajm}
 \lb A_1\rb_{jm}=-\frac{r_j}{r_m}\cdot
 \frac{\prod_{k}(\theta^{(j)}_\infty-a/N+\theta^{(k)}_0)}{\prod_{k(\ne m)}(\theta^{(m)}_\infty-\theta^{(k)}_\infty)}
 -\delta_{jm}\frac{a}{N}\,,
\ee
where $r_1,\ldots,r_N$ are arbitrary non-zero parameters. They
appear due to the possibility of overall conjugation of $A_0$, $A_1$, $A_\infty$ by 
the diagonal matrix $R=\mathrm{diag}\lb r_1,\ldots,r_N\rb$,  preserving the diagonal form of $A_\infty$.

\begin{theo}
The solution of the Fuchsian system 
\be\label{FS3p}
\frac{d\Phi\lb y\rb}{dy} = \Phi\lb y\rb A\lb y\rb,\qquad A\lb y\rb =\frac{A_0}{y}+\frac{A_1}{y-1},
\ee
with $A_1$ fixed by (\ref{ME_Ajm}) and $A_0=-A_1-\Theta_{\infty}$, which has the asymptotics $\Phi\lb y\rb=y^{-\Theta_\infty}\lb \mathbb I+O\lb y^{-1}\rb\rb$ as $y\to\infty$  is 
\[
 \Phi_{jm}\lb y\rb= N_{jm} y^{-\theta^{(j)}_\infty-1+\delta_{jm}} 
 \lb 1-\tfrac 1y\rb^{-a/N} \times\qquad\qquad\quad
 \]\be\label{FS3psol}
 \times
{}_NF_{N-1}\left(\genfrac{}{}{0pt}{0}{\bigl\{1-\delta_{jm}-a/N+\theta^{(k)}_0+\theta^{(j)}_\infty\bigr\}_{k=\overline{1,N}} } 
{\bigl\{1+\theta^{(j)}_\infty-\theta^{(k)}_\infty
+\delta_{mk}-\delta_{jm}\bigr\}_{k=\overline{1,N};k\ne j}}\Bigg|\,\frac{1}{y}\right),
\quad j,m=1,\ldots,N,
\ee
where 
\[
N_{jm}=\begin{cases}\displaystyle\frac{\lb A_1\rb_{jm}}{\theta^{(m)}_\infty-\theta^{(j)}_\infty-1}, 
& \quad j\ne m,\\
\qquad\;\; 1, & \quad j=m,
\end{cases}
\]
and $_NF_{N-1}\lb\ldots\left|\,x\right.\rb$ denotes the generalized hypergeometric function.
\end{theo}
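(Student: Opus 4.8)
The plan is to combine the rigidity of the underlying local system with an identification of the associated scalar equation with the Clausen--Thomae hypergeometric equation. Since in (\ref{FS3p}) the residues $A_1$ and $A_0=-A_1-\Theta_\infty$ are fixed by (\ref{ME_Ajm}), the fundamental matrix obeying $\Phi(y)=y^{-\Theta_\infty}(\mathbb I+O(y^{-1}))$ at the regular singular point $y=\infty$ exists and is unique for generic exponents (the general case following by continuity in $\Theta_0,\Theta_\infty,a$, which is also needed because the right-hand side of (\ref{FS3psol}) has apparent poles when $\theta^{(m)}_\infty-\theta^{(j)}_\infty-1=0$ or when a lower Pochhammer parameter hits a non-positive integer); here the $(N-1,1)$ spectral type at $y=1$ makes the triple rigid by Lemma~\ref{KatzLemma}, which is the reason the closed formula (\ref{ME_Ajm}) exists at all. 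Hence it suffices to check that the matrix $\tilde\Phi(y)$ defined by the right-hand side of (\ref{FS3psol}) solves (\ref{FS3p}) and has the stated asymptotics at infinity.

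To verify that $\tilde\Phi$ solves the system, I would first remove the scalar part of the residue at $y=1$ by the gauge transformation $\Psi(y)=(y-1)^{a/N}\Phi(y)=y^{a/N}(1-1/y)^{a/N}\Phi(y)$, under which $A_1\mapsto A_1+\tfrac aN\mathbb I$ becomes rank one (with an explicit factorization supplied by (\ref{ME_Ajm})), while $A_0$ and $A_\infty$ only shift by scalars. A rank-$N$ Fuchsian system on $\mathbb{CP}^1$ with three singular points, one of which has a rank-one residue, is the Okubo/Levelt normal form of the generalized hypergeometric equation: by the cyclic-vector construction each component of a solution satisfies a scalar ODE of order $N$ with singularities at $0,1,\infty$, logarithm-free at $y=1$ with exponents $0,1,\dots,N-2$ and one anomalous value. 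Writing this ODE in the variable $x=1/y$ and matching indicial polynomials at $x=0$ (built from $\Theta_\infty$, shifted by $a/N$ from the gauge and by $\delta_{jm}$ from the order of vanishing of the $m$-th component of the $j$-th solution) and at $x=\infty$ (built from $\Theta_0$) with $\bigl[\vartheta\prod_k(\vartheta+b_k-1)-x\prod_k(\vartheta+a_k)\bigr]f=0$, $\vartheta=x\tfrac{d}{dx}$, forces the upper parameters to be $\{1-\delta_{jm}-a/N+\theta^{(k)}_0+\theta^{(j)}_\infty\}_{k=1}^N$ and the lower parameters to be $\{1+\theta^{(j)}_\infty-\theta^{(k)}_\infty+\delta_{mk}-\delta_{jm}\}_{k\ne j}$, exactly as in (\ref{FS3psol}); undoing the gauge reinstates the prefactor $y^{-\theta^{(j)}_\infty-1+\delta_{jm}}(1-1/y)^{-a/N}$.

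It then remains to pin down the constants $N_{jm}$ by matching the expansion at $y=\infty$. Substituting $\Phi=y^{-\Theta_\infty}(\mathbb I+\Phi_1 y^{-1}+\dots)$ into (\ref{FS3p}) and using $A(y)=-\Theta_\infty/y+A_1/y^2+\dots$ gives at order $y^{-2}$ the relation $[\Phi_1,\Theta_\infty]-\Phi_1=A_1$, hence $(\Phi_1)_{jm}=(A_1)_{jm}/(\theta^{(m)}_\infty-\theta^{(j)}_\infty-1)$ for $j\ne m$. Comparing with the leading term of $\tilde\Phi_{jm}$ — using ${}_NF_{N-1}(\,\cdots\,|0)=1$ and $(1-1/y)^{-a/N}=1+O(y^{-1})$ — yields $N_{mm}=1$ and the stated value of $N_{jm}$ for $j\ne m$, which completes the verification of $\tilde\Phi(y)=y^{-\Theta_\infty}(\mathbb I+O(y^{-1}))$.

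I expect the main obstacle to be the exponent bookkeeping in the scalar reduction: tracking the $\delta_{jm}$ and $\delta_{mk}$ shifts (which record the orders of vanishing of individual matrix entries at $0$ and $\infty$, not merely the spectral type) so that the indicial data reproduce the precise Pochhammer parameters of (\ref{FS3psol}), and confirming that the integer-spaced exponents at $y=1$ are genuinely logarithm-free for this system. A more computational but fully self-contained alternative is to check $\tilde\Phi'=\tilde\Phi A(y)$ directly, entry by entry, by expressing the $y$-derivative of each ${}_NF_{N-1}$ through the contiguity relations that relate the parameter shifts $\delta_{jm},\delta_{mk}$ appearing in neighbouring entries, and recognizing the resulting linear combination as the matrix (\ref{ME_Ajm}).
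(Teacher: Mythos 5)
Your third paragraph (fixing $N_{jm}$ by substituting $\Phi=y^{-\Theta_\infty}\lb\mathbb I+H_1y^{-1}+\dots\rb$ into \eqref{FS3p}, deriving $[H_1,\Theta_\infty]-H_1=A_1$, and matching leading coefficients) is exactly the argument the paper gives: that is in fact the only part of the statement the paper proves explicitly, the hypergeometric form itself being taken as classical for a rigid three-point system whose residue at $y=1$ is a scalar shift of a rank-one matrix (Okubo/Clausen--Thomae type, the same circle of ideas as Lemma~\ref{KatzLemma} and Beukers--Heckman). So your first two paragraphs are not an alternative route so much as a reconstruction of the classical input the paper leaves implicit; note also that the non-resonance assumption already in force makes your genericity/continuity caveat unnecessary, since $\theta^{(m)}_\infty-\theta^{(j)}_\infty-1\ne0$ and the lower parameters never hit non-positive integers.

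Within that reconstruction, the one step you should not leave as ``bookkeeping'' is the claim that the order-$N$ scalar equation satisfied by a fixed column of entries has singular points only at $0,1,\infty$, with exponents $0,1,\dots,N-2,\gamma$ and no logarithms at $y=1$: cyclic-vector elimination applied to an arbitrary component can create apparent singularities, and without excluding them the indicial matching at $x=0$ and $x=\infty$ alone does not force the Clausen--Thomae operator. A clean way to close this, compatible with your setup, is to use the rank-one factorization $A_1+\tfrac aN\mathbb I=\xi\,\eta$ supplied by \eqref{ME_Ajm}: for a row solution $\phi$ of the gauged system, the scalar $u=\phi\,\xi$ closes on the finite family $\phi A_0^k\xi$ (Cayley--Hamilton), and the elimination there involves no division by $y$-dependent quantities, so $u$ satisfies an order-$N$ Fuchsian equation with singularities only at $0,1,\infty$, which the indicial data then identify as hypergeometric; the individual entries $\Phi_{jm}$, with their $\delta_{jm}$, $\delta_{mk}$ parameter shifts, are then recovered by exactly the contiguity relations you invoke in your ``self-contained alternative''. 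With that step made precise, your argument is complete and the normalization matching finishes the proof as in the paper.
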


Let us comment on the computation of the coefficients  $N_{jm}$.
They can be derived from the expansion of $\Phi\lb y\rb$ near  infinity. Indeed, as $y\to\infty$, one has
\[
 A\lb y\rb = -\frac{A_\infty}{y}+\frac{A_1}{y^2}+O\lb y^{-3}\rb.
\]
The solution $\Phi(y)$ can be found iteratively using the expansion
\be\label{FS3psol-exp}
 \Phi\lb y\rb= y^{-A_\infty} \lb \mathbb I+H_1y^{-1}+H_2 y^{-2}+O\lb y^{-3}\rb\rb.
\ee
Substituting this expression into the Fuchsian system yields 
$H_1=[H_1,A_\infty]-A_1$.
Under the non-resonance condition for $A_\infty$, it follows that
\[
 \lb H_1\rb_{jm}=\frac{\lb A_1\rb_{jm}}{\theta^{(m)}_\infty-\theta^{(j)}_\infty-1}.
\]
Comparing (\ref{FS3psol}) and  (\ref{FS3psol-exp}) as $y\to \infty$, one finds that
the coefficients of the off-diagonal leading terms and  diagonal  next-to-leading terms are given by
the matrix elements of $H_1$. In particular,  $N_{jm}=\lb H_1\rb_{jm}$ for $j\ne m$.

\section{$W_N$-algebras and their representations\label{sec3}}
It is important for us that  $W_N$-algebras with central charge $c=N-1$ have a bosonic realization related to the bosonic realization of 
    $\widehat{\mathfrak{sl}}_N$ at level~$1$.  Since this value of $c$ is exactly what we need for applications to isomonodromy, 
    the bosonic realization of $W_N$-algebras will be used as their basic definition. Thanks to the boson-fermion correspondence, in this case there also exists a fermionic realization of $W_N$-algebras.

\subsection{Definition of $W_N$-algebras with  $c=N-1$}
We are going to define the $W_N$-algebras, $W_N=W(\mathfrak{sl}_N)$, with  $c=N-1$, as abstract operator algebras starting from
their realizations in terms of $N$ free bosonic fields $\phi_k(z)$, $k=1,\ldots, N$, subject to one relation
$\sum_{k=1}^N \phi_k(z)=0$. The currents $J_k(z)=i\partial \phi_k(z)$  have the operator product expansions (OPE) of the form
\be\label{OPEJN}
 J_k(z) J_l(z')=\frac{\delta_{kl}-\frac{1}{N}}{\lb z-z'\rb^2}+\mathrm{regular}.
\ee
The modes $a^{(k) }_p$ of the currents $J_k(z)$ are defined by
\be\label{JNmodes}
 J_k(z)=\sum_{p\in \mathbb{Z}} \frac{a^{(k)}_p}{z^{p+1}}\,.
\ee
These currents define $W_N$-algebra currents $W^{(2)}\lb z\rb, \ldots, W^{(N)}\lb z\rb$ as 
sums of normal-ordered monomials:
\be\label{Wgen}
W^{(j)}\lb z\rb=\sum_{1\le i_1<\cdots<i_j\le N} :J_{i_1}\lb z\rb \cdots J_{i_j}\lb z\rb:\,, \qquad j=2,\ldots,N.
\ee
For example, 
\[
 W^{(2)}\lb z\rb\,=\sum_{k<l} :J_k\lb z\rb J_l\lb z\rb:=-\frac{1}{2} \sum_{k=1}^N :J_k\lb z\rb^2:=-T\lb z\rb,
\]
where $T\lb z\rb$ is the holomorphic component of the energy-momentum tensor.
The OPEs of currents (\ref{Wgen}) between themselves can be rewritten in terms of products of currents from the same set and their derivatives giving an
abstract definition of $W_N$-algebras. 
{}From the OPE of the energy-momentum tensor $T\lb z\rb$  with itself we get the central charge $c=N-1$.
The modes $W^{(j)}_p$ of the currents $W^{(j)}\lb z\rb$ are defined by 
\[
 W^{(j)}(z)=\sum_{p\in \mathbb{Z}} \frac{W^{(j)}_p}{z^{p+j}}.
\]

We will not need the explicit form of all OPEs of the currents $W^{(j)}\lb z\rb$  and the explicit formulas for the commutation relations of their modes $W^{(j)}_p$. However, it will be important for us that the commutation relations respect two structures on the $W_N$-algebra: $\mathbb{Z}$-gradation with respect to the adjoint action
of $L_0=-W_0^{(2)}$ 
\be\label{degL0}
\operatorname{deg}_{L_0}W^{(j)}_p=-p,
\ee
and quasi-commutativity with respect to the ${W}$-filtration \cite{Arakawa} defined by 
\be\label{degW}
\operatorname{deg}_{{W}}  W^{(j)}_p=j-1.
\ee
Namely, we will need the relations
\begin{align}\label{crL0}
\mathrm{deg}_{L_0}\,[W^{(j_1)}_{p_1},W^{(j_2)}_{p_2}]=&\,-(p_1+p_2),
\\ \label{crW}
\mathrm{deg}_{{W}} \,[W^{(j_1)}_{p_1},W^{(j_2)}_{p_2}]<&\; \lb j_1-1\rb+\lb j_2-1\rb.
\end{align}
The latter inequality means that the modes of $W_N$-algebra currents commute up to elements of smaller degree with respect
to $\mathrm{deg}_{{W}}$ (quasi-commutativity which is commutativity of the corresponding graded algebra).

\subsection{Vertex operators}
Given $\boldsymbol{\theta}=(\theta_1,\ldots,\theta_N)\in \mathbb{C}^N$ with $\sum_{k=1}^N \theta_k=0$, one may introduce the
exponential vertex operator
\[
 V_{\boldsymbol{\theta}}(z)=\,:e^{i \lb \boldsymbol{\theta},\phi(z)\rb}:
\]
It has the following OPEs with the currents $J_k\lb z\rb$:
\[
 J_k\lb z\rb V_{\boldsymbol{\theta}}(z')=\frac{\theta_k V_{\boldsymbol{\theta}}(z')} {z-z'}+\mathrm{regular}, 
\]
which in turn imply that
\be\label{Wprim}
 W^{(j)}\lb z\rb V_{\boldsymbol{\theta}}(z')=\sum_{k=0}^\infty \frac{\bigl( W^{(j)}_{-k} V_{\boldsymbol{\theta}}\bigr)(z')} {\lb z-z'\rb^{j-k}},
\ee
where 
\be\label{W0prim}
\bigl( W^{(j)}_0 V_{\boldsymbol{\theta}}\bigr)\lb z\rb=e_j\lb\boldsymbol{\theta}\rb V_{\boldsymbol{\theta}}\lb z\rb,
\ee
and
$e_j(\boldsymbol{\theta})$ is the $j$-th elementary symmetric polynomial in the variables $\boldsymbol{\theta}$. 
The relations (\ref{Wprim}), (\ref{W0prim}) define the primary field $V_{\boldsymbol{\theta}}(z)$ of the $W_N$-algebra. 
Due to the state-field correspondence, to every such primary field we can associate the highest weight vector $|\boldsymbol{\theta}\rangle$ of
a Verma module $\mathsf{M}_{\boldsymbol{\theta}}$ of the $W_N$-algebra.

The Verma module $\mathsf{M}_{\boldsymbol{\theta}}$ is induced from the one-dimensional
module with the basis element $|\boldsymbol{\theta}\rangle$ of the subalgebra of $W_N$-algebra generated by $\{W^{(j)}_{k\ge 0}\}$:
\be\label{hvvright}
 W^{(j)}_0 |\bt\rangle =e_j\lb \boldsymbol{\theta}\rb|\bt\rangle,\qquad 
 W^{(j)}_{k >0}\left|\bt\right\rangle =0.
\ee
The $W_N$-algebra admits a Poincar\'e-Birkhoff-Witt basis \cite{Arakawa}. This means that there is a linear basis in the Verma module $\mathsf{M}_{\boldsymbol{\theta}}$
consisting of the elements $W_{\boldsymbol{\lambda}} |\boldsymbol{\theta}\rangle$, where ${\boldsymbol{\lambda}=\lb \lambda^{(2)},\ldots,\lambda^{(N)}\rb\in\mathbb Y^{N-1}}$ is an $\lb N-1\rb$-tuple of partitions and
\be\label{stPBW}
W_{\boldsymbol{\lambda}} \left|\boldsymbol{\theta}\right\rangle=
W^{(N)}_{-\lambda^{(N)}}\cdots
W^{(2)}_{-\lambda^{(2)}}\left|\boldsymbol{\theta}\right\rangle,
\qquad
W^{(s)}_{-\lambda^{(s)}}:=W^{(s)}_{-\lambda^{(s)}_1}\cdots
W^{(s)}_{-\lambda^{(s)}_{\ell_s}}.
\ee
We are interested in the matrix elements of descendants (3-point functions) of the vertex operator  $V_{\boldsymbol{\theta}}\lb z\rb$:
\be\label{gen3pME}
 \left\langle \boldsymbol{\theta}''\right| W^\dag_{\boldsymbol{\lambda}''} \left(W_{\boldsymbol{\lambda}}  
 V_{\boldsymbol{\theta}}\right)\lb z\rb W_{\boldsymbol{\lambda}'} \left|\boldsymbol{\theta}'\right\rangle,\qquad
 \boldsymbol{\lambda},\boldsymbol{\lambda'},\boldsymbol{\lambda''}
 \in \mathbb Y^{N-1},
\ee
where $\dag$ is an anti-linear involutive anti-automorphism of the $W_N$-algebra uniquely defined by $\bigl (W^{(j)}_{k}\bigr)^{\dag}=W^{(j)}_{-k}$, and
$\langle \bt|$ satisfies the conditions analogous to \eqref{hvvright}:
\be\label{hvvleft}
 \left\langle \bt\right| W^{(j)}_0 =  \left\langle \bt\right| e_j\lb\bt\rb ,\qquad 
 \left\langle \bt\right| W^{(j)}_{k<0}  =0.
\ee

It is well-known that in the case of the Virasoro algebra (i.e. for $N=2$) thanks to the Ward identities all matrix elements \eqref{gen3pME}
can be expressed in terms of the matrix element of  $V_{\boldsymbol{\theta}}\lb z\rb$ between the highest weight vectors,
$\left\langle \boldsymbol{\theta}''\right| V_{\boldsymbol{\theta}}\lb z\rb \left|\boldsymbol{\theta}'\right\rangle$ (3-point function of primaries). In order to simplify these matrix elements as much as possible in the $N\ge 3$ case, we will use the Borcherds identities
\cite{Borch,FBZ,Kac} 
\be\label{WVcom}
[W^{(j)}_p, V\lb z\rb]= \sum_{k=1-j}^{\infty} z^{p-k} {p+j-1 \choose k+j-1} \lb W^{(j)}_{k}V\rb\lb z\rb, 
\ee
\be\label{descVO}
 \left( W^{(j)}_{p} V\right)(z)=\sum_{k=0}^\infty (-z)^k {j+p-1 \choose k} W^{(j)}_{p-k} V\lb z\rb-
\sum_{k=0}^\infty (-z)^{p+j-k-1} {j+p-1 \choose k}  V\lb z\rb W^{(j)}_{1-j+k},
\ee
valid for any descendant $V\lb z\rb$ of the primary vertex operator $V_{\boldsymbol{\theta}}\lb z\rb$ and any $p\in\mathbb Z$ (and, in fact, for any central charge $c$).

The following theorem naturally generalizes the corresponding result for the $W_3$-algebra \cite{BW,KMS} and is valid for any $c$.
\begin{theo}\label{thmgenME}
General matrix elements of the vertex operators of the $W_N$-algebra can be reduced to the following linear combinations:
\be\label{genWME}
 \left\langle \boldsymbol{\theta}''\right| W^\dag_{\boldsymbol{\lambda}''} \left(W_{\boldsymbol{\lambda}}  
 V_{\boldsymbol{\theta}}\right)\lb z\rb W_{\boldsymbol{\lambda}'} \left|\boldsymbol{\theta}'\right\rangle=\sum_{\bs\mu} A_{\bs\mu}\lb z\rb
\left\langle \boldsymbol{\theta}''\right|   V_{\bt}\lb z\rb   W^{(N)}_{-\mu^{(N)}}\cdots
 W^{(3)}_{-\mu^{(3)}} \left|\bt'\right\rangle,
\ee
where the coefficients $A_{\bs\mu}\lb z\rb$ are
labeled by $\boldsymbol{\mu}=(\mu^{(3)},\ldots,\mu^{(N)})\in\mathbb Y^{N-2}$, and the corresponding partitions are restricted so that $\mu^{(j)}_1\le j-2$ for $j=3,\ldots,N$.

\end{theo}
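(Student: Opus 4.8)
The plan is to use the Borcherds identities \eqref{WVcom}--\eqref{descVO} as a rewriting system that moves all the structure of a general matrix element \eqref{gen3pME} onto the right-hand Verma module, and then to exploit the two gradings \eqref{crL0}--\eqref{crW} to control the process. First I would observe that it suffices to treat a matrix element of the bare primary $V_{\bt}(z)$, i.e. with $\bs\lambda=\varnothing$: indeed, \eqref{descVO} with $z$ generic expresses any descendant $\bigl(W^{(j)}_pV\bigr)(z)$ as a $z$-dependent linear combination of terms $W^{(j)}_{p-k}V(z)$ and $V(z)W^{(j)}_{1-j+k}$, so by induction on the total $\mathrm{deg}_{L_0}$-weight of $\bs\lambda$ (peeling off the modes $W^{(j)}_{-\lambda^{(j)}_i}$ acting on $V_{\bt}$ one at a time, innermost first) one reduces $W_{\bs\lambda}V_{\bt}(z)$ to $\sum_{\bs\nu}B_{\bs\nu}(z)\,W^{(\ge2)}_{\text{(left)}}\,V_{\bt}(z)\,W^{(\ge2)}_{\text{(right)}}$ where the ``left'' modes have $p<0$ and the ``right'' modes have $p\ge 1-j$. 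Next I would push the left modes, together with $W^\dag_{\bs\lambda''}$, onto the bra using \eqref{hvvleft}: a mode $W^{(j)}_{k}$ with $k<0$ annihilates $\la\bt''|$, and $W^{(j)}_0$ acts by the scalar $e_j(\bt'')$, while \eqref{WVcom} lets us commute any $W^{(j)}_p$ through $V_{\bt}(z)$ at the cost of descendant insertions that are again handled by the induction above. Symmetrically, modes with $k>0$ are absorbed into $|\bt'\rangle$ via \eqref{hvvright}. After this sweep, every surviving term has the form $\la\bt''|V_{\bt}(z)\,(\text{monomial in }W^{(j)}_{<0})\,|\bt'\rangle$ with a rational-in-$z$ coefficient.

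The remaining task is the combinatorial reduction of the right-hand monomials to the claimed normal form: only generators $W^{(3)},\ldots,W^{(N)}$ appear (no $W^{(2)}=-L_0$), they are ordered $W^{(N)}_{-\mu^{(N)}}\cdots W^{(3)}_{-\mu^{(3)}}$, and each partition obeys $\mu^{(j)}_1\le j-2$. To kill $W^{(2)}$-modes: any $W^{(2)}_{-m}$ with $m\ge1$ can be commuted to the right onto $|\bt'\rangle$ using the Virasoro commutation relations among the $W^{(2)}_p$ and the $\mathrm{deg}_{L_0}$-covariance \eqref{crL0}, the commutators $[W^{(2)}_{-m},W^{(j)}_{-\ell}]$ reintroducing only modes of the same total $L_0$-degree but strictly fewer $W^{(2)}$-factors — so this terminates. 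To impose the ordering and the bound $\mu^{(j)}_1\le j-2$, I would set up a double induction using the $W$-filtration: by \eqref{crW}, reordering two adjacent modes $W^{(j_1)}_{p_1}W^{(j_2)}_{p_2}$ costs a correction of strictly smaller $\mathrm{deg}_W$, and — crucially, and this is where I would lean on the $c$-independent structure established exactly as in \cite{BW} for $N=3$ — a mode $W^{(j)}_{-p}$ with $p\ge j-1$ acting on a product of lower or equal modes on $|\bt'\rangle$ can be rewritten, via the explicit OPE-coefficient (null-vector-type) relations expressing composite currents in terms of the $W^{(i)}$ and derivatives, as a combination of terms each with strictly smaller $\mathrm{deg}_W$ or with a shorter partition. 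Iterating drives every $\mu^{(j)}_1$ down to $\le j-2$; termination is guaranteed because $\mathrm{deg}_W$ is bounded below on each fixed $L_0$-graded piece of the Verma module, which is finite-dimensional.

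The main obstacle is the last step: one must show that the composite-current relations genuinely suffice to lower $\mu^{(j)}_1$ below $j-1$ for every $j$ simultaneously, without the reductions for different $j$ undoing one another. For $N=3$ this is precisely the content of \cite{BW} (only $W^{(3)}_0$ survives on the primary, $\mu^{(3)}_1\le1$), and the quasi-commutativity \eqref{crW} is what makes the higher-$N$ induction go through: at the top of the $W$-filtration the algebra is commutative, so the obstruction to normal-ordering lives in lower filtered degree, and one closes the induction by the usual PBW-for-$W$-algebras argument \cite{Arakawa}. I would organize the bookkeeping by the well-ordered triple (total $L_0$-degree, total $\mathrm{deg}_W$, lexicographic length of $(\mu^{(N)},\ldots,\mu^{(3)})$) and check at each rewrite that this triple strictly decreases, which both proves termination and certifies that the output lies in the span described by \eqref{genWME}. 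The coefficients $A_{\bs\mu}(z)$ are then tracked mechanically through the rewriting, and their rationality in $z$ is manifest from \eqref{WVcom}--\eqref{descVO}.
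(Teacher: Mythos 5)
Your first stage (using the Borcherds identities \eqref{WVcom}, \eqref{descVO} together with \eqref{hvvright}, \eqref{hvvleft} to strip the bra-side descendants and the descendants of $V_{\bt}$ itself, arriving at matrix elements $\left\langle \bt''\right|V_{\bt}\lb z\rb W_{\bs\lambda}\left|\bt'\right\rangle$ with an arbitrary ket-side monomial) coincides with the paper's first two steps. The gap is in your final combinatorial reduction, and it is genuine. First, your treatment of the $W^{(2)}$-modes fails as stated: $W^{(2)}_{-m}\left|\bt'\right\rangle$ with $m\ge 1$ is a nonzero Virasoro descendant, so commuting $W^{(2)}_{-m}$ rightward onto the ket eliminates nothing; the mode must instead be traded against the \emph{bra}, either by commuting it left through $V_{\bt}\lb z\rb$ (using \eqref{WVcom} and $\left\langle\bt''\right|W^{(2)}_{-m}=0$) or by the device described below. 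Second, and more seriously, your mechanism for forcing $\mu^{(j)}_1\le j-2$ — ``null-vector-type relations expressing composite currents in terms of the $W^{(i)}$ and derivatives'' — does not exist at the level you need: the theorem is stated for arbitrary central charge, the Verma module is generically irreducible, and the PBW monomials $W_{\bs\lambda}\left|\bt'\right\rangle$ are linearly independent, so there is no intrinsic relation in the module that lowers a mode $W^{(j)}_{-p}$ with $p\ge j-1$. Quasi-commutativity \eqref{crW} lets you reorder modes up to lower $W$-degree, but it never produces the bound $\mu^{(j)}_1\le j-2$ by itself, so your well-ordered triple has no rewriting rule that actually decreases it at the critical step.

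The missing idea is that the reduction is not internal to the Verma module but uses the presence of $V_{\bt}\lb z\rb$ at the specific point $z$: the paper introduces the $z$-dependent combinations $w^{(j)}_p\lb z\rb=\sum_{k=0}^{j}(-z)^k\binom{j}{k}W^{(j)}_{p-k}$ and $\tilde w^{(j)}_0\lb z\rb$, which satisfy $[V_{\bt}\lb z\rb,w^{(j)}_p\lb z\rb]=0$ and $[V_{\bt}\lb z\rb,\tilde w^{(j)}_0\lb z\rb]=(W^{(j)}_0V_{\bt})\lb z\rb$, hence annihilate $\left\langle\bt''\right|V_{\bt}\lb z\rb$ for $p<0$ or act by the scalars $e_j(\bt'')$, $e_j(\bt)$ (formulas \eqref{AF1}, \eqref{AF2}). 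Re-expanding the ket-side monomial in the PBW-type generating set built from these operators together with the finite core $W^{(N)}_{-\mu^{(N)}}\cdots W^{(3)}_{-\mu^{(3)}}$, $\mu^{(j)}_1\le j-2$, is exactly what eliminates all $W^{(2)}$-modes and all modes $W^{(j)}_{-p}$ with $p\ge j-1$, and it is where the bound $j-2$ comes from (this is the Bowcock--Watts construction for $W_3$, generalized). Without this change of generators, or an equivalent Ward-identity/contour argument exploiting $V_{\bt}\lb z\rb$, your induction does not close.
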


\begin{proof}
First let us move all $W^{(j)}_p$ with $p>0$  in (\ref{gen3pME}) to the right of the vertex operator with the help of  (\ref{WVcom}).
After this procedure we obtain matrix elements of the form (\ref{gen3pME}) but without~$W^\dag_{\boldsymbol{\lambda}''}$.

At the next step, use the identities (\ref{descVO})
to reduce the matrix elements of descendant operators $V\lb z\rb=\left(W_{\tilde{\boldsymbol{\lambda}}}  
 V_{\bt}\right)\lb z\rb$  to those of primary vertex operator $V_{\bt}\lb z\rb$.
Note that for $p<0$ the matrix elements corresponding to the  first sum on the right of (\ref{descVO}) vanish 
due to (\ref{hvvleft}). Therefore after the 2nd step we come to linear combinations of  matrix elements of type
\be\label{1gen3pME}
 \left\langle \boldsymbol{\theta}''\right| 
 V_{\boldsymbol{\theta}}\lb z\rb W_{\boldsymbol{\lambda}} \left|\boldsymbol{\theta}'\right\rangle,
\ee
with so far unrestricted vectors (\ref{stPBW}). 

Finally, let us change the basis (\ref{stPBW}) in the Verma module $\mathsf{M}_{\bt}$.
We will use new generators of the $W_N$-algebra (see \cite{BW} for the $W_3$ case):
\[
 w^{(j)}_p\lb z\rb=\sum_{k=0}^{j}(-z)^k {j \choose k} W^{(j)}_{p-k},\qquad
 \tilde{w}^{(j)}_0\lb z\rb=\sum_{k=1}^{j}(-z)^k {j-1 \choose k-1} W^{(j)}_{-k}.
\]
They satisfy the relations  (for all $p\in\mathbb Z$)
\[
 [V_{\bt}\lb z\rb,  w^{(j)}_p\lb z\rb]=0,\qquad
 [V_{\bt}\lb z\rb,  \tilde{w}^{(j)}_0\lb z\rb]=(W^{(j)}_{0}V_{\bt})\lb z\rb,
\]
which imply the following action formulas:
\be\label{AF1}
 \left\langle \boldsymbol{\theta}''\right| 
 V_{\bt}\lb z\rb  w^{(j)}_p\lb z\rb =0, \qquad p<0\,,
\ee\be \label{AF2}
 \left\langle \bt''\right| 
 V_{\bt}\lb z\rb  w^{(j)}_0\lb z\rb = e_j(\bt'') \left\langle \bt''\right| V_{\bt}\lb z\rb , 
\qquad
 \left\langle \bt''\right|  V_{\bt}\lb z\rb  \tilde{w}^{(j)}_0\lb z\rb =  e_j(\bt) \left\langle \bt''\right| V_{\bt}\lb z\rb.
\ee
The new PBW basis in the Verma module is labeled by $(\boldsymbol{\lambda},\boldsymbol{\mu},\mathbf{k},\tilde{\mathbf{k}})$, 
where 
\begin{align*}
\begin{gathered}
\boldsymbol{\lambda}=\lb\lambda^{(2)},\ldots,\lambda^{(N)}\rb
\in\mathbb Y^{N-1},\quad
\mathbf{k}=\lb k_2,\ldots,k_N\rb\in \mathbb{Z}_{\ge 0}^{N-1},\quad
\tilde{\mathbf{k}}=\lb\tilde{k}_2,\ldots,\tilde{k}_N\rb\in \mathbb{Z}_{\ge 0}^{N-1},\\
\boldsymbol{\mu}=\lb\mu^{(3)},\ldots,\mu^{(N)}\rb\in\mathbb Y^{N-2},\text{ with }\mu^{(j)}_1\le j-2\text{ for }j=3,\ldots,N,
\end{gathered}
\end{align*}
and is given by the vectors
\begin{equation*}\label{newPBW}
 W_{(\boldsymbol{\lambda},\boldsymbol{\mu},\mathbf{k},\tilde{\mathbf{k}})}\left|\bt\right\rangle
 =w^{(N)}_{-\lambda^{(N)}}\cdots
 w^{(2)}_{-\lambda^{(2)}} \bigl(w^{(N)}_{0}\bigr)^{k_N} \bigl(\tilde{w}^{(N)}_{0}\bigr)^{\tilde{k}_N}
 \cdots \bigl(w^{(2)}_{0}\bigr)^{k_2} \bigl(\tilde{w}^{(2)}_{0}\bigr)^{\tilde{k}_2}
 W^{(N)}_{-\mu^{(N)}}\cdots W^{(3)}_{-\mu^{(3)}}
 \left|\bt\right\rangle.
\end{equation*}
Thanks to (\ref{AF1}), (\ref{AF2}), the matrix elements
$\left\langle \boldsymbol{\theta}''\right|   V_{\bt}\lb z\rb W_{(\boldsymbol{\lambda},\boldsymbol{\mu},\mathbf{k},\tilde{\mathbf{k}})}\left|\bt'\right\rangle$ can be expressed in terms of
the matrix elements 
\be\label{redME}
\left\langle \boldsymbol{\theta}''\right|   V_{\bt}\lb z\rb  W_{(\varnothing,\boldsymbol{\mu},\mathbf{0},{\mathbf{0}})}
\left|\bt'\right\rangle=
\left\langle \boldsymbol{\theta}''\right|   V_{\bt}\lb z\rb   W^{(N)}_{-\mu^{(N)}}\cdots
 W^{(3)}_{-\mu^{(3)}} 
 \left|\bt'\right\rangle
\ee
labeled by tuples of partitions $\boldsymbol{\mu}\in\mathbb Y^{N-2}$ which satisfy the above restrictions $\mu^{(j)}_1\le j-2$. 
Note that $\mu^{(j)}$ may be equivalently represented by $j-2$ non-increasing non-negative integers. For example, for $N=4$, the minimal set of matrix elements can be chosen as
\[
\left\langle \boldsymbol{\theta}''\right|   V_{\bt}\bigl( z\bigr)   \bigl( W_{-2}^{(4)}\bigr)^{l_2}\bigl(  W_{-1}^{(4)}\bigr)^{l_1-l_2}\bigl(  W_{-1}^{(3)}\bigr)^{l}
\left|\bt'\right\rangle,\qquad l_1\ge l_2\ge0,\quad l\ge0.
\]
\end{proof}

For generic weights $\bt$, $\bt'$, $\bt''$ the matrix elements (\ref{redME}) can not be related by means of the Ward identities. 
However, if one of these weights is of semi-degenerate type (to be discussed in the next subsection) then all these matrix elements can be expressed in terms of $\left\langle \boldsymbol{\theta}''\right|   V_{\bt}\lb z\rb \left|\bt'\right\rangle$, just as in the case of the Virasoro algebra.

\subsection{Semi-degenerate representations}

We will need special reducible Verma modules with $\bt=a \bs h_1$, where $a$ is a complex number and $\bs h_s$, $s=1,\ldots,N$, are the weights
of the first fundamental representation of $\mathfrak{sl}_N$ with the components 
\be\label{hweight}
 h_s^{(k)}=\delta_{sk}-1/N,\qquad k=1,\ldots,N.
\ee
The irreducible quotient with the highest weight $\bt=a \bs h_1$ is called semi-degenerate representation.
We have $N-2$ relations 
\[
\left[W^{(r)}_{-1}-{N-2\choose r-2} \lb -\frac aN\rb^{r-2} W^{(2)}_{-1}\right]\left|a \bs h_1\right\ra=0, \qquad r=3,\ldots, N,
\]
corresponding to singular vectors on the first level of the $L_0$-gradation in the Verma module.
All the relations needed for derivation on the $p$-th level are given by
\be\label{semdegrel}
 \left[W^{(r)}_{-p}+(-1)^{r+p} \sum_{s=2}^{p+1}  {N-s\choose r-s} {r-s-1\choose p-s+1} \left(\frac{a}{N}\right)^{r-s} W^{(s)}_{-p}\right]\left|a \bs h_1\right\ra=0, 
 \quad 2\le p+1<r\le N,
\ee
and correspond to factoring out  different proper submodules in the Verma module. The derivation of these relations is given in Appendix~A.

The following proposition shows how the relations (\ref{semdegrel}) can be used for further reduction of the matrix elements appearing in (\ref{genWME}).

\begin{prop}\label{propsdvo}
Matrix elements of the semi-degenerate vertex operator $V_{a \bs h_1}\lb z\rb$ and its descendants can be expressed in terms of
the primary matrix element
\be\label{normVO}
\left\la \bt'\right| V_{a \bs h_1}\lb z\rb\left|\bt\right\ra=\mathcal{N}\lb\bt',a \bs h_1,\bt\rb z^{\Delta_{\bt'}-\Delta_{a \bs h_1}-\Delta_{\bt}},
\ee
where $\Delta_{\bt}=-e_2\lb \bt\rb=\bt^2/2$.
\end{prop}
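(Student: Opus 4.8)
The plan is to combine Theorem~\ref{thmgenME}, valid for arbitrary weights, with the semi-degeneracy relations \eqref{semdegrel}, which are special to $\bt = a\bs h_1$. First I would apply Theorem~\ref{thmgenME} with the middle field $V_{a\bs h_1}\lb z\rb$: every matrix element of type \eqref{gen3pME} with $\bt = a\bs h_1$ then becomes a finite combination, with coefficients that are monomials in $z$ fixed by the $L_0$-gradation \eqref{degL0}, of the reduced elements $\la\bt''|V_{a\bs h_1}\lb z\rb\,W^{(N)}_{-\mu^{(N)}}\cdots W^{(3)}_{-\mu^{(3)}}|\bt'\ra$ with $\mu^{(j)}_1\le j-2$. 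So it is enough to express each of these through $\la\bt''|V_{a\bs h_1}\lb z\rb|\bt'\ra$; once that is done, the power $z^{\Delta_{\bt'}-\Delta_{a\bs h_1}-\Delta_{\bt}}$ in \eqref{normVO} is forced in the standard way by the Ward identities for $W^{(2)}_0=-L_0$ (scaling) and $W^{(2)}_{-1}=-L_{-1}$ (translation), leaving $\cN\lb\bt',a\bs h_1,\bt\rb$ as the only free constant.

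For the collapse of the reduced set I would argue recursively, removing the $W$-modes between $V_{a\bs h_1}\lb z\rb$ and $|\bt'\ra$ one at a time, starting from the leftmost one, $W^{(j)}_{-p}$ with $1\le p\le j-2$. Commuting it to the left past $V_{a\bs h_1}\lb z\rb$ via the Borcherds identity \eqref{WVcom}, the term in which it reaches $\la\bt''|$ vanishes by \eqref{hvvleft}, and the $k=0$ piece of the commutator drops because $\binom{j-1-p}{j-1}=0$ for $p\ge1$; what remains is a combination of matrix elements of the descendants $\bigl(W^{(j)}_{-m}V_{a\bs h_1}\bigr)\lb z\rb$ with $p\le m\le j-1$. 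Here the restriction $p\le j-2$ is essential: for $m\le j-2$ the state-field form of \eqref{semdegrel} rewrites $\bigl(W^{(j)}_{-m}V_{a\bs h_1}\bigr)\lb z\rb$ through the strictly lower-spin descendants $\bigl(W^{(s)}_{-m}V_{a\bs h_1}\bigr)\lb z\rb$, $2\le s\le m+1$, which I would then turn back into ``modes acting on $|\bt'\ra$'' by \eqref{descVO} (the sums in \eqref{descVO} are effectively finite inside a matrix element, since positive modes eventually annihilate $|\bt'\ra$) and re-reduce by Theorem~\ref{thmgenME}, bottoming out at $s=2$, where $\bigl(W^{(2)}_{-m}V_{a\bs h_1}\bigr)\lb z\rb=-\bigl(L_{-m}V_{a\bs h_1}\bigr)\lb z\rb$ is an ordinary Virasoro descendant and the matrix element reduces to $\la\bt''|V_{a\bs h_1}\lb z\rb|\bt'\ra$ by the usual conformal Ward identities. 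The one extremal term $m=j-1$, not reached by \eqref{semdegrel}, I would handle by re-expanding $\bigl(W^{(j)}_{-(j-1)}V_{a\bs h_1}\bigr)\lb z\rb$ with \eqref{descVO}: since $j+p-1=0$ there, the two sums collapse to $\bigl[W^{(j)}_{1-j},V_{a\bs h_1}\lb z\rb\bigr]$, whose far-left piece again dies against $\la\bt''|$, leaving $\la\bt''|V_{a\bs h_1}\lb z\rb W^{(j)}_{-(j-1)}\lb\cdots\rb|\bt'\ra$ to feed back into Theorem~\ref{thmgenME}. To organize all this, one fixes a well-founded order on the surviving matrix elements — say the lexicographically ordered triple (largest spin among the modes, largest index within that spin, number of modes) — under which the composite step ``commute past $V_{a\bs h_1}$ / apply \eqref{semdegrel} / re-reduce by Theorem~\ref{thmgenME}'' is non-increasing, and after finitely many iterations strictly decreasing.

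The hard part will be making this recursion terminate. The subtle point is that \eqref{semdegrel} does not touch the ``boundary'' modes $W^{(j)}_{-(j-1)}$, and re-reduction by Theorem~\ref{thmgenME} can momentarily reintroduce them, so one must show that the chosen complexity measure really does decrease and that no cycle appears. Most likely this forces one to combine \eqref{semdegrel} with the structural commutation properties \eqref{crL0}--\eqref{crW} of the $W_N$-algebra — and quite possibly to use the value $c=N-1$, which is what makes the correlators computable — in order to dispose of the states $W^{(j)}_{-(j-1)}|a\bs h_1\ra$ lying outside the span of Virasoro descendants. A few minor checks remain along the way: that re-reduction by Theorem~\ref{thmgenME} never raises the spin, that the truncation of the \eqref{descVO} sums inside matrix elements is as claimed, and that $\cN\lb\bt',a\bs h_1,\bt\rb$ is generically nonzero.
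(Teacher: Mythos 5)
Your skeleton is the same as the paper's: commute the mode $W^{(j)}_{-p}$ past $V_{a\bs h_1}\lb z\rb$ so that the left-acting negative mode dies against $\la\bt''|$ by \eqref{hvvleft}, trade the resulting descendants $\bigl(W^{(j)}_{-m}V_{a\bs h_1}\bigr)\lb z\rb$ with $m\le j-2$ for lower-spin ones via the singular-vector relations \eqref{semdegrel} and convert back with \eqref{descVO}, and fix the primary element (and the power of $z$) from the $L_{-1}$ Ward identity. The genuine gap is exactly where you flag ``the hard part'': your treatment of the extremal descendant $m=j-1$ is circular, and the termination of your recursion is asserted rather than proved. Note that your collapsed form of \eqref{descVO}, namely $\bigl(W^{(j)}_{-(j-1)}V_{a\bs h_1}\bigr)\lb z\rb=\bigl[W^{(j)}_{1-j},V_{a\bs h_1}\lb z\rb\bigr]$, is precisely the $p=j-1$ case of \eqref{comWprim} read backwards: passing from the descendant to $-\la\bt''|V_{a\bs h_1}\lb z\rb W^{(j)}_{-(j-1)}\mathcal{W}|\bt'\ra$ and then reducing that element by moving the mode back to the left returns the element you started from, so nothing is gained. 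Feeding it instead into the PBW re-reduction of Theorem~\ref{thmgenME} does not obviously help either: that theorem is blind to semi-degeneracy, its reduction does not lower the relevant complexity, and your lexicographic measure is not shown to decrease under the composite step — you concede as much, so the proof as proposed does not close.

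The paper closes this gap with one specific extra ingredient that your proposal is missing. The induction is run on the $W$-filtration degree $\deg_W\mathcal{W}$ of \eqref{degW}, and the extremal descendant is eliminated not by \eqref{descVO} but by one additional linear relation: \eqref{comWprim} taken at $p=0$. Its right-hand side contains $z^{j-1}\bigl(W^{(j)}_{-(j-1)}V_{a\bs h_1}\bigr)\lb z\rb$ with coefficient $1$ together with the non-extremal descendants, while its left-hand side $\la\bt'|\bigl[W^{(j)}_{0},V_{a\bs h_1}\lb z\rb\bigr]\mathcal{W}|\bt\ra$ is computable from the induction hypothesis via \eqref{hvvright}, \eqref{hvvleft} and the quasi-commutativity \eqref{crW}, which gives $W^{(j)}_0\mathcal{W}=\mathcal{W}W^{(j)}_0+\widetilde{\mathcal{W}}$ with $\deg_W\widetilde{\mathcal{W}}$ strictly smaller. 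This expresses the $m=j-1$ term through the $1\le m\le j-2$ ones plus quantities already known, after which \eqref{semdegrel} and \eqref{descVO} strictly lower $\deg_W$ and the induction closes, with base case $\la\bt'|V_{a\bs h_1}\lb z\rb|\bt\ra$ fixed by $\la\bt'|\bigl(L_{-1}V_{a\bs h_1}\bigr)\lb z\rb|\bt\ra=\partial_z\la\bt'|V_{a\bs h_1}\lb z\rb|\bt\ra$. To repair your argument, replace your handling of the extremal term by this $p=0$ identity (or prove an equivalent descent statement using \eqref{crW}); without it, the key termination claim remains unproved.
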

\begin{proof}
Theorem~\ref{thmgenME} allows us to start the reduction procedure from the matrix elements of the form
\be\label{ME_WjpW}
\left\la \bt'\right| V_{a \bs h_1}(z) W^{(j)}_{-p} \mathcal{W} \left|\bt\right\ra, \qquad 1\le p\le j-2,
\ee
where $\mathcal{W}$ is a product of the generators of the $W_N$-algebra.
We will reduce such matrix elements to 
$\left\la \bt'\right| V_{a \bs h_1}\lb z\rb  \widetilde {\mathcal{W}}\left|\bt\right\ra$ with $\widetilde {\mathcal{W}}$
having $\deg_W \widetilde {\mathcal{W}}< j-1+ \deg_W \mathcal{W}$, cf (\ref{degW}), (\ref{crW}).
Since $\deg_W  W^{(j)}_{-p}  =j-1$,  it then suffices to use induction in $\deg_W \mathcal{W}$. 

The identity \eqref{WVcom} can be rewritten for  $V\lb z\rb=V_{\boldsymbol{\theta}}\lb z\rb$ and any $p\in\mathbb Z$ as
\be\label{comWprim}
\left[W^{(j)}_{-p}, V_{\boldsymbol{\theta}}\lb z\rb\right]=z^{-p} \sum_{k=0}^{j-1} z^k {j-p-1 \choose j-k-1} \lb W^{(j)}_{-k}V_{\boldsymbol{\theta}}\rb\lb z\rb.
\ee
This commutation relation allows to transform (\ref{ME_WjpW}) into a linear combination of matrix elements
\be\label{ME_WjpVW}
\left\la \bt'\right| \bigl(W^{(j)}_{-p} V_{a\bs h_1}\bigr) \lb z\rb  \mathcal{W} \left|\bt\right\ra
\ee
with $1\le p\le j-1$.
Moreover, we can exclude $\left\la \bt'\right| \bigl(W^{(j)}_{-(j-1)} V_{a \bs h_1}\bigr) \lb z\rb  \mathcal{W} \left|\bt\right\ra$ from this set of matrix elements
using the relation (\ref{comWprim}) with $p=0$, which produces one more linear combination of matrix elements (\ref{ME_WjpVW}). It can be found since  
$\left\la \bt'\right| \bigl[W^{(j)}_{0}, V_{\boldsymbol{\theta}}\lb z\rb\bigr] \mathcal{W} \left|\bt\right\ra$ may be computed independently using 
(\ref{hvvright}), (\ref{hvvleft}) and the fact that 
\[
W^{(j)}_0 \mathcal{W}=\mathcal{W} W^{(j)}_0 +\widetilde {\mathcal{W}},\qquad \deg_W \widetilde {\mathcal{W}}< \deg_W (W^{(j)}_0 \mathcal{W}),
\]
by the induction assumption.  Thus the problem is reduced to finding matrix elements (\ref{ME_WjpVW}) for
$1\le p\le j-2$.

The next step is to use (\ref{semdegrel}). The identity (\ref{descVO}) then produces matrix elements of the form
$\left\la \bt'\right| V_{a \bs h_1}\lb z\rb  \widetilde {\mathcal{W}}\left|\bt\right\ra$ for $\widetilde {\mathcal{W}}$
with $\deg_W \widetilde {\mathcal{W}}< j-1+ \deg_W \mathcal{W}$.  These elements are known by the induction assumption.

The starting matrix element of the induction procedure is $\left\la \bt'\right| V_{a \bs h_1}\lb z\rb\left|\bt\right\ra$. It can be calculated, up to a normalization factor, from the
relation $\left\la \bt'\right| \bigl(L_{-1} V_{a \bs h_1}\bigr)\lb z\rb\left|\bt\right\ra=\partial_z \left\la \bt'\right| V_{a \bs h_1}\lb z\rb\left|\bt\right\ra$, with $L_p=-W^{(2)}_p$. This yields (\ref{normVO}).
\end{proof}

\subsection{Degenerate representations}

We will need even more special irreducible representation with $a=1$, i.e. $\bt=\bs h_1$. It corresponds to 
the first fundamental representation of $\mathfrak{sl}_N$. Another important representation 
has $\bt=-\bs h_N$ and corresponds to the last fundamental representation of $\mathfrak{sl}_N$. 
Irreducible representations with highest weights $\bt=\bs h_1$  and $\bt=-\bs h_N$  are called (completely) degenerate representations.

One may expect that all the normalization coefficients $\mathcal{N}\lb \bt',a\bs h_1,\bt\rb$ (structure constants) are non-zero for generic $a$, $\bt$, $\bt'$. However, 
in the case of degenerate vertex operator with $a=1$ there are additional restrictions on the possible values of $\bt'$ to have non-vanishing
$\mathcal{N}\lb \bt',a\bs h_1,\bt\rb$ due to 
additional singular vectors in the Verma module $\mathsf{M}_{\bs h_1}$. 
\begin{prop}
The fusion rule for $V_{\bs h_1}\lb z\rb$ is
\be\label{fusplus}
V_{\bs h_1}(z)\left|\bt\right\ra=\sum_{s=1}^N \mathcal{N}\lb \bt+\bs h_s,\bs h_1,\bt\rb z^{\Delta_{\bt+\bs h_s}-\Delta_{\bs h_1}-\Delta_{\bt}} \bigl[\left|\bt+\bs h_s\right\ra+O\lb z\rb\bigr].
\ee
Similarly, the fusion rule for $V_{-\bs h_N}\lb z\rb$ is
\be\label{fusminus}
V_{-\bs h_N}\lb z\rb\left|\bt\right\ra=\sum_{s=1}^N \mathcal{N}\lb \bt-\bs h_s,\bs h_1,\bt\rb z^{\Delta_{\bt-\bs h_s}-\Delta_{\bs h_1}-\Delta_{\bt}} \bigl[\left|\bt-\bs h_s\right\ra+O\lb z\rb\bigr].
\ee
\end{prop}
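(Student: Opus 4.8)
The plan is to derive the fusion rule by combining three inputs: the operator product expansion structure of a primary vertex operator, the free-boson representation of $V_{\bs h_1}$, and the singular vectors of the degenerate Verma module $\mathsf M_{\bs h_1}$. First I would expand $V_{\bs h_1}(z)\left|\bt\right\ra$ in the standard way as a sum over intermediate weights $\bt'$ that can appear in the fusion. Since $V_{\bs h_1}(z)=\,:e^{i(\bs h_1,\phi(z))}:$ is a bosonic exponential, the OPE with $V_{\bt}(w)=\,:e^{i(\bt,\phi(w))}:$ produces a factor $(z-w)^{(\bs h_1,\bt)}$ times $:e^{i(\bs h_1,\phi(z))+i(\bt,\phi(w))}:$, and the zero-mode content forces the leading term in any channel to be an exponential vertex operator $V_{\bt+\bs h_1}$. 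However $\bs h_1$ is a weight of the first fundamental representation, not a root, so $V_{\bt+\bs h_1}$ is generically \emph{not} a descendant in $\mathsf M_{\bt}$ but rather the primary of a new module. The key point is that the $W$-currents act on the exponential part through the combinations $e_j(\bt+\bs h_1-\text{(shift)})$; diagonalizing the action of the zero modes $W^{(j)}_0$ on the space generated at each level shows that the only highest-weight vectors that can occur are $\left|\bt+\bs h_s\right\ra$ for $s=1,\ldots,N$, because these are precisely the weights reachable from $\bt$ by adding a weight of the fundamental representation, and the symmetric functions $e_j(\bt+\bs h_s)$ are the eigenvalues of $W^{(j)}_0$ consistent with the OPE exponents. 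This gives the schematic form
\[
V_{\bs h_1}(z)\left|\bt\right\ra=\sum_{s=1}^N c_s(\bt)\, z^{\Delta_{\bt+\bs h_s}-\Delta_{\bs h_1}-\Delta_{\bt}}\bigl[\left|\bt+\bs h_s\right\ra+O(z)\bigr]
\]
with a priori undetermined coefficients $c_s(\bt)$, and the remaining task is to identify $c_s(\bt)=\mathcal N(\bt+\bs h_s,\bs h_1,\bt)$ and to show that \emph{no other} channels appear.

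The identification $c_s(\bt)=\mathcal N(\bt+\bs h_s,\bs h_1,\bt)$ is immediate from the definition of the structure constant: pairing the above expansion with $\left\la\bt+\bs h_s\right|$ and extracting the leading power of $z$ reproduces exactly $\left\la\bt+\bs h_s\right|V_{\bs h_1}(z)\left|\bt\right\ra$, which by Proposition~\ref{propsdvo} (applied with $a=1$) equals $\mathcal N(\bt+\bs h_s,\bs h_1,\bt)\,z^{\Delta_{\bt+\bs h_s}-\Delta_{\bs h_1}-\Delta_{\bt}}$. So the real content of the proposition is the \emph{truncation}: that the sum runs over only the $N$ weights $\bt+\bs h_s$ and nothing else. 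This is where the singular vectors enter. The Verma module $\mathsf M_{\bs h_1}$ is reducible — beyond the $N-2$ singular vectors at level one recorded before \eqref{semdegrel}, there is a further singular vector (the one responsible for $V_{\bs h_1}$ corresponding to a finite-dimensional $\mathfrak{sl}_N$-representation), and quotienting by it kills all would-be intermediate weights except those in the orbit $\{\bt+\bs h_s\}$. Concretely, I would argue that the null-vector relations in $\mathsf M_{\bs h_1}$ translate, via the Borcherds identity \eqref{descVO}, into differential/recursion constraints on the coefficient functions multiplying the hypothetical extra channels, and these constraints have only the trivial solution unless the intermediate weight is one of the $\bt+\bs h_s$.

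The main obstacle I anticipate is making the truncation argument rigorous and clean. In the Virasoro ($N=2$) case this is classical: the level-two null vector in the degenerate module gives a second-order ODE for the conformal block whose indicial equation at $z=0$ has exactly two roots, corresponding to $\bt\pm\bs h$. For $W_N$ the analogous statement requires controlling the singular-vector structure of $\mathsf M_{\bs h_1}$ in full, or equivalently knowing that the character of the degenerate module is that of an $N$-dimensional $\mathfrak{sl}_N$-module tensored with appropriate bosonic Fock data. The cleanest route, and the one I would actually take, is to sidestep the abstract $W$-algebra combinatorics entirely and use the free-field realization: in the bosonic picture $V_{\bs h_1}=\,:e^{i(\bs h_1,\phi)}:$ and the operator product $V_{\bs h_1}(z)V_{\bt}(w)$ is \emph{exactly} a single exponential $(z-w)^{(\bs h_1,\bt)}:e^{i(\bs h_1,\phi(z))+i(\bt,\phi(w))}:$ with no sum at all; the apparent sum over $s$ in \eqref{fusplus} arises only after one decomposes this single object with respect to the action of the $W$-currents and passes to irreducible $W_N$-modules — but since the bosonic Fock module over $\bt+\bs h_1$ already decomposes as a $W_N$-module into a sum running over the Weyl-orbit data that produces precisely the weights $\bt+\bs h_s$ (this is where one invokes \cite{Arakawa} on the structure of Fock modules), the fusion rule follows. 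I would then note that \eqref{fusminus} is obtained identically with $\bs h_1$ replaced by $-\bs h_N$, using the boson–fermion correspondence or simply the symmetry $\phi\mapsto$ (appropriate reflection) that exchanges the first and last fundamental representations; the only change is the sign in $\bt\mapsto\bt-\bs h_s$, and the placement of $\Delta_{\bs h_1}$ in the exponent is legitimate because $\Delta_{-\bs h_N}=\bs h_N^2/2=\bs h_1^2/2=\Delta_{\bs h_1}$.
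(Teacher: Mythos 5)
There is a genuine gap, and it sits exactly at the point you yourself identify as ``the real content of the proposition'': the truncation of the fusion to the $N$ channels $\bt+\bs h_s$. Your chosen route rests on the claim that the bosonic Fock module with momentum $\bt+\bs h_1$ ``decomposes as a $W_N$-module into a sum running over \ldots precisely the weights $\bt+\bs h_s$''. For generic $\bt$ this is false. At $c=N-1$ the character of the rank-$(N-1)$ Fock module equals that of the $W_N$ Verma module (both are $q^{\Delta}\prod_{n\ge1}(1-q^n)^{-(N-1)}$), and for generic momentum the Verma module is irreducible, so $\mathcal F_{\bt+\bs h_1}$ is an \emph{irreducible} $W_N$-module with highest weight $\bt+\bs h_1$; since $e_j(\bt+\bs h_1)\neq e_j(\bt+\bs h_s)$ for $s\neq 1$ generically, it cannot contain highest-weight vectors of the other weights at all. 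Consequently the free-field exponential $:e^{i(\bs h_1,\phi)}:$ realizes only the single channel $\bt\to\bt+\bs h_1$; the other channels are realized by \emph{different} exponentials (in the paper, the $\mathfrak{sl}$-corrected fermions $\psi_\alpha=\,:e^{-i\phi}\psi^+_\alpha:$ of (\ref{slferm})), and no single Fock module ``contains'' all of them. The same objection applies to your earlier sketch of ``diagonalizing the zero modes on the space generated at each level'': in the free-field picture there is only one highest-weight vector in the image, so this cannot by itself show that the abstract degenerate vertex operator has no channels beyond the $N$ listed ones.

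What is actually needed, and what the paper supplies in Appendix~B, is a constraint argument coming from the null vectors of the degenerate module: from $\hat W^{(k)}(z)\psi^+_\alpha(w)\sim \partial^{k-1}\psi^+_\alpha(w)/(z-w)$ one knows the full singular structure of $\Omega^{(k)}_\alpha(z)=\la\bt_\infty|\hat W^{(k)}(z)\psi^+_\alpha(1)|\bt_0\ra$ at $z=0,1,\infty$, hence $\Omega^{(k)}_\alpha$ is an explicit rational function, and matching residues gives, for \emph{every} $k$, the relation $\sum_\alpha[\theta_\infty^{(\alpha)}]_k-k\bigl[\tfrac{\bt_\infty^2-\bt_0^2-1}{2}\bigr]_{k-1}-\sum_\alpha[\theta_0^{(\alpha)}]_k=0$; the falling-factorial identity turns this into equality of all symmetric functions of two $(N+1)$-element sets, which forces $\bt_\infty=\bt_0+\bs h_s$ for some $s$ (after the $\mathfrak{gl}\to\mathfrak{sl}$ dressing), and each such channel is then exhibited as nonvanishing by the corresponding fermion. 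Your first suggested route (null vectors plus the Borcherds identity \eqref{descVO} yielding constraints with ``only the trivial solution'' otherwise) is the right kind of argument, but it is only announced, not carried out, and without it the proposition is not established. The peripheral parts of your write-up are fine: the identification $c_s(\bt)=\mathcal N\lb\bt+\bs h_s,\bs h_1,\bt\rb$ is immediate once the channel list is known, and $\Delta_{-\bs h_N}=\Delta_{\bs h_1}$ correctly justifies the exponent in \eqref{fusminus}.
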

\begin{proof}
The derivation of these fusion rules is given in the Appendix~B.
\end{proof}

In what follows, we will use the projector $\mathcal{P}_\bt$ to the irreducible module with the highest weight $\bt$
and a special notation for the degenerate vertex operators restricted to a particular fusion channel:
\be\label{psifr}
 \psi_{s,\bt}\lb y\rb=\mathcal{P}_{\bt+\bs h_s} V_{\bs h_1}\lb y\rb\mathcal{P}_{\bt},\qquad
 \bar \psi_{s,\bt}\lb y\rb=\mathcal{P}_{\bt-\bs h_s} V_{-\bs h_N}\lb y\rb \mathcal{P}_{\bt}.
\ee
Sometimes we will use shorthand notations $\psi_{s}(y)$ and $\bar \psi_{s}(y)$ if a particular $\bt$ is understood.
Explicitly, the fusion rules for them are
\be\label{fuspsiplus}
\psi_{s}\lb y\rb\left|\bt\right\ra= \mathcal{N}\lb\bt+\bs h_s,\bs h_1,\bt\rb y^{\Delta_{\bt+\bs h_s}-\Delta_{\bs h_1}-\Delta_{\bt}} \bigl[\left|\bt+\bs h_s\right\ra+O\lb y\rb\bigr],
\ee
\be\label{fuspsiminus}
\bar \psi_{s}\lb y\rb\left|\bt\right\ra= \mathcal{N}\lb \bt-\bs h_s,\bs h_1,\bt\rb y^{\Delta_{\bt-\bs h_s}-\Delta_{\bs h_1}-\Delta_{\bt}} \bigl[\left|\bt-\bs h_s\right\ra+O\lb y\rb\bigr].
\ee
The singular parts of their OPEs become
\be\label{OPEpsipsibar}
\psi_s\lb z\rb \bar \psi_{s'}\lb w\rb\sim  \frac{\delta_{s,s'}}{\lb z-w\rb^{(N-1)/N}}, \qquad
\psi_s\lb z\rb \psi_{s'}\lb w\rb\sim 0,\qquad \bar \psi_s\lb z\rb \bar\psi_{s'}\lb w\rb\sim 0,
\ee
provided we choose normalizations so that $\mathcal{N}\lb \bt,\bs h_1,\bt+\bs h_s\rb=\mathcal{N}^{-1}\lb \bt+\bs h_s,\bs h_1,\bt\rb$.

\section{Conformal blocks of $W$-algebras and their  properties\label{sec4}}
\subsection{Hypergeometric conformal blocks}

Below we will need special conformal blocks which can be expressed in terms of hypergeometric functions, and their properties.
In this subsection, the non-vanishing 3-point functions of semi-degenerate vertex operators are normalized by $\mathcal{N}\lb \bt',a\bs h_1,\bt\rb=1$, but later this normalization will be changed to a more convenient one.
In what follows, we will use a shorthand notation $V_a\lb z\rb:=V_{a \bs h_1}\lb z\rb$ for the semi-degenerate vertex operators.

\begin{theo}
The following conformal blocks have hypergeometric expressions:
\be\label{HG0}
 \left\la \bt_{\infty}\right|V_a \lb z\rb \psi_s\lb y\rb\left| \bt_{0}\right\ra=
 z^{\Delta_4-\Delta_3-\Delta_2-\Delta_1}
 \lb  y/z\rb^{\Delta_{1,s}-\Delta_2-\Delta_1} \lb 1- y/z\rb^{-a/N} \mathcal{G}_s\lb y/z\rb,
\ee\be\label{HGi}
 \left\la\bt_{\infty}\right|\psi_s\lb y\rb V_a \lb z\rb\left|\bt_{0}\right\ra=
 y^{\Delta_4-\Delta_3-\Delta_2-\Delta_1}
 \lb z/y\rb^{\Delta_{4,s}-\Delta_1-\Delta_3}  \lb 1-z/y\rb^{-a/N} \mathcal{G}'_s\lb z/y\rb,
\ee
where, recalling the notation $\Delta_{\bt}=\bt^2/2$ for conformal weights, we have
\[
 \Delta_1=\Delta_{\bt_{0}},\qquad 
 \Delta_2=\Delta_{\bs h_1}=\tfrac{N-1}{2N},\qquad
 \Delta_3=\Delta_{a \bs h_1}=a^2\cdot\tfrac{N-1}{2N},\qquad
 \Delta_4=\Delta_{ \bt_{\infty}},
 \]\[
 \Delta_{1,s}=\Delta_{ \bt_{0}+\bs h_s},\qquad 
 \Delta_{4,s}=\Delta_{\bt_{\infty}-\bs h_s},
\] 
\[
\mathcal{G}_s\lb x\rb={}_NF_{N-1}\left( \genfrac{}{}{0pt}{0}
{\{(N-a-1)/N+\theta_{0}^{(s)}-\theta_{\infty}^{(k)}\}_{k=\overline{1,N}}}
{\{1+\theta_{0}^{(s)}-\theta_{0}^{(k)}\}_{k=\overline{1,N};\, k\ne s}}\,\Biggl| x\right),
\]\[
\mathcal{G}'_s\lb x\rb={}_NF_{N-1}
\left( \genfrac{}{}{0pt}{0}{
\{(N-a-1)/N+\theta_{0}^{(k)}-\theta_{\infty}^{(s)}\}_{k=\overline{1,N}}}
{\{1+\theta_{\infty}^{(k)}-\theta_{\infty}^{(s)}\}_{k=\overline{1,N};\, k\ne s}}\,\Biggl | x \right).
\]
\end{theo}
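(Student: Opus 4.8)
The plan is to show that the 4-point conformal blocks on the left-hand side of \eqref{HG0}, \eqref{HGi} satisfy the same rigid $N$-th order Fuchsian ODE in the cross-ratio as the hypergeometric function ${}_NF_{N-1}$ appearing on the right, and then match the local exponents and the normalization. First I would use the three relevant structural results from Section~3: by Proposition~\ref{propsdvo} all matrix elements of $V_a(z)$ and its descendants between generic weights reduce to the single primary function $\mathcal N(\bt',a\bs h_1,\bt)\,z^{\Delta_{\bt'}-\Delta_3-\Delta_{\bt}}$; by the fusion rule \eqref{fuspsiplus} the degenerate field $\psi_s(y)$ inserts the weight shift $\bt_0\mapsto\bt_0+\bs h_s$ with a prescribed leading power of $y$; and the singular vectors \eqref{semdegrel} for the semi-degenerate module (together with the Borcherds identities \eqref{WVcom}, \eqref{descVO}) close the Ward identities. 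Concretely, the null-vector relation at level one for $V_a$, namely $\left[W^{(r)}_{-1}-\binom{N-2}{r-2}(-a/N)^{r-2}W^{(2)}_{-1}\right]V_a=0$ for $r=3,\dots,N$, together with the single second-order null vector coming from the degenerate field $\psi_s$ (which at $c=N-1$ reduces all $W^{(r)}_{-1}$ acting on $V_{\bs h_1}$ to powers of $L_{-1}$), forces the correlator, as a function of $x=y/z$, to obey an $N$-th order linear ODE with regular singular points at $x=0,1,\infty$.

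The key steps, in order, are: (i) insert $\mathbb I=\sum_{\bt}\mathcal P_{\bt}$ in the appropriate channel and use Proposition~\ref{propsdvo} to write the block as a sum over intermediate weights of products of two 3-point structure constants times explicit powers, thereby exhibiting the prefactors $z^{\Delta_4-\Delta_3-\Delta_2-\Delta_1}(y/z)^{\Delta_{1,s}-\Delta_2-\Delta_1}(1-y/z)^{-a/N}$ and reducing the problem to identifying the remaining function $\mathcal G_s(y/z)$; (ii) translate the collection of null-vector constraints into a differential equation for $\mathcal G_s$ — this is where the $W$-algebra computation happens, and one uses \eqref{comWprim} to commute the modes $W^{(j)}_{-p}$ through $V_a$ and pick up the stripped descendants, then \eqref{semdegrel} to eliminate the higher-spin descendants in favour of Virasoro ones; (iii) read off the characteristic exponents at $x=0$ (from the possible fusion weights $\bt_0+\bs h_k$, giving exponents proportional to $\theta_0^{(k)}-\theta_0^{(s)}$), at $x=1$ (governed by the $(N-1,1)$ spectral type of $V_a$, giving a single nontrivial exponent $-a/N$ and $N-1$ trivial ones), and at $x=\infty$ (from the channel $\bt_\infty-\bs h_k$, giving exponents built from $\theta_\infty^{(k)}$); (iv) invoke rigidity — an $N$-th order Fuchsian equation on $\mathbb P^1$ with three singular points and these exponents is the hypergeometric equation of Clausen–Thomae type, hence its holomorphic-at-$0$ solution normalized to $1$ is exactly ${}_NF_{N-1}$ with the stated parameters; (v) fix the overall constant by the $x\to0$ limit, where the block degenerates to a product of two 3-point functions normalized by $\mathcal N=1$, giving $\mathcal G_s(0)=1$. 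The second identity \eqref{HGi} follows by the identical argument after exchanging the roles of $z$ and $y$ (equivalently, by the $x\mapsto1/x$ transformation of the hypergeometric equation), with the channel now $\bt_\infty-\bs h_s$ and cross-ratio $z/y$.

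The main obstacle I anticipate is step (ii): verifying that the finitely many null-vector relations \eqref{semdegrel} for the semi-degenerate module, combined with the $c=N-1$ degenerate relations for $\psi_s$, actually suffice to pin down a single $N$-th order ODE rather than an overdetermined or underdetermined system. One must check that after using Proposition~\ref{propsdvo} to reduce to the minimal matrix elements \eqref{redME}, the descendant correlators $\left\la\bt_\infty\right|V_a(z)W^{(N)}_{-\mu^{(N)}}\cdots W^{(3)}_{-\mu^{(3)}}\psi_s(y)\left|\bt_0\right\ra$ that a priori appear are all expressible through the single unknown function and its derivatives — this is essentially the statement that the rigid local system associated to the $(1^N,(N-1,1),1^N)$ spectral data has a one-dimensional space of deformations of each solution, matching $\dim\mathcal M_{\Theta,\mathrm{s-d}}=0$ for $n=3$ from \eqref{dimsemideg}. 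Once this closure is established, comparing with the solution \eqref{FS3psol} of the three-point Fuchsian system in Subsection~2.3 (which already has the ${}_NF_{N-1}$ form with matching parameters after the substitution $a\mapsto a+1$ accounting for the extra degenerate insertion) provides an independent cross-check of the exponent bookkeeping in steps (iii)–(v).
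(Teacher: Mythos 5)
Your route is the Fateev--Litvinov-type BPZ argument (derive an $N$-th order Fuchsian ODE in the cross-ratio from null vectors, then identify it with the Clausen--Thomae equation), which is genuinely different from what the paper does. The paper derives no differential equation at all: it assembles the $N$ blocks $\langle -\bt_\infty+\bs h_m|\psi_j\lb y\rb V_a\lb 1\rb|\bt_0\rangle$, $j=1,\dots,N$, into a matrix $\tilde\Phi\lb y\rb$, observes that each column spans an $N$-dimensional space of conformal blocks closed under analytic continuation in $y$, reads off the eigenvalues of the monodromies around $0,1,\infty$ from the fusion rules, and then invokes Lemma~\ref{KatzLemma} to conclude that, after adjusting the diagonal constants $K_{jm}$, this monodromy representation coincides with that of the explicit hypergeometric solution \eqref{FS3psol} of the rigid $3$-point Fuchsian system of Section~2.3; a Liouville-type argument ($\Phi^{-1}\tilde\Phi$ single-valued with matching local exponents, hence constant, hence $\mathbb I$) finishes the proof. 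So the paper trades Ward-identity computations for monodromy rigidity plus the already-available $_NF_{N-1}$ solution.

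Judged as a proof, your proposal has two genuine gaps. First, step (ii) --- the closure of the $W_N$ Ward identities with one degenerate and one semi-degenerate insertion into a single $N$-th order ODE --- is exactly the nontrivial content of this approach, and you leave it unproved (you flag it as an ``anticipated obstacle''). It is not automatic: one must combine the level-$\lb k-1\rb$ null vectors of $\psi_s$ (which at $c=N-1$ express $W^{(k)}_{1-k}$ acting on the degenerate field through powers of $L_{-1}$, cf.\ Appendix~B) with the relations \eqref{semdegrel} eliminating $W^{(r)}_{-1}$ on $V_a$, and verify that no further unknown descendant correlators survive; your heuristic appeal to $\dim\mathcal M_{\Theta,\mathrm{s-d}}=0$ is a statement about monodromy data, not about the module-theoretic closure of the identities, so it does not substitute for the computation. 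Second, step (iv) is wrong as stated: for $N\ge 3$ a Fuchsian ODE of order $N$ with three regular singular points is \emph{not} determined by its local exponents --- there remain $\lb N-1\rb\lb N-2\rb/2$ accessory parameters. What singles out the $_NF_{N-1}$ equation is the extra condition that the singular point carrying exponents $0,1,\dots,N-2$ plus one nontrivial exponent is non-logarithmic, i.e.\ its local monodromy is a pseudo-reflection (the $\lb N-1,1\rb$ spectral type at the level of monodromy, as in Lemma~\ref{KatzLemma}); this must be extracted from the fusion of $\psi_s$ with $V_a$, not from the exponents alone. Both points are repairable (this is essentially the computation of \cite{FL1}), but as written the argument does not yet establish the theorem, whereas the paper's monodromy route bypasses both issues.
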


\begin{proof}
The idea of the proof is to use rigidity of the 3-point Fuchsian system (\ref{FS3p}). 
We claim that the solution of the system 
can be given in terms of 4-point conformal blocks with degenerate fields $\psi_j\lb y\rb$ and a semidegenerate field $V_{a}\lb 1\rb=V_{a \bs h_1}\lb 1\rb$:
\[
\tilde \Phi_{jm}\lb y\rb=K_{jm}\left  \langle -\boldsymbol{\theta}_\infty +\bs h_m \right| \psi_j\lb y\rb V_a\lb 1\rb \left|\boldsymbol{\theta}_0 \right\rangle
\underset{y\to\infty}{=}K_{jm}  y^{-\theta_\infty^{(j)}+\delta_{jm}-1}\bigl[1+ O\lb y^{-1}\rb \bigr]\,,
\]
where we used fusion rules (\ref{fuspsiplus}) to derive the asymptotics as $y\to \infty$. The constants  $K_{jm}$ will be fixed later.
The leading asymptotics of the matrix elements of $\tilde{\Phi}\lb y\rb$ as $y\to \infty$ can also be rewritten as
\[
\tilde \Phi_{jm}\lb y\rb =  y^{-\theta_\infty^{(j)}}\bigl[\delta_{jm} K_{jj}+ O\lb y^{-1}\rb \bigr]\,.  
\]

Each column $m$ of the matrix $\tilde \Phi\lb y\rb$ constitutes a basis in the $N$-dimensional space $\mathcal{V}_m$ of conformal blocks with 
fixed external weights labeled by $\bt_0$, $\bt_1=a \bs h_1$, $\bs h_1$, $-\bt_\infty+\bs h_m$. This space is invariant under analytic continuation in $y$ around 
$z_0=0$, $z_1=1$ and $z_2=\infty$
giving the monodromies $M^{(m)}_0$, $M^{(m)}_1$ and  $M^{(m)}_\infty$, respectively.
The spectra of these monodromy matrices are independent of $m$ and can be found from the conformal dimensions of fields using (\ref{fuspsiplus}).
Namely, all these bases are associated to different channels corresponding to the fusion of the degenerate field $\psi_j(y)$ with 
a generic primary at $z_2=\infty$. 

There are two more bases in each of $\mathcal{V}_m$ associated to the channels 
corresponding to the fusion of the degenerate field $\psi_j(y)$ with the fields at $z_0=0$ and $z_1=1$, respectively.
The leading terms (up to constant prefactors) of the basis conformal blocks for each $\mathcal{V}_m$ at $z_0=0$ are 
$y^{\boldsymbol\theta_0}=(y^{\theta_0^{(1)}},\ldots,y^{\theta_0^{(N)}})$.
Similarly, the leading terms of the basis conformal blocks for each $\mathcal{V}_m$ at $z_1=1$ are 
$\lb y-1\rb^{\boldsymbol\theta_1}=
\bigl( \lb y-1\rb^{\theta_1^{(1)}},\ldots,
\lb y-1\rb^{\theta_1^{(N)}}\bigr)$\footnote{Since the spectrum
of $\Theta_1$ is degenerate of spectral type $\lb N-1,1\rb$, actually there is an ambiguity in the choice of the basis in the space of conformal blocks:
taking their differences it is possible to choose another basis in which the leading terms  are multiplied 
by $\lb y-1\rb^k$, $k\in \mathbb{Z}_{>0}$.}.
These two bases are distinguished by the property of having diagonal monodromies $\exp 2\pi i \Theta_0$ and $\exp 2\pi i \Theta_1$ under analytic continuation around $z_0$ and $z_1$, respectively.
In the initial basis, which is distinguished by the property of having diagonal monodromy around $z_2=\infty$, 
we have
\[
M^{(m)}_0= C^{(m)}_0 e^{2\pi i \Theta_0} {\bigl( C^{(m)}_0\bigr)}^{-1},\quad 
M^{(m)}_\infty= e^{2\pi i \Theta_\infty},\quad
M^{(m)}_1= C^{(m)}_1 e^{2\pi i \Theta_1}\bigl(C^{(m)}_1\bigr)^{-1},
\]
where the diagonal matrices $\Theta_0$, $\Theta_1$ and $\Theta_\infty$ are given by (\ref{FS3pTh}).

{}It follows from Lemma~\ref{KatzLemma} that the triples $\bigl( M^{(m)}_0, M^{(m)}_1, M^{(m)}_\infty\bigr)$ of monodromy matrices  for different $m$ are related by an overall similarity transformation. Since $ M^{(m)}_\infty$ are already diagonal for all $m$ and have simple spectrum,
the remaining freedom is given by the conjugation by diagonal matrices.   
Such type of similarity transformations corresponds to choosing the coefficients $K_{jm}$, $j=1,\ldots,N$, for each $m$. Fix these coefficients
so that all triples $\bigl(M^{(m)}_0, M^{(m)}_1, M^{(m)}_\infty\bigr)$ coincide 
with the triple $\bigl(M_0, M_1, M_\infty\bigr)$ of monodromy matrices  of the actual fundamental 
solution $\Phi(y)$ given by 
(\ref{FS3psol}).
Then the elements of the matrix $\Phi\lb y\rb^{-1} \tilde\Phi\lb y\rb$ are given by single-valued meromorphic functions with the only possible poles at $0$, $1$ or $\infty$. 
However, $\Phi\lb y\rb$ and $\tilde\Phi\lb y\rb$ have the same local monodromy exponents, hence this matrix is in fact constant. From the normalization of $\Phi\lb y\rb$ it follows that the constant matrix is diagonal and may be chosen as the identity matrix, in which case $K_{jj}=1$ and
$\Phi\lb y\rb=\tilde\Phi\lb y\rb$. 

The diagonal elements of the latter relation give the hypergeometric representation \eqref{HGi} with $z=1$;
arbitrary $z$ may be obtained by a scale transformation of 
conformal blocks. The proof of \eqref{HG0} is completely analogous.
\end{proof}

Conformal blocks (\ref{HG0}) are defined as convergent series for $|y|<|z|$. Their analytic continuation in $y$ to the region with $|y|>|z|$ can be compared to conformal blocks (\ref{HGi}), given by convergent series in the latter domain.
The relation between the two sets of conformal blocks may be expressed with the help of well-known $_NF_{N-1}$ connection formulas, see e.g. \cite{Norlund}:
\be\label{FtrME}
  \left\la \bt_{\infty}\right|V_a \lb z\rb \psi_l\lb y.\xi_z\rb\left| \bt_{0}\right\ra=
  \sum_{j=1}^N e^{-i\pi((N-1)/N+\theta_{0}^{(l)}-\theta_{\infty}^{(j)})} \tilde F^{[\infty,0]}_{lj}\lb \bt_{\infty},a,\bt_{0} \rb
\left\la\bt_{\infty} \right|\psi_j\lb y\rb V_a \lb z\rb\left| \bt_{0}\right\ra.
\ee
Here $y.\xi_z$ stands for the analytic continuation along a contour $\xi_z$ going around $z$ in the positive  direction and $\tilde F^{[\infty, 0]}$ is the \textit{fusion matrix} ``$0\to \infty$'' with the elements
\[
 \tilde F^{[\infty,0]}_{lj}\lb \bt_{\infty},a,\bt_{0}\rb=
 \frac{ \prod_{k(\ne l)} \Gamma(1+\theta_{0}^{(l)}-\theta_{0}^{(k)}) \cdot \prod_{k(\ne j)}\Gamma(\theta_{\infty}^{(j)}-\theta_{\infty}^{(k)})}
  {\prod_{k(\ne l)} \Gamma((1+a)/N+\theta_{\infty}^{(j)}-\theta_{0}^{(k)}) \cdot
  \prod_{k(\ne j)} \Gamma((N-1-a)/N+\theta_{0}^{(l)}-\theta_{\infty}^{(k)}) },
\]
where $\Gamma\lb x\rb$ is the gamma function.

In fact, the fusion matrix in  (\ref{FtrME}) will not change if instead of the vectors 
$\left| \bt_{0}\right\ra$ and $\left| \bt_{\infty}\right\ra$ we use any of their descendants, obtained by the action of the creation operators $W^{(j)}_k$, ${k<0}$. This is due to the fact that all such conformal blocks can be obtained  by an action of differential operators in $y$ and $z$, which
commutes (intertwines) with the crossing symmetry transformation (\ref{FtrME}). Combining the vertex operators
$\psi_l\lb y\rb$, $l=1,\ldots,n$, into a column matrix $\Psi\lb y\rb$, one may therefore rewrite (\ref{FtrME}) as an operator relation
\be\label{Vpsit}
 \mathcal{P}_{\bt_{\infty}}V_{a}\lb z\rb\Psi\lb y.\xi_z\rb \mathcal{P}_{\bt_{0}}
 =     B^{-1}\lb \bt_{0}\rb\tilde F^{[\infty,0]}\lb\bt_{\infty},a,\bt_{0}\rb  B'\lb \bt_{\infty}\rb \cdot
    \mathcal{P}_{\bt_{\infty}} \Psi\lb y\rb V_{a}\lb z\rb \mathcal{P}_{\bt_{0}},
\ee
where the braiding matrices $B$ and $B'$ are diagonal and their non-zero elements are given by
\be\label{MEBBp}
 B_{ll}\lb \bt_{0}\rb=\exp{i \pi \theta_{0}^{(l)}},\qquad
 B'_{jj}\lb \bt_{\infty}\rb=\exp{i \pi \bigl(\theta_{\infty}^{(j)}-\tfrac{N-1}{N}\bigr)}\,.
\ee

Analytic continuation of $\psi_l\lb y\rb\left|\bt_0\right\ra$ in $y$ around $0$ in the positive direction leads to multiplication by the diagonal braiding matrix
$B^2(\bt_0)$. Indeed,
\[
B^2_{ll}\lb \bt_{0}\rb=
 \exp\left\{{2\pi i}\bigl(\Delta_{\bt_{0}+ \bs h_l}-\Delta_{\bt_{0}}-\Delta_{ \bs h_l}\bigr)\right\}=\exp{2 \pi i \theta_{0}^{(l)}}.
\]
Taking into account that the braiding matrix is the same for  descendants of $\left|\bt_{0}\right\ra$, we write the braiding relation as
\be\label{psiB}
 \Psi\lb ye^{2\pi i}\rb\mathcal{P}_{\bt_{0}}=B^2\lb\bt_{0}\rb\cdot  \Psi\lb y\rb\mathcal{P}_{\bt_{0}}.
\ee
Similarly, the analytic continuation of $\left\la \bt_{\infty}\right|\psi_j\lb y\rb$ in $y$ around $\infty$ in the 
\textit{negative} direction leads to  multiplication by the diagonal braiding matrix
${B'}^2\lb \bt_{\infty}\rb$, whose non-vanishing elements are
\[
{B'}^{2}_{jj}\lb\bt_{\infty}\rb=
 \exp\left\{ {2\pi i}\bigl(\Delta_{\bt_{\infty}}-\Delta_{\bt_{\infty}- \bs h_j}-\Delta_{ \bs h_j}\bigr)\right\}=
 \exp{2 \pi i  \bigl(\theta_{\infty}^{(j)}-\tfrac{N-1}{N}\bigr)}.
\]
This leads to
\be\label{psiBp}
\mathcal{P}_{\bt_{\infty}}\Psi\lb ye^{2 \pi i}\rb= {B'}^2\lb\bt_{\infty}\rb\cdot\mathcal{P}_{\bt_{\infty}}\Psi\lb y\rb.
\ee

Finally, let us introduce the following formal transformations which are a consequence of the definition (\ref{psifr}):
\[
 \psi_s\lb y\rb\mathcal{P}_{\boldsymbol{\sigma}}=\mathcal{P}_{\boldsymbol{\sigma}+\bs h_s}\psi_s\lb y\rb=
 \nabla_{\boldsymbol{\sigma},s} \mathcal{P}_{\boldsymbol{\sigma}}\psi_s\lb y\rb,
\]
where $\nabla_{\boldsymbol{\sigma},s}$ is the shift operator defined by $\nabla_{\boldsymbol{\sigma},s}\, \mathcal{F}\lb \boldsymbol{\sigma}\rb=\mathcal{F}(\boldsymbol{\sigma}+\bs h_s)$
for any function $\mathcal{F}$ depending on $\boldsymbol{\sigma}$.
We combine the shifts $\nabla_{\boldsymbol{\sigma},s}$, $s=1,\ldots,N$, into the diagonal matrix
$
 \nabla_{\boldsymbol{\sigma}}=\mathrm{diag}\lb \nabla_{\boldsymbol{\sigma},1},\ldots,\nabla_{\boldsymbol{\sigma},N}\rb
$,
to write compactly
\be\label{PsiP}
  \Psi\lb y\rb\mathcal{P}_{\boldsymbol{\sigma}}= \nabla_{\boldsymbol{\sigma}} \mathcal{P}_{\boldsymbol{\sigma}}\Psi\lb y\rb.
\ee
Note that there is a useful relation between the two types of braiding,
\be\label{BBp}
 B\lb \bs \sigma\rb \nabla_{\bs \sigma}= \nabla_{\bs \sigma}  {B'}\lb \bs \sigma\rb.
\ee

\subsection{Normalization of vertex operators and properties 
of fusion matrices}

Let us change the normalization of the vertex operators in (\ref{normVO}): instead of $\cN \lb \bs\sigma',a\bs h_1,\bs\sigma\rb=1$, we will use 
\be\label{Nnormcoef}
\cN\lb \bs\sigma',a\bs h_1,\bs\sigma\rb=
\frac{\prod_{l,j} G(1-a/N+\sigma^{(l)}-\sigma'^{(j)})} 
 {\prod_{k<m}G(1+\sigma^{(k)}-\sigma^{(m)})\,
 G(1-\sigma'^{(k)}+\sigma'^{(m)})},
\ee
or
\be\label{Nchecknormcoef}
\check{\cN}\lb \bs\sigma',a\bs h_1,\bs\sigma\rb=
 \frac{\prod_{l,j} G(1+a/N-\sigma^{(l)}+\sigma'^{(j)})} 
 {\prod_{k<m}G(1+\sigma^{(k)}-\sigma^{(m)})\,
 G(1-\sigma'^{(k)}+\sigma'^{(m)})},
\ee
where $G\lb x\rb$ is the Barnes $G$-function. 
Both expressions can be employed for the generic semi-degenerate vertex operators (we will use the first one),
but for the degenerate field $\psi_s\lb z\rb$ we can use only the second expression since the first one 
becomes singular:
\be\label{normpsi}
\left\la \bs\sigma + \bs h_s \right| \psi_s\lb 1\rb \left| \bs\sigma \right\ra = \check{\cN}\lb \bs\sigma+\bs h_s,\bs h_1,\bs\sigma\rb.
\ee
In order to preserve the operator product expansion (\ref{OPEpsipsibar}) between $\psi_s\lb z\rb$ and $\bar\psi_s\lb w\rb$, we choose the following normalization for the degenerate field 
$\bar\psi_s\lb w\rb$:
\be\label{normpsibar}
\left\la \bs\sigma \right| \bar\psi_s\lb 1\rb \left| \bs\sigma + \bs h_s \right\ra = \check{\cN}^{-1}\lb \bs\sigma+\bs h_s,\bs h_1,\bs\sigma\rb.
\ee

Straightforward calculation yields
\[\begin{gathered}
\mathsf{N}_{lj}\lb \bs\sigma',a,\bs\sigma\rb:= \frac{\cN\lb \bs\sigma',a \bs h_1,\bs\sigma+\bs h_l\rb{\check \cN}\lb \bs\sigma+\bs h_l, \bs h_1 ,\bs\sigma\rb}
      {{\check \cN}\lb \bs\sigma',  \bs h_1 ,\bs\sigma'-\bs h_j\rb \cN\lb \bs\sigma'-\bs h_j, a  \bs h_1 ,\bs\sigma\rb}=
\\
     = \frac{\prod_{k(\ne j)}\Gamma((N-1-a)/N+\sigma^{(l)}-\sigma'^{(k)}))}
           {\prod_{k(\ne l)}\Gamma((N-1-a)/N+\sigma^{(k)}-\sigma'^{(j)}))}\cdot
           \frac{\prod_{k(\ne l)}\Gamma(\sigma^{(k)}-\sigma^{(l)})}
           {\prod_{k(\ne j)}\Gamma(\sigma'^{(j)}-\sigma'^{(k)})}.
\end{gathered}           
\]
Therefore the new fusion matrix $F\lb \bs\sigma',a,\bs\sigma \rb\equiv F^{[\infty,0]}\lb \bs\sigma ',a,\bs\sigma\rb$ is given by  
\be \label{Fmat}
F_{lj}\lb \bs\sigma',a,\bs\sigma\rb=\mathsf{N}_{lj}\lb \bs\sigma',a,\bs\sigma\rb \tilde  F^{[\infty,0]}_{lj}\lb\bs\sigma',a,\bs\sigma\rb
=\prod_{k(\ne l)} \frac{ \sin\pi((a+1)/N+\sigma'^{(j)}-\sigma^{(k)})}
{\sin\pi(\sigma^{(k)}-\sigma^{(l)})}.
\ee
Its trigonometric form is the crucial advantage of the modified normalization of vertex operators. Note that the inverse fusion matrix 
admits a simple expression:
\be\label{FiF} 
F^{-1}\lb \bs\sigma', a, \bs\sigma\rb= F\lb-\bs\sigma, a, -\bs\sigma'\rb.
\ee  The braiding matrices $B\lb \bs\sigma\rb$, $B'\lb\bs\sigma\rb$ defined by (\ref{MEBBp}) are not affected by the change of normalization. Therefore, the fusion relation (\ref{Vpsit}) transforms into 
\be\label{Vpsi}
 \mathcal{P}_{\bs\sigma'}V_{a}\lb z\rb\Psi\lb y.\xi_z\rb  \mathcal{P}_{\bs\sigma}
 =     B^{-1}\lb \bs\sigma\rb F\lb \bs\sigma',a,\bs\sigma\rb  B'\lb \bs\sigma'\rb \cdot
    \mathcal{P}_{\bs\sigma'} \Psi\lb y\rb V_{a}\lb z\rb \mathcal{P}_{\bs\sigma}.
\ee

In what follows, we will also need additional easily verifiable properties of $F\lb \bs \sigma',a,\bs \sigma\rb$.

\begin{prop} The fusion matrix $F\lb \bs \sigma',a,\bs \sigma\rb$ satisfies the following shift transformations:
\begin{equation}\label{shiftid}
\begin{gathered}
 F\lb \bs\sigma',a,\bs\sigma\pm \bs h_m\rb = D_m^{N-1} F\lb \bs\sigma',a\pm 1,\bs\sigma\rb, \\
 F\lb \bs\sigma'\pm \bs h_m,a,\bs\sigma\rb = F\lb\bs\sigma',a\mp 1,\bs\sigma\rb D_m^{N-1},
\end{gathered}
\end{equation}
where $D_m$ is a diagonal matrix with matrix elements $\lb D_m\rb_{kj}=\lb -1\rb^{\delta_{km}}\delta_{kj}$
and the weight vectors $\bs h_m$ are given by (\ref{hweight}).
Compositions of the transformations (\ref{shiftid}) give the following identities:
\be\label{shiftroot}
\begin{gathered}
 F\lb \bs \sigma',a,\bs\sigma+\bs h_m-\bs h_s\rb = D^{N-1}_m D^{N-1}_s F\lb \bs \sigma',a,\bs\sigma\rb,\\
 F\lb \bs\sigma'+\bs h_m-\bs h_s,a,\bs\sigma\rb = F\lb \bs\sigma',a,\bs\sigma\rb D^{N-1}_m D^{N-1}_s .
\end{gathered}
\ee
\end{prop}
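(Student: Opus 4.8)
The plan is to verify the first pair of shift relations in \eqref{shiftid} directly from the explicit trigonometric form \eqref{Fmat} of the fusion matrix, and then obtain \eqref{shiftroot} by composing these. Recall that
\[
F_{lj}\lb \bs\sigma',a,\bs\sigma\rb=\prod_{k(\ne l)} \frac{ \sin\pi\bigl((a+1)/N+\sigma'^{(j)}-\sigma^{(k)}\bigr)}{\sin\pi\bigl(\sigma^{(k)}-\sigma^{(l)}\bigr)}.
\]
First I would compute $F_{lj}\lb \bs\sigma',a,\bs\sigma+\bs h_m\rb$. Since $\lb\bs h_m\rb^{(k)}=\delta_{mk}-1/N$, replacing $\bs\sigma$ by $\bs\sigma+\bs h_m$ shifts every component $\sigma^{(k)}\mapsto\sigma^{(k)}+\delta_{mk}-1/N$. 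In the numerator, each factor has argument $(a+1)/N+\sigma'^{(j)}-\sigma^{(k)}$, which becomes $(a+1)/N+1/N+\sigma'^{(j)}-\sigma^{(k)}-\delta_{mk}=\bigl((a+1)+1\bigr)/N+\sigma'^{(j)}-\sigma^{(k)}-\delta_{mk}$; the $+1/N$ is exactly the shift $a\mapsto a+1$ inside $F$, while the extra $-\delta_{mk}$ shifts the argument by an integer only when $k=m$, producing a sign $(-1)^{\delta_{mk}}$ in the $N-1$ numerator factors (one for each $k\neq l$). In the denominator the arguments become $\sigma^{(k)}-\sigma^{(l)}+\delta_{mk}-\delta_{ml}$; again this is an integer shift, nonzero only for $k=m$ or $l=m$, so each denominator factor picks up a sign $(-1)^{\delta_{mk}+\delta_{ml}}$. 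Collecting the signs and matching them against the stated right-hand side $D_m^{N-1}F\lb\bs\sigma',a+1,\bs\sigma\rb$ — where $D_m^{N-1}$ is diagonal with entries $(-1)^{(N-1)\delta_{km}}$ acting on the row index $l$ — is the bookkeeping one has to get right.

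The key point in that bookkeeping: the net sign attached to $F_{lj}$ from the numerator is $\prod_{k\neq l}(-1)^{\delta_{mk}}=(-1)^{1-\delta_{ml}}$ (there is exactly one index $k=m$ among $k\neq l$ unless $l=m$), and from the denominator it is $\prod_{k\neq l}(-1)^{\delta_{mk}+\delta_{ml}}=(-1)^{1-\delta_{ml}}\cdot(-1)^{(N-1)\delta_{ml}}$. The two $(-1)^{1-\delta_{ml}}$ cancel, leaving $(-1)^{(N-1)\delta_{ml}}$, which is precisely $\lb D_m^{N-1}\rb_{ll}$; this establishes the first line of \eqref{shiftid}. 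The minus-sign case $\bs\sigma-\bs h_m$ and $a-1$ is identical with all shifts reversed, and the parities are unchanged because $(-1)^{\pm\delta}=(-1)^{\delta}$. The second line of \eqref{shiftid}, involving $\bs\sigma'\mapsto\bs\sigma'\pm\bs h_m$, is analogous but now the shifted variable $\sigma'^{(j)}$ appears only in the numerator, once per factor, giving argument $(a+1)/N+\sigma'^{(j)}+\delta_{mj}-1/N-\sigma^{(k)}$; here the $-1/N$ is the shift $a\mapsto a\mp 1$ (opposite sign, as stated) and the $+\delta_{mj}$ contributes $(-1)^{(N-1)\delta_{mj}}$ total — this time diagonal on the column index $j$, hence the factor $D_m^{N-1}$ appears on the right in $F\lb\bs\sigma',a\mp1,\bs\sigma\rb D_m^{N-1}$. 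The denominator of $F_{lj}$ does not involve $\bs\sigma'$ at all, so no further signs arise.

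Finally, \eqref{shiftroot} follows by composing the relations in \eqref{shiftid}: applying the first line with $+\bs h_m$ and then with $-\bs h_s$ gives
\[
F\lb\bs\sigma',a,\bs\sigma+\bs h_m-\bs h_s\rb=D_s^{N-1}F\lb\bs\sigma',a-1,\bs\sigma+\bs h_m\rb=D_s^{N-1}D_m^{N-1}F\lb\bs\sigma',a,\bs\sigma\rb,
\]
and $D_m^{N-1}$, $D_s^{N-1}$ commute since both are diagonal, giving the stated $D_m^{N-1}D_s^{N-1}$; the shifts $a+1$ then $a-1$ cancel. The $\bs\sigma'$ version is obtained the same way from the second line of \eqref{shiftid}, with the $D$'s landing on the right. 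I do not anticipate any genuine obstacle here — the statement is labelled ``easily verifiable'' — the only thing requiring care is the precise tracking of which integer shifts of $\sin$-arguments occur (i.e. for which indices $k$ the Kronecker deltas are nonzero) and checking that the accumulated signs reassemble into $D_m^{N-1}$ acting on the correct (row versus column) index; the fact that $N$ enters only through the exponent $N-1$ is a useful consistency check, since $\lb\bs h_m\rb$ has exactly one ``special'' component and $N-1$ ``generic'' ones.
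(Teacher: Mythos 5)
Your verification is correct: the sign bookkeeping for the integer shifts of the $\sin$ arguments (one special index among the $N-1$ factors, with the numerator and denominator contributions combining to $(-1)^{(N-1)\delta_{ml}}$ on the row index, and $(-1)^{(N-1)\delta_{mj}}$ on the column index for the $\bs\sigma'$ shift), the opposite shift of $a$ in the second line, and the composition giving \eqref{shiftroot} all check out. The paper offers no explicit proof, calling these properties ``easily verifiable,'' and your direct computation from the trigonometric form \eqref{Fmat} is precisely the intended argument.
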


The main result of the paper is an explicit solution of the  Riemann-Hilbert problem for semi-degenerate Fuchsian systems in terms of $W_N$-conformal blocks.
We will find the monodromy matrices of the proposed solution and compare them with the monodromy matrices (\ref{MkWk}) 
parameterized by Proposition~\ref{propparam} with the help of Lemma~\ref{KatzLemma}. For such comparison, the following proposition will be useful.
\begin{prop}\label{XYME}
The matrix $F\lb \bs \sigma',a,\bs \sigma\rb$ given by (\ref{Fmat}) and the matrices $W_k$ given by (\ref{Wm}) are related by
\be\label{FW}
 F\lb \bs \sigma_k,a_k-1,\bs \sigma_{k-1}\rb= X_k W_k Y_k^{-1},
\ee
where the diagonal matrices $X_k$ and $Y_k$ are defined by their diagonal elements $x_k^{(s)}$ and $y_k^{(s)}$, $s=1,\ldots,N$, respectively:
\begin{align}\label{xME}
 \bigl(x_k^{(s)}\bigr)^{-1}=&\,e^{i\pi(N-1) \sigma^{(s)}_{k-1}}
 \prod_{m=1}^N \sin \pi(\hat \sigma_k^{(m)}-\sigma^{(s)}_{k-1})
 \prod_{m(\ne s)} \sin \pi(\sigma^{(m)}_{k-1}-\sigma^{(s)}_{k-1}),
\\
\label{yME}
 \bigl(y_k^{(s)}\bigr)^{-1}=&\,\;e^{{i}\pi (N-1)\hat\sigma_k^{(s)}}\, \prod_{m=1}^N\sin \pi(\hat\sigma_k^{(s)}-\sigma^{(m)}_{k-1})
 \prod_{m(\ne s)}\sin \pi(\hat\sigma_k^{(m)}-\hat\sigma_k^{(s)}),
\end{align}
and we use the notation $\hat \sigma_k^{(s)}=\sigma_k^{(s)}+a_k/N$.
\end{prop}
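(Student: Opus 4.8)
The plan is to prove \eqref{FW} by a direct computation, matching the trigonometric form \eqref{Fmat} of the fusion matrix against the product $X_k W_k Y_k^{-1}$ entry by entry. First I would substitute $a\mapsto a_k-1$, $\bs\sigma'\mapsto\bs\sigma_k$, $\bs\sigma\mapsto\bs\sigma_{k-1}$ into \eqref{Fmat}, obtaining
\[
F_{lj}\lb\bs\sigma_k,a_k-1,\bs\sigma_{k-1}\rb=\prod_{m(\ne l)}\frac{\sin\pi\bigl(a_k/N+\sigma_k^{(j)}-\sigma_{k-1}^{(m)}\bigr)}{\sin\pi\bigl(\sigma_{k-1}^{(m)}-\sigma_{k-1}^{(l)}\bigr)}
=\prod_{m(\ne l)}\frac{\sin\pi\bigl(\hat\sigma_k^{(j)}-\sigma_{k-1}^{(m)}\bigr)}{\sin\pi\bigl(\sigma_{k-1}^{(m)}-\sigma_{k-1}^{(l)}\bigr)},
\]
using the notation $\hat\sigma_k^{(s)}=\sigma_k^{(s)}+a_k/N$ from the statement. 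In parallel, I would rewrite the matrix elements \eqref{Wm} of $W_k$ after the same substitution; by \eqref{Wm} with $m\to k$, $\sigma_{m-1}^{(s)}-a_m/N\to\sigma_{k-1}^{(s)}-a_k/N$ we must be careful: the exponentials in \eqref{Wm} should first be converted to trigonometric functions by extracting half-angle phase factors, $e^{2\pi i x}-e^{2\pi i y}=2i\,e^{i\pi(x+y)}\sin\pi(x-y)$. This produces, for each entry $(W_k)_{lj}$, a product of $\sin$'s in the numerator running over $s\ne j$ with argument $\pi(\sigma_{k-1}^{(l)}-a_k/N-\sigma_k^{(s)})=\pi(\sigma_{k-1}^{(l)}-\hat\sigma_k^{(s)})$ — wait, one must track the index conventions of \eqref{Wm} carefully, since there the ``row'' index sits in the first slot of the difference and the product is over the ``column'' argument; I would reconcile this with the transposed index pattern of $F$ at this stage — together with a product of $\sin$'s in the denominator over $s\ne j$ of argument $\pi(\sigma_k^{(l)}-\sigma_k^{(s)})$, decorated by an overall phase and power of $2i$.

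The heart of the argument is then the cancellation bookkeeping. I would form the ratio $F_{lj}/(W_k)_{lj}$ and show that it factorizes as $x_k^{(l)}/y_k^{(j)}$ (or the appropriate index assignment dictated by which side $X_k$ and $Y_k$ act on). The key combinatorial identity is that a product such as $\prod_{m(\ne l)}\sin\pi(\hat\sigma_k^{(j)}-\sigma_{k-1}^{(m)})$, which depends on both $l$ and $j$, can be written as $\bigl(\prod_{m=1}^N\sin\pi(\hat\sigma_k^{(j)}-\sigma_{k-1}^{(m)})\bigr)/\sin\pi(\hat\sigma_k^{(j)}-\sigma_{k-1}^{(l)})$, i.e. one completes the restricted product to a full product over $m$ at the cost of a single extra factor; the full product depends only on $j$ and goes into $y_k^{(j)}$, while the single ``missing'' factor is what glues adjacent rows/columns together and ultimately cancels against the analogous completion performed on the $W_k$ side. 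Doing this completion simultaneously for the four products (numerator and denominator of $F$, numerator and denominator of $W_k$) and for the phase factors, one is left with a pure function of $l$ times a pure function of $j$, which by inspection matches $(x_k^{(l)})^{\mp 1}$ and $(y_k^{(j)})^{\pm 1}$ as defined in \eqref{xME}–\eqref{yME}. The phases $e^{i\pi(N-1)\sigma_{k-1}^{(s)}}$ and $e^{i\pi(N-1)\hat\sigma_k^{(s)}}$ arise precisely from collecting the half-angle phases $e^{i\pi(x+y)}$ across the $N-1$ factors in each row/column, and the powers of $2i$ must be checked to cancel between numerator and denominator (there are $N-1$ of them on each side).

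The main obstacle I anticipate is purely organizational: keeping the index conventions of \eqref{Wm} — where the product $\prod_{s(\ne l)}$ in the $(l,j)$-entry has denominator indexed by the column $l$ but the difference placed with row first — consistent with the conventions of \eqref{Fmat}, and making sure the diagonal matrices end up multiplying on the correct sides (so that \eqref{FW} reads $F=X_kW_kY_k^{-1}$ rather than its transpose or inverse). A secondary point requiring care is the range of the products: $F$ has a product over $m\ne l$ of size $N-1$ in both numerator and denominator, while \eqref{Wm} has products over $s\ne l$ of size $N-1$; the asymmetry between ``how many $\sigma_{k-1}$'' versus ``how many $\hat\sigma_k$'' factors appear must be resolved by the completion trick above, and one should double-check that no spurious factor of $\sin\pi(\hat\sigma_k^{(j)}-\sigma_{k-1}^{(j)})$ or $\sin\pi(\hat\sigma_k^{(l)}-\sigma_{k-1}^{(l)})$ survives. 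Once the index matching is fixed, the remainder is a routine — if slightly tedious — verification, and I would present it by displaying the ratio $F_{lj}(X_k^{-1})_{ll}(Y_k)_{jj}$ and simplifying to $(W_k)_{lj}$.
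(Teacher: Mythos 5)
Your plan is correct and is, in essence, the verification the paper leaves implicit: Proposition \ref{XYME} is stated without proof, the intended argument being exactly the direct entry-by-entry check you describe. Converting each factor of \eqref{Wm} via $e^{2\pi i x}-e^{2\pi i y}=2i\,e^{i\pi(x+y)}\sin\pi(x-y)$ gives $(W_k)_{lj}=e^{i\pi(N-1)(\sigma_{k-1}^{(l)}-\hat\sigma_k^{(j)})}\prod_{s(\ne j)}\sin\pi(\sigma_{k-1}^{(l)}-\hat\sigma_k^{(s)})\,/\sin\pi(\hat\sigma_k^{(j)}-\hat\sigma_k^{(s)})$, and your completion trick then factors $F_{lj}/(W_k)_{lj}$ as $(y_k^{(j)})^{-1}/(x_k^{(l)})^{-1}$: the single ``missing'' factor $\sin\pi(\hat\sigma_k^{(j)}-\sigma_{k-1}^{(l)})$ cancels between the two completions, the two $(-1)^{N-1}$ signs from reversing sine arguments cancel against each other, and the half-angle phases cancel identically, so \eqref{FW} closes exactly as you predict, with $X_k$ on the left and $Y_k^{-1}$ on the right. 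The one slip to fix when writing it out: in the $(l,j)$ entry of $W_k$ the denominator sines are $\sin\pi(\sigma_k^{(j)}-\sigma_k^{(s)})$ --- the column index $j$ appears both in the exclusion $s\ne j$ and in the difference, just as in \eqref{Wm} where that double role is played by the second index --- not $\sin\pi(\sigma_k^{(l)}-\sigma_k^{(s)})$ as in your intermediate expression; note also $\sin\pi(\sigma_k^{(j)}-\sigma_k^{(s)})=\sin\pi(\hat\sigma_k^{(j)}-\hat\sigma_k^{(s)})$, which is the form that matches \eqref{yME}.
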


\subsection{Monodromy of conformal blocks}
This subsection is devoted to the computation of monodromy of multipoint conformal blocks using the 
transformation properties (\ref{Vpsi}), (\ref{psiB})--(\ref{BBp}).

Consider the following column $\mathcal{F}_m\lb \bs\sigma|\,y\rb$, $\bs\sigma=\lb\bs\sigma_1,\ldots,\bs\sigma_{n-3}\rb$ of  conformal blocks: 
\[
\mathcal{F}_m\lb\bs \sigma |\, y\rb=C_m
 \left\langle -\boldsymbol{\theta}_{n-1} +\bs h_m\right| \Psi\lb y\rb V_{a_{n-2}}\lb z_{n-2}\rb 
 \mathcal{P}_{\boldsymbol{\sigma}_{n-3}}
 V_{a_{n-3}}\lb z_{n-3}\rb\mathcal{P}_{\boldsymbol{\sigma}_{n-4}} \cdots \mathcal{P}_{\boldsymbol{\sigma}_1}
 V_{a_{1}}\lb z_{1}\rb \left|\boldsymbol{\theta}_0\right\rangle,
\]
where $C_m$ is the diagonal matrix with matrix elements $\lb C_m\rb_{jj}=\lb-1\rb^{N\delta_{jm}}$.
The advantage of incorporating  $C_m$ into the definition of $\mathcal{F}_m\lb\bs\sigma|\,y\rb$ is that its monodromy becomes independent of $m$.
Denote by $\mathcal{F}_m\lb \bs\sigma|\,y.\xi_{[k]}\rb$ the analytic continuation of $\mathcal{F}_m\lb \bs\sigma|\,y\rb$ in $y$ along the path $\xi_{[k]}:=\xi_0\cdots\xi_k$ encircling $z_{0},\ldots,z_k$.
We have \[
 \mathcal{F}_m\lb \bs\sigma|\,y.\xi_{[k]}\rb =  \hat M_{[k]} \ \mathcal{F}_m\lb \bs\sigma|\,y\rb,
\]
where the operator-valued monodromy matrix $\hat M_{[k]}$ is
\be\label{MVBV}
\hat M_{[k]} = \hat V'_{[k]} B^{2}(\boldsymbol{\sigma}_{k}) {(\hat V'_{[k]})}^{-1},
\ee
\[\begin{gathered}
\hat V'_{[k]} =   C_m T_{n-2} \nabla_{\boldsymbol{\sigma}_{n-3}} T_{n-3} \nabla_{\boldsymbol{\sigma}_{n-4}} T_{n-4}\cdots  
\nabla_{\boldsymbol{\sigma}_{k+1}}  T_{k+1} ,
\\
T_l= B'\lb\bs\sigma_{l} \rb F^{-1}\lb \bs\sigma_{l} ,a_l,\boldsymbol{\sigma}_{l-1}\rb B^{-1}\lb\bs\sigma_{l-1} \rb ,
\end{gathered}
\]
and $\bs\sigma_{n-2}:=-\bs{\theta}_{n-1} +\bs h_m$. Although it may seem that $\hat V'_{[k]}$ depends on $m$, thanks to the identity
\be\label{FBCFB}
\begin{gathered}
 C_m B'\lb -\boldsymbol{\theta}_{n-1} +\bs h_m \rb F^{-1}\lb -\boldsymbol{\theta}_{n-1} +\bs h_m,a_{n-2},\boldsymbol{\sigma}_{n-3}\rb =
 \\
= -B\lb  -\boldsymbol{\theta}_{n-1} \rb F^{-1}\lb -\boldsymbol{\theta}_{n-1},a_{n-2}-1,\boldsymbol{\sigma}_{n-3}\rb,
\end{gathered}
\ee
this dependence actually disappears. One may also use the identity $B^{-1}\lb\bs\sigma_{l} \rb \nabla_{\bs{\sigma}_l}  B'\lb \bs\sigma_{l} \rb = \nabla_{\bs{\sigma}_l}$ (cf (\ref{BBp}))
to simplify the expression (\ref{MVBV}) for the operator-valued monodromy matrix $\hat M_{[k]}$:
\be\label{MVBVs}
\hat M_{[k]} = \hat V_{[k]} B^{2}\lb\boldsymbol{\sigma}_{k}\rb \hat V_{[k]}^{-1},
\ee
\[
\hat V_{[k]} =  B\lb -\boldsymbol{\theta}_{n-1} \rb F^{-1}\lb -\boldsymbol{\theta}_{n-1},a_{n-2}-1,\boldsymbol{\sigma}_{n-3}\rb  
\nabla_{\boldsymbol{\sigma}_{n-3}} F^{-1}\lb \boldsymbol{\sigma}_{n-3},a_{n-3},\boldsymbol{\sigma}_{n-4}\rb \times
\]\[
\times 
\nabla_{\boldsymbol{\sigma}_{n-4}} F^{-1}\lb \boldsymbol{\sigma}_{n-4},a_{n-4},\boldsymbol{\sigma}_{n-5}\rb \nabla_{\boldsymbol{\sigma}_{n-5}}
\cdots  
\nabla_{\boldsymbol{\sigma}_{k+1}} F^{-1}\lb \boldsymbol{\sigma}_{k+1},a_{k+1},\boldsymbol{\sigma}_{k}\rb .
\]
Recall that there is a simple formula (\ref{FiF}) for $F^{-1}$.

The operator-valued monodromy matrix $\hat M_{[k]}$ has periodic dependence on $\bs\sigma=\lb\bs\sigma_1,\ldots,\bs\sigma_{n-3}\rb$.
Namely, $\hat M_{[k]}$ is invariant under the shifts of
any $\bs \sigma_l$, $l=1,\ldots,n-3$, by the vectors $\bs w_l$ of root lattice $\mathfrak R$ of  $\mathfrak{sl}_N$ embedded into $\mathbb{C}^N$, i.e.
$\bs w_l \in \mathbb{Z}^N$, $\sum_{s=1}^N w_l^{(s)}=0$.  
This claim is equivalent to the following proposition.

\begin{prop}\label{Periodicity}
The operator-valued monodromy matrix $\hat M_{[k]}$ has the following property:
\[
 \nabla_{\bs\sigma_l,m}  \nabla_{\bs\sigma_l,s}^{-1}  \hat M_{[k]} = 
 \hat M_{[k]} \nabla_{\bs\sigma_l,m}  \nabla_{\bs\sigma_l,s}^{-1} \,.
\]
\end{prop}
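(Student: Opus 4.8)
The plan is to reduce the claimed commutation property to the shift identities \eqref{shiftid}--\eqref{shiftroot} for the fusion matrix $F$, using the explicit formula \eqref{MVBVs} for $\hat M_{[k]}$. Note first that $\nabla_{\bs\sigma_l,m}\nabla_{\bs\sigma_l,s}^{-1}$ is the operator implementing the shift $\bs\sigma_l\mapsto\bs\sigma_l+\bs h_m-\bs h_s$, which maps $\bs\sigma_l$ by an element of the root lattice $\mathfrak R$; so the statement is exactly the assertion that $\hat M_{[k]}$ is invariant under such shifts of any internal momentum $\bs\sigma_l$. I would treat the two cases $l=k$ and $l\ne k$ separately, since the dependence of $\hat M_{[k]}$ on $\bs\sigma_k$ and on the other $\bs\sigma_l$ enters differently.

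First, consider $l\ne k$. Then $\bs\sigma_l$ appears only inside $\hat V_{[k]}$, through exactly two consecutive factors $F^{-1}\lb\bs\sigma_{l+1},a_{l+1},\bs\sigma_l\rb\nabla_{\bs\sigma_l}F^{-1}\lb\bs\sigma_l,a_l,\bs\sigma_{l-1}\rb$ (with the obvious modification if $l=n-3$, where the left factor involves $-\bs\theta_{n-1}$ and the extra $B$, handled via \eqref{FBCFB}). The strategy is to commute the shift operator $\nabla_{\bs\sigma_l,m}\nabla_{\bs\sigma_l,s}^{-1}$ through $\hat V_{[k]}$ from the left: as it passes each $\nabla_{\bs\sigma_j}$ for $j\ne l$ it commutes freely, and it only interacts with the two $F^{-1}$ blocks that depend on $\bs\sigma_l$. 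Using \eqref{FiF} to write these $F^{-1}$ as $F$ with swapped arguments, and then the first line of \eqref{shiftroot} (for the block $F^{-1}\lb\bs\sigma_l,a_l,\bs\sigma_{l-1}\rb$, where $\bs\sigma_l$ sits in the first slot after inversion) together with the second line of \eqref{shiftroot} (for $F^{-1}\lb\bs\sigma_{l+1},a_{l+1},\bs\sigma_l\rb$, where $\bs\sigma_l$ sits in the last slot after inversion), the shift produces conjugating factors $D_m^{N-1}D_s^{N-1}$ on each side. Since $D_m^{N-1}D_s^{N-1}$ is a diagonal matrix that squares to $\mathbb I$ (as $D_m^2=\mathbb I$ and $N-1$ parity only matters mod $2$), these conjugations either cancel against each other across the two blocks or get absorbed when we reassemble $\hat M_{[k]}=\hat V_{[k]}B^2\lb\bs\sigma_k\rb\hat V_{[k]}^{-1}$; one must also check that $\nabla_{\bs\sigma_l}$ commutes past $B^2\lb\bs\sigma_k\rb$, which is clear for $l\ne k$. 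The upshot is that $\nabla_{\bs\sigma_l,m}\nabla_{\bs\sigma_l,s}^{-1}\hat V_{[k]}=\hat V_{[k]}E$ for a diagonal involution $E$ independent of $\bs\sigma_k$, whence the conjugation structure of $\hat M_{[k]}$ is unchanged and the commutation follows.

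Next, consider $l=k$. Here $\bs\sigma_k$ occurs both in the central factor $B^2\lb\bs\sigma_k\rb$ and in the last (rightmost) block $F^{-1}\lb\bs\sigma_{k+1},a_{k+1},\bs\sigma_k\rb$ of $\hat V_{[k]}$. Applying the shift: on $F^{-1}\lb\bs\sigma_{k+1},a_{k+1},\bs\sigma_k\rb=F\lb-\bs\sigma_k,a_{k+1},-\bs\sigma_{k+1}\rb$ (by \eqref{FiF}), shifting $\bs\sigma_k$ by $\bs h_m-\bs h_s$ shifts the first argument by $-(\bs h_m-\bs h_s)$ and, by \eqref{shiftroot}, produces a left multiplication by $D_m^{N-1}D_s^{N-1}$; meanwhile $B^2\lb\bs\sigma_k\rb=\mathrm{diag}\lb e^{2\pi i\sigma_k^{(1)}},\ldots\rb$ shifts by the phase from $\bs h_m-\bs h_s$, which since $\bs h_m-\bs h_s\in\mathbb Z^N$ is exactly $e^{2\pi i(\delta_{jm}-\delta_{js})}=1$ in every entry — that is, $B^2\lb\bs\sigma_k\rb$ is literally invariant under shifts of $\bs\sigma_k$ by root-lattice vectors. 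So after the shift $\hat M_{[k]}$ becomes $\hat V_{[k]}D_m^{N-1}D_s^{N-1}\,B^2\lb\bs\sigma_k\rb\,(D_m^{N-1}D_s^{N-1})^{-1}\hat V_{[k]}^{-1}$; because $D_m^{N-1}D_s^{N-1}$ is diagonal and $B^2\lb\bs\sigma_k\rb$ is diagonal, they commute, and the two $D$-factors cancel, leaving $\hat M_{[k]}$ unchanged. Equivalently this says $\nabla_{\bs\sigma_k,m}\nabla_{\bs\sigma_k,s}^{-1}$ commutes with $\hat M_{[k]}$.

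The main obstacle I anticipate is bookkeeping rather than conceptual: keeping careful track of where in each $F^{-1}$ block the shifted momentum sits after applying \eqref{FiF}, choosing the correct line of \eqref{shiftroot} in each case, and verifying that all the stray $D_m^{N-1}D_s^{N-1}$ factors genuinely cancel (using $D_m^2=\mathbb I$ and that diagonal matrices commute with $B,B',\nabla$). The boundary block at $l=n-3$, which carries the extra $B\lb-\bs\theta_{n-1}\rb$ and the sign from \eqref{FBCFB}, needs a separate but entirely analogous check. Once these cancellations are organized, the proposition follows immediately; a clean way to present it is to establish the single lemma that conjugation of $\hat V_{[k]}$ by the shift operator $\nabla_{\bs\sigma_l,m}\nabla_{\bs\sigma_l,s}^{-1}$ amounts to right multiplication by a $\bs\sigma_k$-independent diagonal involution (or leaves $\hat M_{[k]}$ manifestly invariant when $l=k$), and then read off the commutation relation.
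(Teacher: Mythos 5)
Your proposal is correct and follows essentially the same route as the paper: split into $l=k$ and $l\neq k$, use the explicit form \eqref{MVBVs} of $\hat M_{[k]}$, note that $B^2\lb\bs\sigma_k\rb$ is invariant under root-lattice shifts, and cancel the diagonal involutions $D_m^{N-1}D_s^{N-1}$ produced by \eqref{shiftroot} (the paper additionally dispatches $l<k$ in one line, since $\hat M_{[k]}$ then has no $\bs\sigma_l$-dependence at all). The only slip is cosmetic: after applying \eqref{FiF}, $\bs\sigma_l$ sits in the \emph{last} slot of $F\lb-\bs\sigma_{l-1},a_l,-\bs\sigma_l\rb$ and the \emph{first} slot of $F\lb-\bs\sigma_l,a_{l+1},-\bs\sigma_{l+1}\rb$, i.e.\ the opposite of what you wrote, but since the resulting $D$-factors land on either side of the diagonal $\nabla_{\bs\sigma_l}$ and cancel in any case, the argument is unaffected.
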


\begin{proof} For $l<k$, $\hat M_{[k]}$ does not depend on $\bs \sigma_l$. Therefore it is sufficient to prove 
the commutativity for the cases with $l\ge k$. 

For $l=k$, let us extract explicitly the part of $\hat M_{[k]}$  given by (\ref{MVBVs})
depending on $\bs\sigma_k$:
\[
\hat M_{[k]} = \cdots  F^{-1}\lb \boldsymbol{\sigma}_{k+1},a_{k+1},\boldsymbol{\sigma}_{k}\rb  B^{2}\lb \boldsymbol{\sigma}_{k}\rb
F\lb \boldsymbol{\sigma}_{k+1},a_{k+1},\boldsymbol{\sigma}_{k}\rb \cdots.
\]
Using the identities (\ref{shiftroot}) and   $B^{2}\lb\boldsymbol{\sigma}_{k}\rb=\exp \lb 2\pi i \operatorname{diag}\bs\sigma_k\rb$,  we then obtain 
\[\begin{gathered}
\nabla_{\bs\sigma_k,m}  \nabla_{\bs\sigma_k,s}^{-1}  \hat M_{[k]} =
\cdots  F^{-1}\lb \boldsymbol{\sigma}_{k+1},a_{k+1},\boldsymbol{\sigma}_{k}+\bs h_m-\bs h_s\rb  B^{2}\lb\boldsymbol{\sigma}_{k}+\bs h_m-\bs h_s\rb \times\\
\times F\lb \boldsymbol{\sigma}_{k+1},a_{k+1},\boldsymbol{\sigma}_{k}+\bs h_m-\bs h_s\rb  \cdots
 \nabla_{\bs\sigma_k,m}  \nabla_{\bs\sigma_k,s}^{-1} = \hat M_{[k]} 
 \nabla_{\bs\sigma_k,m}  \nabla_{\bs\sigma_k,s}^{-1}  .
 \end{gathered}
\]
For $l>k$, it suffices to show the commutativity property for  $\hat V_{[k]}$: 
\be\label{compropV}
\nabla_{\bs\sigma_l,m}  \nabla_{\bs\sigma_l,s}^{-1}  \hat V_{[k]} = 
\hat V_{[k]} \nabla_{\bs\sigma_l,m}  \nabla_{\bs\sigma_l,s}^{-1}\,.
\ee
Indeed, the part of the operator-valued matrix  $\hat V_{[k]}$  
which depends on $\bs\sigma_l$ is given by
\[
\hat V_{[k]} =\cdots 
F^{-1}\lb \boldsymbol{\sigma}_{l+1},a_{l+1},\boldsymbol{\sigma}_l \rb
\nabla_{\boldsymbol{\sigma}_l} F^{-1}\lb \boldsymbol{\sigma}_l,a_l,\boldsymbol{\sigma}_{l-1}\rb\cdots .
\]
The  property (\ref{compropV}) easily follows from (\ref{shiftroot}) and the diagonal form of of $\nabla_{\boldsymbol{\sigma}_l}$.
\end{proof}

Although $\mathcal{F}_m\lb \bs\sigma|\,y\rb$ is not invariant with respect to the shifts of $\bs \sigma_l$ by root vectors, 
the above proposition suggests to consider Fourier transform of $\mathcal{F}_m\lb \bs\sigma|\,y\rb$ with respect to such shifts.
It will be shown in the next section that this Fourier transform gives the solution of the Riemann-Hilbert problem we are interested in.

\section{Solution of semi-degenerate Fuchsian system\label{sec5} \\ and isomonodromic  tau function}

As above, let $\mathfrak R$ denote the root lattice of $\mathfrak{sl}_N$ embedded into $\mathbb{C}^N$, i.e. $\bs w \in \mathfrak R$ if and only if 
$\bs w \in \oplus_{k=1}^{N-1} \mathbb{Z} \lb \bs h_k-\bs h_{k+1}\rb$, where $\bs h_k$ are the weights  of the first fundamental representation of $\mathfrak{sl}_N$, 
cf (\ref{hweight}).
Equivalently, $\bs w\in \mathfrak R$ if and only if $\bs w \in \mathbb{Z}^N$ and the sum of its components vanishes.

\begin{theo} 	\label{solinfty}
Let $\Phi\lb y\rb$ be the solution of the semi-degenerate Fuchsian system (\ref{FS}) having the asymptotics $\Phi\lb y\rb=y^{-\Theta_{n-1}}\bigl[\mathbb I+O\lb y^{-1}\rb\bigr]$ as $y\to\infty$ and
monodromies $\left\{M_k\right\}_{k=0,\ldots,n-1}$, described by Proposition~\ref{propparam} with 
parameters 
$\bt_0$, $\bt_{n-1}$; $\left\{\bs r_k,a_k\right\}_{k=1,\ldots,n-2}$, $\left\{\bs\sigma_l\right\}_{l=1,\ldots,n-3}$.
The matrix elements of $\Phi(y)$ can be written in terms of conformal blocks of the $W_N$-algebra:
\be\label{solLPCBnorm}
\Phi_{jm}(y)= \frac{\lb-1\rb^{N(\delta_{jm}-1)}}{\mathcal {N}_m}  \frac{
 \left\langle -\boldsymbol{\theta}_{n-1} +\bs h_m \left| \psi_j\lb y\rb\right|\Theta^\mathrm{D}_{jm}\right\rangle} 
 {\left\langle -\boldsymbol{\theta}_{n-1}|\,\Theta^\mathrm{D}\right\rangle},
\ee
where
\begin{align}\nonumber
& \left|\Theta_{jm}\lb\boldsymbol{\sigma}_1,\ldots,\boldsymbol{\sigma}_{n-3}\rb\right\rangle:=\\
\nonumber
 &\qquad\;\; =\mathcal{P}_{ -\boldsymbol{\theta}_{n-1} -\bs h_j+\bs h_m} V_{a_{n-2}}\lb z_{n-2}\rb
 \mathcal{P}_{\boldsymbol{\sigma}_{n-3}}
 V_{a_{n-3}}\lb z_{n-3}\rb\mathcal{P}_{\boldsymbol{\sigma}_{n-4}} \cdots \mathcal{P}_{\boldsymbol{\sigma}_1}
 V_{a_{1}}\lb z_{1}\rb\left |\boldsymbol{\theta}_0\right\rangle,\\
\label{CBD}
 &\left|\Theta^\mathrm{D}_{jm}\right\rangle:=\sum_{\boldsymbol{w}_1,\ldots,\boldsymbol{w}_{n-3}\in \mathfrak R} 
 e^{(\boldsymbol{\beta}_1,\boldsymbol{w}_1)+\cdots+(\boldsymbol{\beta}_{n-3},\boldsymbol{w}_{n-3})}
 \left|\Theta_{jm}\lb\boldsymbol{\sigma}_1+\boldsymbol{w}_{1},\ldots,\boldsymbol{\sigma}_{n-3}+\boldsymbol{w}_{n-3}\rb\right\rangle,\\
\nonumber
 &\left|\Theta^\mathrm{D}\right\rangle:=|\Theta^\mathrm{D}_{11}\rangle=\ldots =\left|\Theta^\mathrm{D}_{NN}\right\rangle,
\end{align}
and $\mathfrak R$ is the root lattice of $\mathfrak{sl}_N$. 
The relation between the monodromy parameters $\bs r_k$ and conjugate Fourier momenta $\bs \beta_k$ is given by the formulas (\ref{xME}), (\ref{yME}) of Proposition~\ref{XYME} and
\be\label{betakrk}
\begin{gathered}
R_k=Y_{k}^{-1}H_{k}X_{k+1}, \quad 1\le k \le n-3, \qquad 
R_{n-2}=Y_{n-2}^{-1} B\lb\boldsymbol{\theta}_{n-1} \rb ,
\\
R_k=\operatorname{diag}\bs r_k,\qquad H_k=\operatorname{diag}\bigl(e^{\beta_k^{(1)}},\ldots,e^{\beta_k^{(N)}}\bigr).\end{gathered}
\ee
The normalization coefficients $\mathcal{N}_m$ follow from  the normalization (\ref{normpsi}) of $\psi_m\lb y\rb$:  
\[
\mathcal{N}_m=\check{\cN}\lb-\bt_{n-1}+\bs h_m,\bs h_1,-\bt_{n-1}\rb. 
\]
\end{theo}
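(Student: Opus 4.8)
The plan is to verify directly that the right-hand side of (\ref{solLPCBnorm}) satisfies the characterizing properties (a)--(c) of the fundamental solution together with the normalization at infinity, and that its monodromy coincides with the parameterization of Proposition~\ref{propparam}; uniqueness of the Riemann--Hilbert problem then gives equality. As a preliminary I would check that $|\Theta^{\mathrm D}_{jj}\rangle$ does not depend on $j$: for $j=m$ the external weight $-\bs\theta_{n-1}-\bs h_j+\bs h_m$ collapses to $-\bs\theta_{n-1}$, so all these series literally coincide, which justifies writing $|\Theta^{\mathrm D}\rangle$ in (\ref{CBD}) and makes the ratio in (\ref{solLPCBnorm}) meaningful. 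The Fourier series over $\mathfrak R^{n-3}$, like the radially ordered expansions of the conformal blocks themselves, is first treated in the region of the parameters $\bs\beta_l$ and of the ordering $|z_{n-2}|<\cdots<|z_1|$ where everything converges, and then extended by analyticity.

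Second, I would read off the local analytic structure of $\Phi_{jm}(y)$ near the singular points from the fusion rules of Section~\ref{sec3}. Near $y=z_k$ with $1\le k\le n-2$ the degenerate field $\psi_j(y)$ collides with $V_{a_k}\lb z_k\rb$, and the channels permitted by (\ref{fuspsiplus}) produce precisely the exponents of $\Theta_k=a_k\operatorname{diag}\lb\tfrac{N-1}{N},-\tfrac1N,\ldots,-\tfrac1N\rb$ of Assumption~\ref{assmon} (the non-degenerate eigenvalue with multiplicity one, $-a_k/N$ with multiplicity $N-1$); near $y=0$ and $y=\infty$ the field $\psi_j$ collides with $|\bs\theta_0\rangle$ and $\langle-\bs\theta_{n-1}+\bs h_m|$, giving the exponents $\Theta_0$ and $\Theta_{n-1}$. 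This is property~(c). Using (\ref{fuspsiplus}) once more at $y\to\infty$ one gets $\langle-\bs\theta_{n-1}+\bs h_m|\psi_j(y)|\cdots\rangle\sim y^{-\theta_{n-1}^{(j)}+\delta_{jm}-1}$, so after division by the denominator and by $\mathcal N_m$ and insertion of the sign $(-1)^{N(\delta_{jm}-1)}$ one obtains $\Phi_{jm}(y)=y^{-\theta_{n-1}^{(j)}}\bigl[\delta_{jm}+O\lb y^{-1}\rb\bigr]$, i.e. property~(b) in the form $\Phi\lb y\rb=y^{-\Theta_{n-1}}\bigl[\mathbb I+O\lb y^{-1}\rb\bigr]$; the constants $\mathcal N_m=\check{\cN}\lb-\bt_{n-1}+\bs h_m,\bs h_1,-\bt_{n-1}\rb$ are exactly those forced by the normalization (\ref{normpsi})--(\ref{normpsibar}) of $\psi_m$.

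The core of the argument is property~(a) together with the identification of the monodromy. Up to the above normalization factors, the entries of $\Phi(y)$ are the Fourier transforms of the conformal-block columns $\mathcal F_m\lb\bs\sigma|\,y\rb$ of Section~\ref{sec4}, whose continuation along $\xi_{[k]}=\xi_0\cdots\xi_k$ is given by the operator-valued matrix $\hat M_{[k]}=\hat V_{[k]}B^{2}\lb\bs\sigma_k\rb\hat V_{[k]}^{-1}$ of (\ref{MVBVs}). By Proposition~\ref{Periodicity} this matrix commutes with the root-lattice shifts $\nabla_{\bs\sigma_l,m}\nabla_{\bs\sigma_l,s}^{-1}$, so pairing with $e^{(\bs\beta_1,\bs w_1)+\cdots+(\bs\beta_{n-3},\bs w_{n-3})}$ and shifting the summation indices $\bs w_l$ shows that the continuation of $|\Theta^{\mathrm D}\rangle$ is governed by the ordinary matrix $M_{[k]}=V_{[k]}B^{2}\lb\bs\sigma_k\rb V_{[k]}^{-1}$ obtained from $\hat M_{[k]}$ by replacing every $\nabla_{\bs\sigma_l}$ in $\hat V_{[k]}$ with the constant diagonal matrix $H_l=\operatorname{diag}\bigl(e^{\beta_l^{(1)}},\ldots,e^{\beta_l^{(N)}}\bigr)$ (up to an overall scalar that drops out in the conjugation). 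It then becomes a purely bookkeeping task to substitute Proposition~\ref{XYME}, which expresses $F^{-1}\lb\bs\sigma_{l+1},a_{l+1},\bs\sigma_l\rb$ through $W_{l+1}$ and the diagonal $X_{l+1},Y_{l+1}$, together with the relation (\ref{betakrk}) $R_k=Y_k^{-1}H_kX_{k+1}$ and, at the outer end, the identity (\ref{FBCFB}) with $R_{n-2}=Y_{n-2}^{-1}B\lb\bs\theta_{n-1}\rb$: all diagonal factors telescope and one recovers $M_{[k]}=W_{[k]}^{-1}\exp\lb2\pi i\fs_k\rb W_{[k]}$ with $W_{[k]}=R_kW_{k+1}R_{k+1}\cdots W_{n-2}R_{n-2}$, i.e. exactly (\ref{MkWk})--(\ref{Wm}). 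In particular the continuation is single-valued up to constant matrices, which is property~(a), and, since $M_k=M_{[k-1]}^{-1}M_{[k]}$, all monodromies match those of Proposition~\ref{propparam}.

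Finally, denoting by $\Phi_0(y)$ the true fundamental solution of this Riemann--Hilbert problem, the matrix $\Phi_0(y)^{-1}\Phi(y)$ built from (\ref{solLPCBnorm}) is single-valued and meromorphic on $\mathbb{CP}^1$ with poles only possibly at the $z_k$; matching of local exponents at every $z_k$ removes these, so it is constant, and the $y\to\infty$ asymptotics force it to be $\mathbb I$. The main obstacle is the monodromy computation of the preceding paragraph: one must keep strict track of the diagonal ``gauge'' matrices $X_k,Y_k,H_k,B,B'$ and of the shift by one in the parameter $a$ that the braiding matrices absorb (cf.\ (\ref{FBCFB})), so that the telescoping reproduces precisely the $W_{[k]}$ of Proposition~\ref{propparam} rather than a merely diagonally conjugate version; a secondary, technical point is to control the convergence of the Fourier sum in $\bs\beta$ well enough to legitimate the formal shifts of summation index and the subsequent analytic continuation.
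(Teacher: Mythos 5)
Your proposal is correct and follows essentially the same route as the paper: the operator-valued monodromy $\hat M_{[k]}$ of Section~\ref{sec4} is diagonalized on the Fourier-transformed blocks via Proposition~\ref{Periodicity}, and the resulting numerical matrices are matched to Proposition~\ref{propparam} through Proposition~\ref{XYME}, \eqref{betakrk} and \eqref{FBCFB}, with the local-exponent and uniqueness checks you spell out left implicit in the paper. One bookkeeping caveat: acting on the Fourier sum, $\nabla_{\bs\sigma_l,m}\nabla_{\bs\sigma_l,s}^{-1}$ produces the factor $e^{-\beta_l^{(m)}+\beta_l^{(s)}}$, so the shift operators in $\hat V_{[k]}$ are effectively replaced by $H_l^{-1}$ rather than $H_l$ (as in the paper's explicit formula for $M_{[n-4]}$), which must be tracked when verifying that the telescoping reproduces exactly \eqref{betakrk}.
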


\begin{proof}
We have to show that proposed matrix $\Phi(y)$ solves the initial Riemann--Hilbert problem with the monodromy matrices specified above. 
The monodromy matrix $M_{n-1}$ around $z_{n-1}=\infty$ in the positive direction on the Riemann sphere is diagonal and has  matrix elements 
\[
\lb M_{n-1}\rb_{jj}=
\bigl( {B'}^{-2}\lb -\boldsymbol{\theta}_{n-1} +\bs h_m \rb\bigr)_{jj}=\exp\left\{2\pi i\lb {\theta}_{n-1}^{(j)} -h_m^{(j)}+\tfrac{N-1}{N} \rb \right\}=
e^{2\pi i {\theta}_{n-1}^{(j)}}.
\]

The next monodromy matrix is still numerical,
\[
\hat M_{[n-3]}=
B\lb -\boldsymbol{\theta}_{n-1} \rb F^{-1}\lb -\boldsymbol{\theta}_{n-1},a_{n-2}-1,\boldsymbol{\sigma}_{n-3}\rb
 B^{2}\lb \boldsymbol{\sigma}_{n-3}\rb
F\lb  -\boldsymbol{\theta}_{n-1},a_{n-2}-1,\boldsymbol{\sigma}_{n-3}\rb
B^{-1}\lb -\boldsymbol{\theta}_{n-1} \rb.
\]
The next one, however, has operator-valued entries:
\[
\begin{gathered}
\hat M_{[n-4]}=     
B\lb -\boldsymbol{\theta}_{n-1}\rb F^{-1}\lb -\boldsymbol{\theta}_{n-1},a_{n-2}-1,\boldsymbol{\sigma}_{n-3}\rb
\nabla_{\boldsymbol{\sigma}_{n-3}}
F^{-1}\lb\boldsymbol{\sigma}_{n-3},a_{n-3},\boldsymbol{\sigma}_{n-4}\rb\times\\
\times\,
 B^{2}\lb\boldsymbol{\sigma}_{n-4}\rb F\lb\boldsymbol{\sigma}_{n-3},a_{n-3},\boldsymbol{\sigma}_{n-4}\rb \nabla^{-1}_{\boldsymbol{\sigma}_{n-3}}
F\lb -\boldsymbol{\theta}_{n-1},a_{n-2}-1,\boldsymbol{\sigma}_{n-3}\rb
B^{-1}\lb -\boldsymbol{\theta}_{n-1}\rb.
\end{gathered}
\]
This operator expression is invariant with respect to the shifts of intermediate weights $\bs\sigma_{n-3}$ and 
$\bs\sigma_{n-4}$ by root vectors $\bs h_m-\bs h_s$ of $\mathfrak{sl}_N$ (see Proposition~\ref{Periodicity})
and therefore it is the same for all conformal blocks appearing in (\ref{CBD}).

Let us move the components of $\nabla_{\bs\sigma}$ to the right to act on conformal blocks.
We need a few identities which follow from (\ref{shiftid}) and (\ref{shiftroot}) written in components,
\begin{align*}
 \nabla_{\bs\sigma,l} F_{lj}\lb-\bs\sigma',a,-\bs\sigma\rb=&\,
 (-1)^{N-1} F_{lj}\lb-\bs\sigma',a-1,-\bs\sigma\rb \nabla_{\bs\sigma,l},
\\
 \nabla_{\bs\sigma,m}  F_{jl}\lb\bs\sigma,a,\bs\sigma'\rb\;\;=&\,
 (-1)^{(N-1)\delta_{ml}} F_{jl}\lb\bs\sigma,a-1,\bs\sigma'\rb \nabla_{\bs\sigma,m} ,
\\
 \nabla_{\bs\sigma,m} \nabla_{\bs\sigma,l}^{-1} F_{lp}\lb\bs\sigma'',a,\bs\sigma\rb=&\,
 (-1)^{(N-1)(\delta_{ml}-1)} F_{lp}\lb\bs\sigma'',a,\bs\sigma\rb \nabla_{\bs\sigma,m} \nabla_{\bs\sigma,l}^{-1},
\end{align*}
to obtain
\[\begin{gathered}
\hat M_{[n-4]}=      
B\lb -\boldsymbol{\theta}_{n-1} \rb F^{-1}\lb -\boldsymbol{\theta}_{n-1},a_{n-2}-1,\boldsymbol{\sigma}_{n-3}\rb
\tilde \nabla_{\boldsymbol{\sigma}_{n-3}}
F^{-1}\lb\boldsymbol{\sigma}_{n-3},a_{n-3}-1,\boldsymbol{\sigma}_{n-4}\rb\times
\\
\times
 B^{2}\lb \boldsymbol{\sigma}_{n-4}\rb F\lb \boldsymbol{\sigma}_{n-3},a_{n-3}-1,\boldsymbol{\sigma}_{n-4}\rb \tilde \nabla^{-1}_{\boldsymbol{\sigma}_{n-3}}
F\lb -\boldsymbol{\theta}_{n-1},a_{n-2}-1,\boldsymbol{\sigma}_{n-3})B^{-1}( -\boldsymbol{\theta}_{n-1} \rb,
\end{gathered}
\]
where the components of $\tilde \nabla_{\boldsymbol{\sigma}_{n-3}}$ act directly on conformal blocks.
Therefore, the action on the Fourier transformed conformal blocks diagonalizes producing a numerical monodromy matrix, 
\[
\begin{gathered}
M_{[n-4]}=B\lb -\boldsymbol{\theta}_{n-1} \rb F^{-1}\lb -\boldsymbol{\theta}_{n-1},a_{n-2}-1,\boldsymbol{\sigma}_{n-3}\rb
H_{n-3}^{-1} 
F^{-1}\lb \boldsymbol{\sigma}_{n-3},a_{n-3}-1,\boldsymbol{\sigma}_{n-4}\rb
\times \\
\times\, 
 B^{2}\lb\boldsymbol{\sigma}_{n-4}\rb  F\lb\boldsymbol{\sigma}_{n-3},a_{n-3}-1,
 \boldsymbol{\sigma}_{n-4}\rb  H_{n-3}
F\lb -\boldsymbol{\theta}_{n-1},a_{n-2}-1,
\boldsymbol{\sigma}_{n-3}\rb B^{-1}\lb -\boldsymbol{\theta}_{n-1} \rb,
\end{gathered}
\]
where $H_{k}$ is the diagonal matrix with the elements $\lb H_{k}\rb_{mj}=e^{\beta_k^{(j)}} \delta_{mj}$. 

Analogous procedure may be applied to other monodromy matrices. We note that their structure as products of elementary building blocks reproduces the structure of the corresponding
products (\ref{MkWk}), (\ref{WWW}) in Proposition~\ref{propparam}. The exact relation \eqref{betakrk} between the parameters is obtained using (\ref{FW}).
\end{proof}

There is a related theorem for an alternative normalization of the fundamental solution.

\begin{theo}\label{soly0}
The solution of the semi-degenerate Fuchsian system with the same monodromies, normalized as $\Phi\lb y_0\rb = \mathbb{I}$, can be written in terms of conformal blocks as
\be\label{solLPCB}
\Phi_{jm}\lb y\rb=\lb-1\rb^{N\lb\delta_{jm}-1\rb} \lb y_0-y\rb^{(N-1)/N}  \frac{
 \left\langle -\boldsymbol{\theta}_{n-1}\left| \bar \psi_m( y_0)\, \psi_j\lb y\rb\right|\Theta^\mathrm{D}_{jm}\right\rangle }
 {\left\langle -\boldsymbol{\theta}_{n-1}|\, \Theta^\mathrm{D}\right\rangle},
 \ee
where $\left|\Theta^\mathrm{D}_{jm}\right\rangle$ and $\left|\Theta^\mathrm{D}\right\rangle$ are given by the same formulas (\ref{CBD}).
The normalizations of $ \psi_j(y)$ and $\bar \psi_m(y_0)$ are fixed by (\ref{normpsi}) and (\ref{normpsibar}). 
\end{theo}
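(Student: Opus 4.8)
The plan is to verify directly that the matrix $\Phi(y)$ given by the right-hand side of \eqref{solLPCB} satisfies the three defining properties (a)--(c) of the Riemann--Hilbert problem associated with the Fuchsian system \eqref{FS}, for the monodromy representation $\{M_k\}$ of Proposition~\ref{propparam}, together with the normalization $\Phi(y_0)=\mathbb I$; by uniqueness of the RHP solution this will identify $\Phi(y)$ with the required fundamental matrix, equivalently $\Phi(y)=\bigl(\Phi_\infty(y_0)\bigr)^{-1}\Phi_\infty(y)$ with $\Phi_\infty$ the solution of Theorem~\ref{solinfty}. The only new feature compared with Theorem~\ref{solinfty} is the extra insertion $\bar\psi_m(y_0)$, which a priori creates a branch point at the regular point $y=y_0$; the heart of the argument is to show that the prefactor $(y_0-y)^{(N-1)/N}$ makes this point regular and produces the value $\mathbb I$ there.

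First I would analyze the coalescence $y\to y_0$. For fixed $(j,m)$ and any intermediate module $\mathcal P_{\bs\sigma}$ occurring in the chain, the product $\bar\psi_m(y_0)\,\psi_j(y)$ restricted to $\mathcal P_{\bs\sigma}$ lies in a single fusion channel, of momentum $\bs h_j-\bs h_m$ and conformal weight $\tfrac{1}{2}(\bs h_j-\bs h_m)^2$, which equals $0$ when $j=m$ and $1$ when $j\neq m$. Hence, near $y_0$, the matrix element $\langle-\bs\theta_{n-1}|\bar\psi_m(y_0)\,\psi_j(y)|\Theta^{\mathrm D}_{jm}\rangle$ is $(y_0-y)^{-(N-1)/N}$ times a convergent power series when $j=m$ --- this is the content of the OPE \eqref{OPEpsipsibar} --- and $(y_0-y)^{1/N}$ times a convergent power series when $j\neq m$; since only one channel contributes in each case, no logarithms appear. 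Multiplying by $(y_0-y)^{(N-1)/N}$ therefore makes $\Phi_{jm}(y)$ single-valued and holomorphic at $y_0$: for $j=m$ the resulting series starts at order $0$ with constant term $1$, which is exactly the statement that the normalizations \eqref{normpsi}, \eqref{normpsibar} were chosen so that the leading coefficient in \eqref{OPEpsipsibar} is $\delta_{jm}$; for $j\neq m$ it starts at order $1$. This gives $\Phi(y_0)=\mathbb I$, which is property (b), and shows the apparent singularity at $y_0$ is spurious. The computation is uniform in the root-lattice shifts of $\bs\sigma_1,\dots,\bs\sigma_{n-3}$, so it survives the Fourier summation defining $|\Theta^{\mathrm D}_{jm}\rangle$ in \eqref{CBD}.

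Next I would establish properties (a) and (c) at $z_0,\dots,z_{n-1}$. Away from $y_0$ the block $\langle-\bs\theta_{n-1}|\bar\psi_m(y_0)\,\psi_j(y)|\Theta^{\mathrm D}_{jm}\rangle$ depends on $y$ only through the completely degenerate field $\psi_j(y)$, so it satisfies the same $N$-th order ODE in $y$ as in Theorem~\ref{solinfty}, with regular singular points only at $z_0,\dots,z_{n-2},\infty$ (and the removable point $y_0$). By the rigidity argument used in the proof of Theorem~\ref{solinfty}, the matrix of these blocks, dressed by $(-1)^{N(\delta_{jm}-1)}(y_0-y)^{(N-1)/N}$ and normalized as in \eqref{solLPCB}, assembles into a fundamental matrix of \eqref{FS} whose local exponents at each $z_k$, read off from the fusion rule \eqref{fusplus}, coincide with the eigenvalues of $\Theta_k$; this is property (c). For the monodromies, $\bar\psi_m(y_0)$ sits to the left of every other field and does not affect the continuation of $\psi_j(y)$ around any $z_k$, so the operator-valued monodromies $\hat M_{[k]}$ of Section~\ref{sec4} act on $|\Theta^{\mathrm D}_{jm}\rangle$ verbatim as in Theorem~\ref{solinfty}; being the Fourier transform over $\mathfrak R$, the block $|\Theta^{\mathrm D}_{jm}\rangle$ diagonalizes them to the numerical matrices $M_{[k]}$ of Proposition~\ref{propparam}, with the identification \eqref{betakrk} of $\bs r_k$ and $\bs\beta_k$ inherited unchanged. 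The loop around $z_{n-1}=\infty$ additionally encircles $y_0$, but the phase it produces on the $(j,m)$-entry is exactly cancelled by the monodromy of the prefactor $(y_0-y)^{(N-1)/N}$ (consistently with the fact, used above, that the two exponents add up to an integer), so $\Phi(y)$ is single-valued around $y_0$ and has diagonal monodromy $e^{2\pi i\Theta_{n-1}}$ at infinity; this is property (a) with monodromy representation $\{M_k\}$.

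The main obstacle is the coalescence analysis: one must be sure that $\bar\psi_m(y_0)\psi_j(y)$ really lives in a single fusion channel --- so that the single prefactor $(y_0-y)^{(N-1)/N}$ simultaneously trivializes the monodromy around $y_0$, cancels the diagonal pole and forces the off-diagonal entries to vanish at $y_0$ --- that no logarithms arise, and that the leading coefficient is genuinely $1$, which is precisely the compatibility of \eqref{normpsi}--\eqref{normpsibar} with \eqref{OPEpsipsibar}. Once this is settled, properties (a)--(c) hold and the statement follows from uniqueness of the RHP solution.
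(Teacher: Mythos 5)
Your proposal is correct and follows essentially the same route as the paper, whose own proof is only the remark that the argument ``goes along the same lines'' as Theorem~\ref{solinfty}: the monodromy computation via the braiding/fusion relations and the Fourier transform over $\mathfrak R$ is unaffected by the extra insertion $\bar\psi_m\lb y_0\rb$, and the new ingredient is exactly the local analysis at $y_0$ that you supply, where the OPE \eqref{OPEpsipsibar} together with the normalizations \eqref{normpsi}--\eqref{normpsibar} and the prefactor $\lb y_0-y\rb^{(N-1)/N}$ gives regularity at $y_0$ and $\Phi\lb y_0\rb=\mathbb I$. Your exponent bookkeeping (channel weights $0$ and $1$, cancellation of the fractional phase at infinity) matches the paper's conventions, so nothing essential is missing.
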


The proof of Theorem~\ref{soly0} goes along the same lines as the proof of Theorem~\ref{solinfty}. 
Note that the solution \eqref{solLPCBnorm} can be obtained from the solution (\ref{solLPCB}) by fusion of  $\langle -\boldsymbol{\theta}_{n-1}| \bar \psi_m(y_0)$ in the limit $y_0\to \infty$.

By the arguments given in \cite{GIL1}, one obtains the isomonodromic tau function (\ref{isomtau}) as the  Fourier transform of the $W_N$-conformal blocks which appears in the denominators of the fundamental solutions (\ref{solLPCBnorm}) and (\ref{solLPCB}):
\begin{prop}\label{proptaucb} The isomonodromic tau function of Jimbo-Miwa-Ueno for semi-degenerate Fuchsian systems is given by 
\be\label{tau_CB}
\tau(\bs z) = \left\langle -\boldsymbol{\theta}_{n-1}|\, \Theta^\mathrm{D}\right\rangle,
\ee
where $| \Theta^\mathrm{D}\rangle$ is given by (\ref{CBD}). 
\end{prop}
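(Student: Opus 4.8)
The plan is to verify that $\mathcal T(\bs z):=\langle -\bs\theta_{n-1}|\,\Theta^{\mathrm{D}}\rangle$ obeys the defining relation of the Jimbo--Miwa--Ueno tau function, i.e.\ that $d_{\bs z}\log\mathcal T$ equals the closed $1$-form \eqref{isomtau} of the Fuchsian system whose fundamental solution is furnished by Theorem~\ref{soly0}. First I would note that, since $z_0=0$ and $z_{n-1}=\infty$ are kept fixed, for $A(y)=\sum_{k=0}^{n-2}A_k/(y-z_k)$ one has
\be\label{trA2}
\tfrac12\operatorname{Tr}A^2(y)=\sum_{k=0}^{n-1}\frac{\Delta_k}{(y-z_k)^2}+\sum_{k=1}^{n-2}\frac{H_k}{y-z_k},\qquad H_k:=\operatorname{res}_{y=z_k}\tfrac12\operatorname{Tr}A^2(y),
\ee
with $\Delta_k=\tfrac12\operatorname{Tr}\Theta_k^2=\tfrac{a_k^2(N-1)}{2N}=\Delta_{a_k\bs h_1}$ for $k=1,\dots,n-2$ and $\Delta_0=\Delta_{\bs\theta_0}$, $\Delta_{n-1}=\Delta_{\bs\theta_{n-1}}$ at the generic punctures, so that $d_{\bs z}\log\tau=\sum_{k=1}^{n-2}H_k\,dz_k$. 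Since $d\log\mathcal T$ is automatically closed, it then suffices to establish $H_k=\partial_{z_k}\log\mathcal T$ for $k=1,\dots,n-2$; this gives $\tau=\mathrm{const}\cdot\mathcal T$ with a $\bs z$-independent constant, fixed to $1$ by the chosen normalization of vertex operators (equivalently, the statement is to be read up to such a constant).

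The bridge between the two sides is the insertion of the stress tensor $T(y)=-W^{(2)}(y)$ into the Fourier-transformed block. On the CFT side, the conformal Ward identity for $T(y)$, applied to the chain of vertex operators entering $|\Theta^{\mathrm{D}}\rangle$ and summed over the root lattice — which is legitimate because, by Proposition~\ref{Periodicity} and the grading \eqref{degL0}, the action of $W^{(2)}(y)$ preserves all intermediate channels and hence commutes with the summation over $\mathfrak R$ — should yield
\be\label{TWard}
\frac{\langle -\bs\theta_{n-1}|\,T(y)\,|\Theta^{\mathrm{D}}\rangle}{\langle -\bs\theta_{n-1}|\,\Theta^{\mathrm{D}}\rangle}=\sum_{k=0}^{n-1}\frac{\Delta_k}{(y-z_k)^2}+\sum_{k=1}^{n-2}\frac{\partial_{z_k}\log\mathcal T}{y-z_k},
\ee
the contributions at $z_0=0$ and $z_{n-1}=\infty$ being fixed by the global conformal Ward identities. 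On the Fuchsian side, Theorem~\ref{soly0} reconstructs the connection $A(y)=\Phi(y)^{-1}\partial_y\Phi(y)$ from the block with degenerate insertions $\bar\psi_m(y_0)\psi_j(y)$; bringing in a second degenerate pair and using the leading OPE $\psi_s(z)\bar\psi_{s'}(w)\sim\delta_{ss'}(z-w)^{-(N-1)/N}$ from \eqref{OPEpsipsibar}, one recovers $T(y)$ in the coincidence limit as the regularized bilinear, while the same limit of the corresponding kernel built from $\Phi(y)$ reproduces $\tfrac12\operatorname{Tr}A^2(y)$. Matching the residues at $y=z_k$ in \eqref{trA2} and \eqref{TWard} then gives $H_k=\partial_{z_k}\log\mathcal T$. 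This is the same chain of reasoning as in \cite{GIL1,ILTe}, adapted from the Virasoro algebra to $W_N$ at $c=N-1$.

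The hard part is the middle identity $\tfrac12\operatorname{Tr}A^2(y)=\langle T(y)\rangle_{\mathcal T}/\mathcal T$, which at the integer central charge $c=N-1$ must hold exactly rather than merely in a semiclassical limit. Its justification relies on the null-vector relations \eqref{semdegrel} for the completely degenerate field $V_{\bs h_1}$ (the case $a=1$), which convert the order-$N$ differential equation obeyed by $\langle\cdots\psi_j(y)\cdots\rangle$ into the first-order Fuchsian form with connection $A(y)$, together with a decoupling step re-expressing the accessory parameters of that equation — equivalently, the simple-pole residues of $\langle T(y)\rangle_{\mathcal T}/\mathcal T$ — as $z_k$-derivatives of $\mathcal T$. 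One also has to check that the regularized coincidence limits of $\psi\bar\psi$ and of the $\Phi$-kernel produce no spurious $\bs z$-dependent finite part beyond $\tfrac12\operatorname{Tr}A^2(y)$ and the double-pole terms, which is precisely where the normalizations \eqref{normpsi}, \eqref{normpsibar} — chosen so as to preserve \eqref{OPEpsipsibar} — are used.
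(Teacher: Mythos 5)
Your proposal takes essentially the same route as the paper, whose entire proof of Proposition~\ref{proptaucb} is to invoke ``the arguments given in \cite{GIL1}'' --- i.e.\ exactly the chain you spell out: insert $T(y)$, use the conformal Ward identity on the Fourier-transformed block (legitimate by Proposition~\ref{Periodicity}), identify $\tfrac12\operatorname{Tr}A^2(y)$ with $\langle T(y)\rangle/\mathcal T$ via the degenerate-field construction underlying Theorems~\ref{solinfty}--\ref{soly0}, and match residues at the moving punctures, with $\tau$ determined only up to a constant fixed by the normalization of vertex operators. The only blemish is cosmetic: your displayed formulas omit the simple pole at $z_0$ (the residue $H_0$, resp.\ the $\partial_{z_0}$ term) and write the double pole at $z_{n-1}=\infty$ without passing to a local coordinate, but since $z_0$ and $z_{n-1}$ are fixed only the residues at $z_1,\ldots,z_{n-2}$ enter the Jimbo-Miwa-Ueno form, so the argument is unaffected.
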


Let us exemplify the above results with the simplest example of $n=4$, i.e. two generic punctures and two punctures of spectral type $\lb N-1,1\rb$. The former are located at $z_0=0$ and $z_3=\infty$. It can be also assumed that $z_2=1$, so that the only remaining time parameter is $z_1\equiv t$. The Fuchsian system \eqref{FS} then reduces to 
\be\label{FSFST}
\frac{d\Phi\lb y\rb}{dy} = \Phi\lb y\rb \lb\frac{A_0}{y}+\frac{A_t}{y-t}+\frac{A_1}{y-1}\rb.
\ee
It should be stressed that the matrices $A_{0}$, $A_t$, $A_1$ are \textit{traceless} (this assumption involves no loss of generality but is crucial from the CFT perspective). Their respective diagonalizations are
$$\operatorname{diag}\bs\theta_{0},\qquad \operatorname{diag}\lb\tfrac{N-1}{N},-\tfrac 1N,\ldots,-\tfrac 1N\rb a_t,\qquad
\operatorname{diag}\lb\tfrac{N-1}{N},-\tfrac 1N,\ldots,-\tfrac 1N\rb a_1,$$ 
with $a_t,a_1\in\Cb$. 
We also have $A_{\infty}=-A_0-A_t-A_1=\operatorname{diag}\bs\theta_{\infty}$. The monodromy preserving deformations for \eqref{FSFST} are described by the equations
\be
\frac{dA_0}{dt}=\frac{[A_0,A_t]}{t},\qquad \frac{dA_1}{dt}=\frac{[A_1,A_t]}{t-1},
\ee
equivalent to the polynomial Hamiltonian Fuji-Suzuki-Tsuda system  \cite{Tsuda}. The corresponding tau function is defined, up to arbitrary non-zero constant factor, by
\be
\frac{d\ln\tau_{_\mathrm{FST}}\lb t\rb}{dt}=\frac{\operatorname{Tr}A_0A_t}{t}+\frac{\operatorname{Tr}A_1A_t}{t-1}.
\ee

Proposition \ref{proptaucb} states that $\tau_{_\mathrm{FST}}\lb t\rb$ is nothing but the Fourier transform of the 4-point semi-degenerate conformal block,
\be\label{FST4p}
\tau_{_\mathrm{FST}}\lb t\rb=
\operatorname{const}\cdot\sum_{\boldsymbol{w}\in \mathfrak R} 
  e^{(\boldsymbol{\beta},\boldsymbol{w})} \begin{tikzpicture}[baseline,yshift=-0.4cm,scale=1.2]
   \draw [thick] (0,0) -- (2.6,0);
   \draw [thick,dashed] (0.8,0) -- (0.8,0.8); 
   \draw [thick,dashed] (1.8,0) -- (1.8,0.8);
   \draw (1.35,0) node[below] {\scriptsize $\bs{\sigma}+\bs w$};
   \draw (0.5,0) node[below] {\scriptsize $-\bs{\theta}_{\infty}$}; 
   \draw (2.3,0) node[below] {\scriptsize $\bs{\theta}_{0}$}; 
   \draw (0.8,0.8) node[above] {$1$};
   \draw (0.8,0.4) node[left] {\scriptsize ${a_{1}}$};
   \draw (1.8,0.4) node[right] {\scriptsize ${a_{t}}$};
   \draw (1.8,0.8) node[above] {$t$};
   \draw (0,0) node[left] {$\infty$};    
   \draw (2.6,0) node[right] {$0$};  
   \end{tikzpicture}
\ee
 where the  vertices represent appropriate chiral vertex operators in 
 \eqref{CBD}. The parameters $\bs\sigma=\bigl(\sigma^{(1)},\ldots,\sigma^{(N)}\bigr)\in\Cb^N$ with $\sum_{s=1}^N\sigma^{(s)}=0$ are related to diagonalized composite monodromy of $\Phi \lb y\rb$ around $0$ and $t$: $M_0M_t\sim e^{2\pi i\operatorname{diag}\bs \sigma}$. 
 
 There is a freedom in the choice of $\bs\sigma$, which can be shifted by any vector in $\mathfrak R$. Obviously, such a shift does not affect  the result \eqref{FST4p}. From the point of view of asymptotic analysis of $\tau_{_\mathrm{FST}}\lb t\rb$ as $t\to 0$, it is convenient to choose $\bs\sigma$ so that $\left|\Re \sigma^{(i)}-\Re\sigma^{(j)}\right|\le 1$ for $i,j=1,\ldots,N$. Indeed,
 \be\label{FSTas1}
 \tau_{_\mathrm{FST}}\lb t\rb\simeq t^{\frac12\lb \bs\sigma^2-\bs\theta_0^2-a_t^2\bs h_1^2\rb}
 \sum_{\bs w\in\mathfrak R}\mathcal C\lb\bs\sigma,\bs w\rb e^{(\boldsymbol{\beta},\boldsymbol{w})}t^{\frac{\bs w^2}{2}+\lb\bs\sigma,\bs w\rb}\Bigl[1+O\lb t\rb\Bigr],
 \ee
  as $t\to 0$, where
 \be\mathcal C\lb\bs\sigma,\bs w\rb=\frac{\mathcal N\lb -\bs\theta_{\infty},a_1\bs h_1,\bs\sigma+\bs w\rb\mathcal N\lb \bs\sigma+\bs w,a_t\bs h_1,\bs\theta_0\rb}{\mathcal N\lb -\bs\theta_{\infty},a_1\bs h_1,\bs\sigma\rb\mathcal N\lb \bs\sigma,a_t\bs h_1,\bs\theta_0\rb}.
 \ee
 The normalization coefficients $\mathcal N\lb \bs \sigma',a\bs h_1,\bs\sigma\rb$ are given by \eqref{Nnormcoef}, which implies that the structure constants $\mathcal C\lb\bs\sigma,\bs w\rb$ can be expressed in terms of gamma functions.
 In the generic case of strict inequality $\left|\Re \sigma^{(i)}-\Re\sigma^{(j)}\right|< 1$, the leading asymptotic contribution to \eqref{FSTas1} is determined by  $\bs w=0$, and the subleading terms correspond to the roots, $\bs w=\bs h_i-\bs h_j$. It follows that
 \begin{align}
 &\label{FSTas2}\tau_{_\mathrm{FST}}\lb t\rb\simeq t^{\frac12\lb \bs\sigma^2-\bs\theta_0^2-a_t^2\bs h_1^2\rb}\left[1+\sum_{i\ne j}^N\mathcal C\lb\bs\sigma,\bs h_i-\bs h_j\rb e^{\beta^{(i)}-\beta^{(j)}}t^{1+\sigma^{(i)}-\sigma^{(j)}}+
 \mathcal E\lb t\rb \right],\\
 &\mathcal E\lb t\rb=O\lb t\rb +\sum_{i,j,k,l=1}^N O\lb t^{2+\Re\lb \sigma^{(i)}-\sigma^{(j)}+\sigma^{(k)}-\sigma^{(l)}\rb}\rb.
 \end{align}
 The asymptotics \eqref{FSTas2} is  a specialization to the semi-degenerate case
 of the Proposition 3.9 of \cite{ILP}, which provides a higher-rank analogue of the Jimbo's asymptotic formula \cite{Jimbo} for Painlev\'e~VI.

\section{Discussion}
We conclude with some open research directions. As already mentioned in the introduction, the fundamental solutions of $3$-point Fuchsian systems allow to construct  Fredholm determinant representation of the $n$-point tau function \cite{CGL,GL16}.
 The principal minor expansion of the determinant gives a series representation for $\tau\lb \bs z\rb$, which in the semi-degenerate case may be expected to coincide with Nekrasov formulas \cite{Nekr} for instanton partition functions of linear quiver gauge theories in the self-dual $\Omega$-background. This would produce a new (direct) proof of the AGT-W relation \cite{AGT,Wyl,FL3,MM} for $c=N-1$. 
 
 Let us note that the combinatorics of the tau function expansions is the same in the generic and semi-degenerate case. Any progress on the 3-point solutions would provide new information on more general vertex operators and conformal blocks for $W_N$-algebras.
 
 It might be possible to adapt the technique developed in \cite{ILP} to compute the connection coefficient for the tau function of the Fuji-Suzuki-Tsuda system (relative normalization of the asymptotics as $t\to0$ and as $t\to\infty$). The CFT counterpart of this quantity is the fusion kernel  relating the $s$- and $u$-channel semi-degenerate $c=N-1$ conformal blocks (see \cite{ILT13} for $N=2$ case):
 $$
 \begin{tikzpicture}[baseline,yshift=-0.3cm,scale=1.3]
  \draw [thick] (0,0) -- (2.6,0);
\draw [thick,dashed] (0.8,0) .. controls +(0,0.5) and +(0,-0.5) .. (1.8,0.8);
    \draw [very thick,color=white] (1.25,0.38) -- (1.35,0.42); 
  \draw [thick,dashed] (1.8,0) .. controls +(0,0.5) and +(0,-0.5) .. (0.8,0.8); 
  \draw (1.4,0) node[below] {\scriptsize $\bs{\sigma}$};
  \draw (0.5,0) node[below] {\scriptsize $\bs{\theta}_{\infty}$}; 
  \draw (2.3,0) node[below] {\scriptsize $\bs{\theta}_{0}$}; 
  \draw (0.8,0.8) node[above] {$z_{2}$};
  \draw (1.8,0.8) node[above] {$z_{1}$};
  \draw (0.85,0.6) node[left] {\scriptsize ${a_{2}}$};
  \draw (1.75,0.6) node[right] {\scriptsize ${a_{1}}$};
  \draw (0,0) node[left] {$\infty$};    
  \draw (2.6,0) node[right] {$0$};  
  \end{tikzpicture}=
 \int d\bs{\sigma'}\; F\Bigl[\text{\small $\begin{array}{ll}a_2 & a_1 \\ \bs\theta_{\infty} &\bs\theta_0\end{array};\begin{array}{l}
 \bs\sigma' \\ \bs\sigma
 \end{array}$}\Bigr]\;
 \begin{tikzpicture}[baseline,yshift=-0.3cm,scale=1.3]
 \draw [thick] (0,0) -- (2.6,0);
 \draw [thick,dashed] (0.8,0) -- (0.8,0.8); 
 \draw [thick,dashed] (1.8,0) -- (1.8,0.8);
 \draw (1.4,0) node[below] {\scriptsize $\bs{\sigma'}$};
 \draw (0.5,0) node[below] {\scriptsize $\bs{\theta}_{\infty}$}; 
 \draw (2.3,0) node[below] {\scriptsize $\bs{\theta}_{0}$}; 
 \draw (0.8,0.8) node[above] {$z_{2}$};
 \draw (0.8,0.4) node[left] {\scriptsize ${a_{2}}$};
 \draw (1.8,0.4) node[right] {\scriptsize ${a_{1}}$};
 \draw (1.8,0.8) node[above] {$z_{1}$};
 \draw (0,0) node[left] {$\infty$};    
 \draw (2.6,0) node[right] {$0$};  
 \end{tikzpicture}$$
 Besides the explicit evaluation of $F\left[\ldots\right]$, it would be interesting to clarify the relation of this quantity to symplectic geometry of the moduli space of semi-degenerate monodromy data.
 
 Monodromy preserving deformations and Fuchsian systems are also related to quasi-classical ($c\to\infty$) limit of CFT \cite{Teschner2010,NagoyaYamada,LLNZ, Teschner2017}. One may wonder whether a direct connection between the quasi-classical and $c=N-1$ $W_N$-conformal blocks can be established. Quantitative aspects of this relation for $N=2$ have been the subject of recent study \cite{LencsesNovaes}.

\appendix
\section{Singular vectors of semi-degenerate Verma modules}

For $c=N-1$, we are going to use a free-boson realization of the $W_N$-algebra to find the singular vectors of the semi-degenerate Verma modules.
It will be helpful to extend $W_N=W(\mathfrak{sl}_N)$ to $W(\mathfrak{gl}_N)$ by introducing one more free-boson field $J\lb z\rb$ with the OPE
\[
J\lb z\rb J\lb w\rb=\frac{1/N}{\lb z-w\rb^2}+\text{regular},
\]
and regular OPEs with the currents $J_k\lb z\rb$ entering the definition of $W_N$: $J_k\lb z\rb J\lb w\rb=\text{regular}$.
Introduce the currents $\tilde{J}_k(z)=J_k(z)+J(z)$ which have the following OPEs:
\[
\tilde{J}_k\lb z\rb \tilde{J}_l\lb w\rb=\frac{\delta_{kl}}{\lb z-w\rb^2}+\text{regular},
\]
and define the generators of the $W(\mathfrak{gl}_N)$-algebra:
\be\label{tWgen}
\widetilde W^{(s)}\lb z\rb=\sum_{1\le i_1<\ldots<i_s\le N} :\tilde J_{i_1}\lb z\rb \cdots \tilde J_{i_s}\lb z\rb:\,, \qquad s=1,\ldots,N.
\ee
We will use the mode expansion
\be\label{tJNmodes}
\tilde J_k\lb z\rb=\sum_{p\in \mathbb{Z}} \frac{\tilde a^{(k)}_p}{z^{p+1}},
\ee
with modes acting on the Fock space $\mathcal{F}$ generated from the vacuum state $\left|\bt\right\ra$, $\bt=\lb \theta_1,\ldots,\theta_N\rb\in \mathbb{C}^N$:
\[
\tilde a^{(k)}_0 \left|\bt\right\ra=\theta_k \left|\bt\right\ra, \qquad \tilde a^{(k)}_p \left|\bt\right\ra=0,\qquad k=1,\ldots,N,\quad p>0.
\]

For the semi-degenerate representation of the $W$-algebra, we use $\bt=\lb a,0,\ldots,0\rb$. In what follows,  only such $\bt$ will be used.
For the action of modes of $\widetilde W^{(s)}\lb z\rb$ defined by 
\[
\widetilde{W}^{(s)}\lb z\rb=\sum_{p\in \mathbb{Z}} \frac{\widetilde{W}^{(s)}_p}{z^{p+s}},
\]
we have 
\be\label{Wth0}
\widetilde{W}^{(s)}_{-p} \left|\bt\right\ra= 0,\qquad 0\le p \le s-1\,.
\ee
{}On the other hand, we have a relation which can be obtained from the expansion
of (\ref{tWgen}) with the use of (\ref{Wgen}):
\be\label{reltWW}
\widetilde{W}^{(s)}\lb z\rb=\sum_{r=0}^s \binom{N-r}{N-s} :J^{s-r}\lb z\rb: W^{(r)}\lb z\rb.
\ee

In order to find the relations for the elements of  $W_N=W(\mathfrak{sl}_N)$ in the Fock module $\mathcal{F}$, let us use the relation
(\ref{reltWW}) modulo vectors from the submodule $\mathcal{F}'$ generated by $a_{-p}|\bt\ra$, $p>0$. 
We have, for $p>0$,
\be\label{reltWWmodes}
\widetilde{W}^{(s)}_{-p}\left|\bt\right\ra = \sum_{r=0}^s \binom{N-r}{N-s} \lb a/N\rb^{s-r}  W^{(r)}_{-p}\left|\bt\right\ra \quad \text{mod}\  \mathcal{F}'.
\ee
These relations are linear and can be inverted:
\be\label{relWtWmodes}
{W}^{(r)}_{-p}\left|\bt\right\ra = \sum_{s=0}^r \binom{N-s}{N-r} \lb -a/N\rb^{r-s} \widetilde W^{(s)}_{-p}\left|\bt\right\ra \quad \text{mod}\  \mathcal{F}'.
\ee
Using (\ref{Wth0}), rewrite (\ref{relWtWmodes}) as
\be\label{relWtWmres}
\begin{aligned}
&{W}^{(r)}_{-p}\left|\bt\right\ra = \sum_{s=2}^{p+1} \binom{N-s}{N-r} \lb -a/N\rb^{r-s} \widetilde W^{(s)}_{-p}\left|\bt\right\ra \quad  \lb\text{mod}\  \mathcal{F}' \rb=\\
&=\sum_{s=2}^{p+1} \binom{N-s}{N-r} (-a/N)^{r-s} \sum_{t=2}^s 
 \binom{N-t}{N-s} \lb a/N\rb^{s-t}  
 W^{(t)}_{-p}\left|\bt\right\ra .
 \end{aligned}
\ee
After summation over $s$ and changing summation index $t$ to $s$, we finally get 
\be\label{semdegrel_app}
 \left(W^{(r)}_{-p}+(-1)^{r+p} \sum_{s=2}^{p+1}  {N-s\choose r-s} {r-s-1\choose p-s+1} \left(\frac{a}{N}\right)^{r-s} W^{(s)}_{-p}\right)|\bt\ra=0, 
 \qquad 2\le p+1<r\le N.
\ee

\section{Null vectors and fusion rules for completely\\ degenerate fields}

This appendix uses a free-fermionic realization of the extension of $W_N=W(\mathfrak{sl}_N)$ to $W(\mathfrak{gl}_N)$.
It is convenient since the fermionic fields realize completely degenerate fields for $W(\mathfrak{gl}_N)$ and their properties 
can be studied easily.

The algebra of $N$-component free-fermionic fields is generated by $\psi^+_\alpha\lb z\rb$, $\psi^-_\alpha\lb w\rb$, $\alpha=1, \ldots, N$, with the standard 
singular part of the OPEs:
\be
\psi^+_\alpha\lb z\rb \psi^-_\beta\lb w\rb\sim  \frac{\delta_{\alpha,\beta}}{z-w}, \qquad
\psi^+_\alpha\lb z\rb \psi^+_\beta\lb w\rb\sim 0,\qquad \psi^-_\alpha\lb z\rb \psi^-_\beta\lb w\rb\sim 0. 
\ee
The $W(\mathfrak{gl}_N)$-algebra is a subalgebra of the free-fermionic algebra generated by the fields
\be\label{hwk}
\hwk\lb z\rb=\sum_{\alpha=1}^N :\partial^{k-1}\psi^+_\alpha\lb z\rb\psi^-_\alpha\lb z\rb:\,, \qquad k=1,\ldots,N.
\ee
It is convenient to extend this definition to all integer $k>0$. Using the bosonization formulas $\psi^\pm_\alpha\lb z\rb={:\exp ({\pm i\phi_\alpha\lb z\rb}):}$, $\alpha=1,\ldots,N$,
we get a free-boson realization of  $\hwk\lb z\rb$  as differential polynomials in bosonic currents
$\tilde J_\alpha\lb z\rb=i \partial \phi_\alpha\lb z\rb$:
\be
\hwk\lb z\rb=\sum_{\alpha=1}^N:\partial^{k} e^{i\phi_\alpha\lb z\rb}\cdot e^{-i\phi_\alpha\lb z\rb}:\,.
\ee
This free-boson realization of the currents $\hwk(z)$ does not coincide with the currents $\widetilde W^{(k)}\lb z\rb$ given by the formula
(\ref{tWgen}) of the Appendix~A, but
they generate the same algebra (a proof of this fact can be found in e.g. \cite{BGM}). 

The OPE
\be\label{OPE_Wpsi}
\hwk\lb z\rb\psi^+_\alpha\lb w\rb\sim \frac{\partial^{k-1}\psi^+_\alpha\lb w\rb}{z-w} 
\ee
means that 
\be
\hwk_{1-k}\left|\psi^+_\alpha \right\rangle= L_{-1}^{k-1}\left|\psi^+_\alpha \right\rangle,\qquad
\hwk_{m}|\psi^+_\alpha\rangle=0,\quad m>1-k,
\ee
which gives us the list of convenient null-vectors for the degenerate fields $\psi^+_\alpha\lb z\rb$ (they can also be rewritten in terms of standard generators). Note that the OPEs (\ref{OPE_Wpsi}) are the same for all $\alpha$ and, in fact, give the OPEs of the $W_N$-algebra currents with 
the completely degenerate fields.

To find the fusion rules for $\psi^+_\alpha\lb z\rb$, consider the conformal block
\be
\Omega^{(k)}_{\alpha}\lb z\rb=\left\langle\boldsymbol\theta_\infty\right|\hwk\lb z\rb\psi^+_\alpha\lb 1\rb\left|{\bt}_0\right\rangle .
\ee
Since 
\be
\left\langle\bt_\infty\right|\psi^+_\alpha\lb t\rb\left|{\bt}_0\right\rangle=t^{\Delta_\infty-\Delta_0-\Delta_t}=t^{\frac12\lb \bt_\infty^2-\bt_0^2-1\rb},
\ee
we have
\be
\left\langle\bt_\infty\right|\partial^{k-1}\psi^+_\alpha\lb 1\rb\left|{\bt}_0\right\rangle=
\bigl[\tfrac{\bt_\infty^2-\bt_0^2-1}{2}\bigr]_{k-1}=:A_k,
\ee
where $\left[x\right]_k=x\lb x-1\rb\cdots\lb x-k+1\rb$ denotes the falling factorial. This fixes the singular part of $\Omega^{(k)}_{\alpha}\lb z\rb$ near $z=1$ because of (\ref{OPE_Wpsi}):
\be\label{Omegaz1}
\Omega^{(k)}_{\alpha}\lb z\rb=\frac{A_k}{z-1}+O\lb 1\rb\quad \text{as } z\to 1.
\ee
In order to compute the asymptotics of $\Omega^{(k)}_{\alpha}\lb z\rb$ near $z=0$ and $z=\infty$, one may use the mode expansion
\[
\hwk\lb z\rb=\sum_{n\in \mathbb{Z}} \frac{\hwk_n}{z^{n+k}},
\]
considered for $|z|<1$ and $|z|>1$, respectively.
It gives the asymptotics 
\be\label{Omegaz0}
\Omega^{(k)}_{\alpha}(z)=\frac{w_k}{z^k}+\frac{c_{k-1}}{z^{k-1}}+\cdots+\frac{c_1}{z}+O\lb 1\rb\quad \text{as } z\to 0,
\ee
\be\label{Omegazi}
\Omega^{(k)}_{\alpha}(z)=\frac{w'_k}{z^k}+O\lb z^{-k-1}\rb\quad \text{as } z\to \infty,
\ee
where $w_k$ and $w'_k$ are the eigenvalues $w_k=\frac 1 k \sum_\alpha{[\theta_{0}^{(\alpha)}]}_k$
and $w_k'=\frac 1 k \sum_\alpha{[\theta_{\infty}^{(\alpha)}]}_k$ of $\hwk_{0}$ acting on the two vacua\footnote{
Here is a computation of $w_k$. Introduce and use the generating function
\[
\begin{gathered}
\hat W\lb z,w\rb=\sum_{k=1}^{\infty} \frac{\lb z-w\rb^{k-1}}{\lb k-1\rb !} \hwk\lb w\rb=
\sum_{k,\alpha} \frac{\lb z-w\rb^{k-1}}{\lb k-1\rb!} :\partial^{k-1}\psi_\alpha^+\lb w\rb\psi^-_\alpha\lb w\rb:\;=\; :\sum_{\alpha}\psi^+_\alpha\lb z\rb\psi^-_\alpha\lb w\rb:\;=
\\
=\sum_\alpha\psi^+_\alpha\lb z\rb\psi^-_\alpha\lb w\rb-\frac N{z-w}=\sum_\alpha :e^{i\phi_\alpha\lb z\rb}:\,:e^{-i\phi_\alpha\lb w\rb}:-\frac N{z-w},
\\
\langle \hat W\lb z,w\rb :e^{-i\lb \bt,\boldsymbol\phi\lb\infty\rb\rb}:\,:e^{i\lb\bt,
\boldsymbol\phi\lb 0\rb\rb}:\rangle=\sum_\alpha\frac{z^{\theta_\alpha}w^{-\theta_\alpha}-1}{z-w}
=\sum_\alpha\sum_{k=1}^\infty\frac{\lb z-w\rb^{k-1}}{\lb k-1\rb !}w^{-k}\frac{\left[\theta^{(\alpha)}\right]_k}{k}.
\end{gathered}
\]
}. Now from the asymptotics \eqref{Omegaz1}--\eqref{Omegazi} of $\Omega^{(k)}_{\alpha}(z)$ near $0$, $1$ and $\infty$ follows an exact formula
\be
\Omega^{(k)}_{\alpha}\lb z\rb=\frac{w_k}{z^k}+\frac{c_{k-1}}{z^{k-1}}+\ldots+\frac{c_1}{z}+\frac{A_k}{z-1},
\ee
where
\be
c_1=\cdots=c_{k-1}=-A_k, \qquad w_k+A_k=w'_k.
\ee

The equations $w_k+A_k=w'_k$, rewritten explicitly as
\be\label{rel_FR}
\sum_{\alpha=1}^N{[\theta_{\infty}^{(\alpha)}]}_k-
k\bigl[\tfrac{\bt_\infty^2-\bt_0^2-1}{2}\bigr]_{k-1}-\sum_{\alpha=1}^N{[\theta_{0}^{(\alpha)}]}_k=0,
\ee
give restrictions on the possible values of $\bt_\infty$ in terms of $\bt_0$ (fusion rules). 
Indeed, using the identity $\left[x+1\right]_k-\left[x\right]_k=k\left[x\right]_{k-1}$, rewrite (\ref{rel_FR}) as
\be\label{symeqfusion}
\sum_{\alpha=0}^N\left[x_\alpha\right]_k=
\sum_{\alpha=0}^N\left[y_\alpha\right]_k,
\ee
where $k=1,2,\ldots$ and
\be
x_{\alpha}=\theta_{\infty}^{(\alpha)}, \qquad y_{\alpha}=\theta_{0}^{(\alpha)}, \qquad \alpha>0 , \\
x_0=\tfrac12\lb{\bs\theta_\infty^2-\bs\theta_0^2-1}\rb, \qquad y_0=\tfrac12\lb{\bs\theta_\infty^2-\bs\theta_0^2+1}\rb.
\ee
The equations (\ref{symeqfusion}) require the coincidence of all symmetric polynomials in $N+1$ variables on two 
sets: $X=\{x_0,x_1,\ldots,x_N\}$ and $Y=\{y_0,y_1,\ldots,y_N\}$. This is possible only if these two sets coincide. Since
$x_0\ne y_0$, it means that there exist $\alpha',\alpha''>0$ such that 
\be
\theta_{\infty}^{(\alpha'')}=\tfrac12\lb{\bs\theta_\infty^2-\bs\theta_0^2+1}\rb,\qquad 
\theta_{0}^{(\alpha')}=
\tfrac12\lb{\bs\theta_\infty^2-\bs\theta_0^2-1}\rb,
\ee
and the sets formed by all the other $\theta$'s coincide. We immediately deduce from these equations that 
\be\label{fusion_theta_pp}
\theta_{\infty}^{(\alpha'')}=\theta_{0}^{(\alpha')}+1\,.
\ee
Since the variables in the sets $X$ and $Y$ are not independent, one also has to check consistency of the obtained solution; indeed,
\be
\theta_{0}^{(\alpha')}=\tfrac12\bigl((\theta_{0}^{(\alpha')}+1)^2-(\theta_{0}^{(\alpha')})^2-1\bigr).
\ee

Finally, let us recall that the $W(\mathfrak{gl}_N)$-modules generated from $\left|\bt\right\rangle$ and $\left|\bt'\right\rangle$ are isomorphic 
if the components of $\bt'$ are obtained from those of $\bt$ by a permutation. From this point of view, 
the fusion rules (\ref{fusion_theta_pp}) can be rewritten as $N$ possible channels (labeled by $\alpha=1,\ldots,N$) of changing $\bt_0$ to obtain $\bt_\infty$:
\be\label{fusion_theta}
\theta_{\infty}^{(\alpha)}=\theta_{0}^{(\alpha)}+1\,, \qquad \theta_{\infty}^{(\beta)}=\theta_{0}^{(\beta)}\,, \quad \beta\ne \alpha\,.
\ee
Moreover, each of these fusion channels is realized by the fusion with $\psi^+_\alpha\lb z\rb$,  $\alpha=1,\ldots,N$. 
This claim becomes clear in the bosonized picture, where 
\be 
\psi^+_\alpha\lb z\rb=\; :e^{i\phi_\alpha\lb z\rb}:\, ,\qquad \left|\bt_0\right\ra=\; :e^{i\lb \bt_0,\bs\phi\lb0\rb\rb}:\left|\bs 0\right\ra.
\ee

Returning to $W_N=W(\mathfrak{sl}_N)$, we introduce the bosonic field $\phi\lb z\rb=N^{-1} \sum_{\alpha=1}^N \phi_\alpha\lb z\rb$ and correct the fermionic fields 
by $:\exp i\phi\lb z\rb:$ to obtain the fields
\be\label{slferm}
\psi_\alpha\lb z\rb =\; :e^{-i \phi\lb z\rb} \psi^+_\alpha\lb z\rb:\,,\qquad \bar \psi_\alpha\lb z\rb =\; :e^{i \phi\lb z\rb} \psi^-_\alpha\lb z\rb:\,.
\ee
They have the following fusion rules: 
\be\label{frpsi}
\left\la \bs\theta_\infty \right| \psi_\alpha\lb z\rb \left|\bs\theta_0\right\ra \ne 0 \quad \text{if and only if}\quad \bs\theta_\infty = \bs\theta_0 +\bs h_\alpha\,,
\ee\be\label{frpsibar}
\left\la \bs\theta_\infty \right| \bar \psi_\alpha\lb z\rb \left|\bs\theta_0\right\ra \ne 0 \quad \text{if and only if}\quad \bs\theta_\infty = \bs\theta_0 -\bs h_\alpha\,,
\ee
where $h_\alpha$, $\alpha=1,\ldots,N$, are the weights of the first fundamental representations of $\mathfrak{sl}_N$, with components
$h_\alpha^{(s)}=\delta_{\alpha,s}-1/N$.

\end{document}